\documentclass[a4paper,12pt,twoside]{article}

\usepackage{amsfonts}
\usepackage{amsmath}
\usepackage{amsopn}
\usepackage{amsthm}
\usepackage{amssymb}
\usepackage{graphicx}
\usepackage{rotating}
\usepackage[]{natbib}
\usepackage[english]{babel}
\usepackage[utf8]{inputenc}
\usepackage[colorlinks=true,citecolor=blue,linkcolor=blue]{hyperref}
\usepackage{xspace}
\usepackage{setspace}
\usepackage{epstopdf}
\usepackage{dsfont}
\usepackage{bbm}
\usepackage{booktabs}
\usepackage[T1]{fontenc}
\usepackage{anysize}
\usepackage{multirow}
\usepackage[labelfont=bf,labelsep=period]{caption}
\usepackage{floatrow}
\newfloatcommand{capbtabbox}{table}[][\FBwidth]
\usepackage{soul}
\marginsize{1.87cm}{1.87cm}{1.87cm}{1.87cm} 
\usepackage{fancyhdr}
\pagestyle{fancy}
\fancyhf{} 
\fancyhead[LE]{\thepage \hspace{1cm} P. Gagnon, F. Maire and G. Zanella}
\fancyhead[RO]{Improving multiple-try Metropolis with local balancing \hspace{1cm} \thepage}

\RequirePackage{txfonts}
\usepackage{times}

\newtheorem{Proposition}{Proposition}

\newtheorem{Corollary}{Corollary}
\newtheorem{Theorem}{Theorem}
\newtheorem{Lemma}{Lemma}

\renewcommand{\d}{\mathrm{d}}
\DeclareMathOperator{\R}{\mathbb{R}}
\DeclareMathOperator{\N}{\mathbb{N}}
\newcommand{\E}{\mathbb{E}}

\newcommand{\e}{\mathrm{e}}
\renewcommand{\P}{\mathbb{P}}

\newcommand{\I}{\mathbb{I}}
\newcommand{\bX}{\mathbf{X}}
\newcommand{\by}{\mathbf{y}}
\newcommand{\bY}{\mathbf{Y}}
\newcommand{\bx}{\mathbf{x}}
\newcommand{\bz}{\mathbf{z}}
\newcommand{\bZ}{\mathbf{Z}}
\newcommand{\bU}{\mathbf{U}}

\newcommand{\btheta}{\boldsymbol\theta}
\newcommand{\bvarepsilon}{\boldsymbol\varepsilon}
\newcommand{\1}{\mathbbm{1}}
\newcommand{\var}{\hbox{var}}

\newcommand{\Z}{\mathcal{Z}}

\allowdisplaybreaks

\begin{document}

\def\figureautorefname{Figure}
\def\tableautorefname{Table}
\def\algorithmautorefname{Algorithm}
\def\sectionautorefname{Section}
\def\subsectionautorefname{Section}
\def\Propositionautorefname{Proposition}
\def\Theoremautorefname{Theorem}
\def\Lemmaautorefname{Lemma}
\def\Assumptionautorefname{Assumption}
\def\Corollaryautorefname{Corollary}
\renewcommand*\footnoterule{}

\title{Improving multiple-try Metropolis with local balancing}

\author{Philippe Gagnon$^{1}$, Florian Maire$^{1}$ and Giacomo Zanella$^2$}

\maketitle

\thispagestyle{empty}

\noindent $^{1}$Department of Mathematics and Statistics, Universit\'{e} de Montr\'{e}al, Canada.

\noindent $^{2}$Department of Decision Sciences, BIDSA \& IGIER, Bocconi University, Italy.

\begin{abstract}
Multiple-try Metropolis (MTM) is a popular Markov chain Monte Carlo method with the appealing feature of being amenable to parallel computing. At each iteration, it samples several candidates for the next state of the Markov chain and randomly selects one of them based on a weight function. The canonical weight function is proportional to the target density. We show both theoretically and empirically that this weight function induces pathological behaviours in high dimensions, especially during the convergence phase. We propose to instead use weight functions akin to the \textit{locally-balanced} proposal distributions of \cite{zanella2019informed}, thus yielding MTM algorithms that do not exhibit those pathological behaviours. To theoretically analyse these algorithms, we study the high-dimensional performance of \textit{ideal} schemes that can be thought of as MTM algorithms which sample an infinite number of candidates at each iteration, as well as the discrepancy between such schemes and the MTM algorithms which sample a finite number of candidates. Our analysis unveils a strong distinction between the convergence and stationary phases: in the former, local balancing is crucial and effective to achieve fast convergence, while in the latter, the canonical and novel weight functions yield similar performance. Numerical experiments include an application in precision medicine involving a computationally-expensive forward model, which makes the use of parallel computing within MTM iterations beneficial.
\end{abstract}

\noindent Keywords: Bayesian statistics; Markov chain Monte Carlo; parallel computing; random-walk Metropolis; scaling limit; weak convergence.

\section{Introduction}\label{sec:intro}

\subsection{Multiple-try Metropolis}\label{sec:MTM}

In this paper, we study a specific Markov chain Monte Carlo (MCMC) method introduced by \cite{Liu2000} called \textit{Multiple-try Metropolis} (MTM). It can be seen as a generalization of the Metropolis--Hastings (MH, \cite{Metropolis1953} and \cite{Hastings1970}) algorithm: at each iteration, several candidates for the next state of the Markov chain \emph{(instead of one)} are sampled, hence the name of the algorithm. MTM can be used to sample from an \textit{intractable} distribution $\pi$ for Monte Carlo integration purposes, where \textit{intractable} here refers to the impossibility to compute integrals exactly with respect to that distribution. In a sampling context, such a distribution is called the \textit{target} distribution. This distribution often is a posterior distribution of model parameters resulting from a Bayesian model. In the following, we assume for simplicity that $\pi$ admits a strictly positive probability density function (PDF) with respect to Lebesgue measure, implying that the model parameters in a Bayesian context are continuous random variables; to simplify the notation, we will also use $\pi$ to denote the PDF. When the latter is a posterior density, it is proportional to the product of the likelihood function and a prior density. Its normalizing constant is not available, but it is assumed here that the target density can be evaluated pointwise (up to that normalizing constant).

In its simplest and most popular form, an iteration of MTM is as follows\footnote{We follow the formulation of \cite{bedard2012scaling}.}:
\begin{enumerate}

 \item $N$ values $\by_1, \ldots, \by_N \in \R^d$ are sampled independently from a proposal distribution with density $q_\sigma(\bx, \cdot \,)$, where $\bx \in \R^d$ is the current state of the chain and $\sigma > 0$ is a fixed scale parameter;

 \item one of the $\by_i$'s is randomly selected to be the proposal for the next state of the chain, say $\by_j$, with probability proportional to a weight $w(\bx, \by_j)$, where $w$ is a strictly positive weight function;

 \item $N - 1$ values $\bz_1, \ldots, \bz_{N - 1}$ are sampled independently from $q_\sigma(\by_j, \cdot \,)$;
 \item the proposal is accepted, meaning that the next state is set to be $\by_j$, with probability
 \begin{align}\label{eqn:acc_ratio}
  \alpha(\bx, \by_j) := 1 \wedge \frac{\pi(\by_j) \, q_\sigma(\by_j, \bx) \, w(\by_j, \bx) \bigg/ \left(\sum_{i = 1}^{N - 1} w(\by_j, \bz_i) + w(\by_j, \bx)\right)}{\pi(\bx) \, q_\sigma(\bx, \by_j) \, w(\bx, \by_j) \bigg/ \left(\sum_{i = 1}^N w(\bx, \by_i)\right)},
 \end{align}
 where $a \wedge b := \min(a, b)$ and the dependence of $\alpha(\bx, \by_j)$ on $\by_i, i \neq j$, and $\bz_1, \ldots, \bz_{N - 1}$ is made implicit to simplify the notation; when the proposal is rejected the chain remains at $\bx$.
 \end{enumerate}
Step 3 and the specific form of the acceptance probability \eqref{eqn:acc_ratio} are crucial ingredients to make the resulting Markov chains reversible with respect to $\pi$ and thus to ensure that $\pi$ is an invariant distribution and the algorithm is valid (see \cite{Liu2000} for full details). Typically, in Step 1, the sampling is performed through a random-walk scheme, and in particular, using a normal with a mean $\bx$ and a diagonal covariance matrix with diagonal elements given by $\sigma^2$, denoted by $q_\sigma(\bx, \cdot \,) = \mathcal{N}(\mathbf{x}, \sigma^2 \I_d)$, with $\I_d$ being the identity matrix of size $d$ (we also use $q_\sigma(\bx, \cdot \,)$ to denote the distribution to simplify). Note that the terms $q_\sigma(\by_j, \bx)$ and $q_\sigma(\bx, \by_j)$ in \eqref{eqn:acc_ratio} cancel each other when the density is symmetric as with the normal. In Steps 2 and 4, the weight function $w(\bx, \by)$ is typically a function of the ratio $\pi(\by) / \pi(\bx)$, the most popular choice by far being $w(\bx, \by) = \pi(\by) / \pi(\bx) \propto \pi(\by)$. Randomly choosing among the candidates $\by_1, \ldots, \by_N$ based on a function of their target-density evaluations $\pi(\by_1), \ldots, \pi(\by_N)$ makes MTM an \textit{informed} scheme, in the sense of \cite{zanella2019informed}, meaning that it leverages target-distribution information in the proposal mechanism. Such schemes can improve in terms of asymptotic variance of ergodic averages and mixing properties over their uninformed counterparts.

\subsection{Potential of MTM and parallel computing in MCMC}\label{sec:MTM}

MTM is appealing in situations where it is computationally expensive to evaluate the target density (typically because of the likelihood function), and it is not possible to obtain an explicit form of its gradient. In this situation, gradient-based MCMC methods may be either inapplicable or computationally intensive, e.g., if repeated numerical derivative approximations are required. Practitioners facing such a situation may naturally consider using a MH sampler with a random-walk proposal mechanism $q_\sigma$, or its MTM generalization.\footnote{One could also use MTM with an informed gradient-based proposal distribution, but this is less common in practice.} The weights in Steps 2 and 4 in MTM can be computed in parallel, which results in a significant computation-time reduction compared to serial computation when the iteration cost is largely dominated by that of evaluating the target density. In this situation, the iteration cost of MTM is roughly twice that of a MH sampler using the proposal distribution $q_\sigma$.\footnote{When the iteration cost is largely dominated by that of evaluating the target density, we know that the computational cost of the other operations (like sampling from $q_\sigma(\bx, \cdot \,)$) is negligible compared with that of evaluating the target density. This implies that the iteration cost of the MH sampler roughly corresponds to that of evaluating the target density once for evaluating the acceptance probability ($\pi(\bx)$ can have been recorded from the previous iteration), and the iteration cost of MTM roughly corresponds to that of evaluating the target density twice (because the weights in Steps 2 and 4 in MTM can be computed in parallel).}

We refer to the type of parallelization employed in MTM as \textit{in-step} parallelization, given that parallel computing is used within each algorithm iteration. It is to be contrasted with a basic use of parallel computing where one runs $N$ MH algorithms in parallel each using the proposal distribution $q_\sigma$. The advantage of the latter is that it is an embarrassingly parallel workload with essentially no communication cost (see, e.g., \cite{rosenthal2000parallel} and \cite{jacob2020unbiased}). It however does not reduce the burn-in required by each chain to reach stationarity, and thus its computational speed-up compared to running one MH algorithm is fundamentally limited if the chains have a large mixing time. As a result, in-step parallelization approaches that reduce burn-in can lead to significant gain in efficiency in the convergence phase; see, e.g., discussions in \cite{neal2003markov}, \cite{tjelmeland2004using}, \cite{frenkel2004speed}, \cite{calderhead2014general} and \cite{holbrook2022generating}.

In order to reduce the burn-in significantly enough to gain in efficiency with MTM, one has to choose carefully the weight function $w$. As mentioned, the most popular weight function is $w(\bx, \by) = \pi(\by) / \pi(\bx) \propto \pi(\by)$. The intuition behind choosing this weight function is to select $\by_j$ among the candidates $\by_1, \ldots, \by_N$ proportionally to its ``probability'' under the target. Although intuitively sensible, this choice lacks theoretical justification and, in fact, has been observed to yield an MTM algorithm with pathological behaviours. For example, \cite{Martino2017} highlights that MTM with $w(\bx, \by) \propto \pi(\by)$ may have issues of convergence if initialized in the tails of the target density because of acceptance probabilities that are near zero in this area, resulting in Markov chains that get stuck. Also, performance often decreases as $N$ increases, which is counter-intuitive as the algorithm has a larger group of candidates to select the proposal from at each iteration. In this paper, we show that the cause of those issues is precisely the choice of weight function $w(\bx, \by) \propto \pi(\by)$ which makes the resulting MTM \textit{globally balanced}, a qualifier coined by \cite{zanella2019informed} and that will be justified. Note that alternative choices are discussed in, e.g., \cite{Liu2000} and \cite{Pandolfi2010}, but these are similar in spirit to the choice $w(\bx, \by) \propto \pi(\by)$ and are exposed to the same pathologies.

The global objective of this work is to identify effective weight functions based on theoretical arguments and to study the resulting MTM algorithms. The effective weight functions that we identify, such as $w(\bx, \by) \propto \sqrt{\pi(\by)}$, yield MTM algorithms that have a connection with the \textit{locally-balanced} samplers proposed by \cite{zanella2019informed} in the context of discrete state-spaces. Therein, a large class of informed samplers are studied, with proposal distributions resulting from a combination of a random-walk proposal scheme and what is referred to as a \textit{balancing function}. A specific choice of balancing function makes the sampler \textit{locally balanced}. As already noted in \cite{zanella2019informed}, the weight function in MTM and the balancing function play an analogous role. Given the domination of the locally-balanced sampler over the \textit{globally-balanced} one shown in \cite{zanella2019informed}, it is thus natural to consider the use of locally-balanced weight functions in MTM, especially given that the resulting samplers have the same computational cost as their globally-balanced counterpart and are as easy to implement. In this paper, we refer to MTM with the weight function $w(\bx, \by) = \pi(\by) / \pi(\bx)$ as \textit{globally-balanced (GB) MTM} and to MTM with a locally-balanced weight function as \textit{locally-balanced (LB) MTM}. It will be highlighted in the following that LB MTM is closely related to gradient-based MCMC methods such as the \emph{Metropolis-adjusted Langevin algorithm} (MALA, \cite{roberts1996exponential}) and the \textit{Barker proposal scheme} of \cite{livingstone2019robustness}.

\subsection{Organization of the paper}\label{sec:contributions}

We now describe how the rest of the paper is organized. In \autoref{sec:ideal}, we establish a weak convergence of the Markov chains simulated by MTM towards those simulated by an \textit{ideal} sampling scheme, as $N \rightarrow \infty$ with the dimension $d$ fixed. This ideal sampling scheme is the continuous state-space counterpart of the class of informed samplers studied in \cite{zanella2019informed}. In \autoref{sec:convergence}, we study the convergence phase of MTM algorithms through both theoretical and empirical results. The analysis shows that GB MTM indeed often has issues of convergence, while LB MTM does not. In fact, LB MTM provides drastic convergence speed-ups and improved robustness compared to GB MTM. In \autoref{sec:scaling_convergence}, we study MTM properties in stationarity. \autoref{theorem:scaling_limit_ideal} provides high-dimensional scaling-limit results (as $d \rightarrow \infty$) for the ideal scheme with $w(\bx, \by) \propto \pi(\by)$ and that with a LB weight function, namely $w(\bx,\by)\propto \sqrt{\pi(\by)}$.
\autoref{theorem:MTM_app_ideal} then shows that these ideal schemes are approached by MTM under conditions on the rate at which the number of candidates increases with the dimension. Combining Theorems \ref{theorem:scaling_limit_ideal} and \ref{theorem:MTM_app_ideal} leads to an informative but somewhat negative conclusion: the ideal scheme with the LB weight function scales better with dimension, however, the (sufficient) condition under which the corresponding MTM algorithm reaches its full potential by approaching the ideal scheme is that $N$ scales exponentially with the dimension. As a result, in common situations, LB MTM provides only mild performance improvement compared to GB MTM in stationarity. The theoretical results of Sections \ref{sec:convergence} and \ref{sec:scaling_convergence} are derived in simple scenarios and numerical results are provided to illustrate them, including adaptive MTM implementations. To explore whether the scope of those theoretical results extends beyond simple scenarios, we study in \autoref{sec:real_example} the application of GB and LB MTM to a real-world inference problem of immunotherapy in precision medicine where the likelihood is expensive to compute and its gradient is not available in closed form. The empirical results are consistent with the theoretical ones, including observing a drastically reduced burn-in time for LB MTM. We finish the manuscript with a discussion in \autoref{sec:discussion}. The proofs of all theoretical results are deferred to \autoref{sec:proofs}. The code to reproduce all numerical results is available online.\footnote{See ancillary files on \href{https://arxiv.org/abs/2211.11613}{arXiv:2211.11613}.}

While working on our manuscript, it came to our attention that \cite{chang2022rapidly} independently and concurrently propose to use LB weight functions within MTM and study the resulting samplers, but the context and focus are quite different. The context and focus are that of sampling from target distributions defined on discrete state-spaces and more precisely from target distributions resulting from model-selection problems. Our contributions and theirs thus have virtually no overlap. That being said, the conclusions in \cite{chang2022rapidly} and ours are consistent, and in this sense, the two studies are complementary.

\section{Ideal schemes and locally-balanced weight functions}\label{sec:ideal}

We start in \autoref{sec:justification} with the identification of LB weight functions as effective weight functions. The arguments motivating the use of such weight functions rest upon a theoretical result for which a sketch of a proof was presented in \cite{Liu2000}. The result is stated informally in \autoref{sec:justification}, while a formal statement is presented in \autoref{sec:convergence_ideal}. In \autoref{sec:connection_gradient}, a connection between LB MTM and gradient-based methods is highlighted.

\subsection{Locally-balanced weight functions: A motivation}\label{sec:justification}

\cite{Liu2000} presented a sketch of a proof of the convergence of the Markov kernel of MTM as $N \rightarrow \infty$ to that of a MH algorithm using a proposal distribution with a PDF defined as
\begin{align*}
 Q_{w, \sigma}(\bx, \by) := \frac{w(\bx, \by) \, q_{\sigma}(\bx, \by)}{\int w(\bx, \by) \, q_{\sigma}(\bx, \by) \, \d\by},
\end{align*}
 assuming that the integral in the denominator exists and is finite. We will refer to the MH algorithm using the proposal distribution $Q_{w, \sigma}$ as an \textit{ideal} scheme as it cannot in general be implemented and it can be thought of as an MTM algorithm sampling an infinite number of candidates $N$ at each iteration.\footnote{For consistency, we will use the terminology ``ideal scheme'' also for the globally-balanced version even if using a large number of candidates $N$ is not effective in that case.}

Whenever there exists a positive continuous function $g$ such that $w(\bx,\by)$ is of the form $w(\bx,\by)=g(\pi(\by)/\pi(\bx))$, the ideal scheme corresponds to the continuous version of the class of informed samplers studied in \cite{zanella2019informed}. All the weight functions considered in our paper are of this form. Even though the work of \cite{zanella2019informed} is in the context of discrete state-spaces, a part of the analysis conducted therein is applicable to the continuous case as well. In particular, it indicates that the choice $w(\bx, \by) \propto \pi(\by)$ yields a proposal distribution which leaves the target distribution invariant (without the MH correction) in the situation where $\sigma \rightarrow \infty$, the approach being, in that sense, global; in fact, the limiting case corresponds to independent sampling. This justifies the fact that, MTM using the weight function $w(\bx, \by) \propto \pi(\by)$ is coined \textit{globally-balanced} (GB) MTM. Analogously, the ideal scheme using the proposal PDF $Q_{w, \sigma}(\bx, \by) \propto \pi(\by) \, q_{\sigma}(\bx, \by)$ will be referred to as the \textit{globally-balanced} (GB) ideal scheme.

The problem with using such a function $w$ is that, in high dimensions, the scale parameter of typical MCMC algorithms is required to be small to avoid near-zero acceptance probabilities, as indicated by the optimal-scaling theory (see, e.g., \cite{bedard2012scaling} in the specific context of MTM). 
As $\sigma \rightarrow 0$, using $Q_{w, \sigma}(\bx, \by) \propto \pi(\by) \, q_{\sigma}(\bx, \by)$ leaves $\pi^2$ invariant (without the MH correction), instead of $\pi$. In high dimensions, when $\sigma$ is small, there is thus a significant discrepancy between the proposal and target distributions, and this causes the pathological behaviours of GB MTM mentioned previously.

Locally-balanced (LB) proposal distributions aim at correcting this discrepancy. Indeed, by construction, LB proposal distributions leave $\pi$ invariant (without the MH correction) in the limiting situation where $\sigma \rightarrow 0$, which is the regime in agreement with high-dimensional settings. In our context, within ideal schemes, LB proposal distributions are such that $Q_{w, \sigma}(\bx, \by) \propto g(\pi(\by)/\pi(\bx)) \, q_{\sigma}(\bx, \by)$, where the balancing function $g$ is a positive continuous function such that $g(x) / g(1 / x) = x$ for $x > 0$. We thus propose to set the weight function in MTM to $w(\bx, \by) = g(\pi(\by)/\pi(\bx))$ with $g$ satisfying these conditions. Several functions $g$ satisfy these conditions (see, e.g., \cite{zanella2019informed}, \cite{sansone2022lsb} and \cite{vogrinc2022optimal}). In this paper, we focus on two of them which have been thoroughly studied in other contexts \citep{power2019accelerated, gagnon2020asymptotic, gagnon2019RJ, sun2021path, hird2022fresh, liang2022adaptive, livingstone2019robustness, sun2022optimal, sun2022discrete, zhou2022rapid}, namely $g(x) = \sqrt{x}$ and $g(x) = x /(1 + x)$, the latter yielding what is called the Barker proposal distribution in reference to \cite{barker1965monte}'s acceptance-probability choice. The ideal scheme using the proposal PDF $Q_{w, \sigma}(\bx, \by) \propto g(\pi(\by)/\pi(\bx)) \, q_{\sigma}(\bx, \by)$ with $g$ satisfying the conditions above will thus be referred to as the \textit{locally-balanced} (LB) ideal scheme. This justifies the fact that, MTM using the weight function $w(\bx, \by) \propto g(\pi(\by)/\pi(\bx))$ with $g$ satisfying the conditions above is coined \textit{locally-balanced} (LB) MTM.

LB MTM will be seen to not exhibit the pathological behaviours mentioned previously. Also, LB MTM with $g(x) = \sqrt{x}$ will be seen to have an advantage over that with the Barker weight function in terms of convergence speed-ups. This advantage has been observed in other contexts \citep{zhou2022rapid}. Significant convergence speed-ups with $g(x) = \sqrt{x}$ have also been observed for MTM in \cite{chang2022rapidly}. In stationarity, the performance of LB MTM with $g(x) = \sqrt{x}$ is similar to that with the Barker weight function. All that suggests the following practical recommendation to MTM users: \emph{use LB weight functions, and more specifically, $g(x) = \sqrt{x}$.}

\subsection{Convergence towards ideal schemes}\label{sec:convergence_ideal}

To understand why the convergence of MTM towards the ideal scheme might hold, it is useful to have a characterization of the distribution of a proposal sampled using MTM, that we denote by $\bY_J$ with a capital $J$ to represent that the choice among the candidates for the proposal is random. This distribution is conditional on the current state of the Markov chain $\bx$, and we use $\E_{\bx}$ to denote an expectation with respect to the associated PDF that depends on $\bx$. The PDF is based on the product measure $\prod_{i=1}^N q_\sigma(\bx, \by_i) \, \d\by_{1:N}$, where $\mathbf{y}_{1:N} := (\mathbf{y}_1, \ldots, \mathbf{y}_{N})$.  That characterization uses that $\bY_1, \ldots, \bY_N$ are conditionally independent and identically distributed (IID) given $\bx$.

\begin{Proposition}\label{prop:dist_MTM}
 Given a current state $\bx$ and function $h$, a proposal $\mathbf{Y}_{J}$ sampled using MTM is such that
 \begin{align*}
  \E_{\bx}[h(\mathbf{Y}_{J})] &= \int h(\by_{1}) \, \frac{w(\bx, \by_1)}{\frac{1}{N}\sum_{i = 1}^{N} w(\bx, \by_i)} \prod_{i=1}^{N} q_{\sigma}(\bx, \by_i) \, \d\by_{1:N}.
 \end{align*}
\end{Proposition}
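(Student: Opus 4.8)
The plan is to compute $\E_\bx[h(\bY_J)]$ by first conditioning on the candidate vector $\by_{1:N}$, marginalising out the random selection of the proposal, and then exploiting the fact that the candidates are conditionally IID under $q_\sigma(\bx,\cdot\,)$. First I would record the law of the selected index: conditionally on $\by_{1:N}$, the random index $J$ equals $j$ with probability $w(\bx,\by_j)/\sum_{i=1}^N w(\bx,\by_i)$, which is well defined almost surely since $w$ is positive. Hence, by the tower property and linearity of expectation over the finite sum,
\[
 \E_\bx[h(\bY_J)] = \E_\bx\!\left[\sum_{j=1}^N \frac{w(\bx,\by_j)}{\sum_{i=1}^N w(\bx,\by_i)}\, h(\by_j)\right] = \sum_{j=1}^N \E_\bx\!\left[\frac{w(\bx,\by_j)\,h(\by_j)}{\sum_{i=1}^N w(\bx,\by_i)}\right],
\]
the interchange being justified once one notes the mild integrability bound $\E_\bx[|h(\bY_J)|] \le \sum_{i=1}^N \E_\bx[|h(\by_i)|] = N\,\E_\bx[|h(\by_1)|]$, which is assumed finite (as is implicit in the statement involving $h$).

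Next I would invoke exchangeability. Since $\by_1,\ldots,\by_N$ are IID given $\bx$, their joint law is invariant under relabelling; the denominator $\sum_{i=1}^N w(\bx,\by_i)$ is permutation-invariant, and applying the transposition swapping indices $1$ and $j$ maps the $j$-th summand onto the first one. Therefore each of the $N$ terms in the sum over $j$ equals the $j=1$ term, which yields
\[
 \E_\bx[h(\bY_J)] = N\,\E_\bx\!\left[\frac{w(\bx,\by_1)\,h(\by_1)}{\sum_{i=1}^N w(\bx,\by_i)}\right] = \E_\bx\!\left[\frac{w(\bx,\by_1)\,h(\by_1)}{\tfrac1N\sum_{i=1}^N w(\bx,\by_i)}\right].
\]
Writing the right-hand side explicitly as an integral against the product measure $\prod_{i=1}^N q_\sigma(\bx,\by_i)\,\d\by_{1:N}$ gives precisely the claimed identity.

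I do not expect a genuine obstacle here; the argument is short and the only points requiring a line of care are (i) the positivity of $w$, which guarantees both the selection probabilities and the ratio inside the expectation are well defined almost surely, and (ii) the integrability of $h$ against $q_\sigma(\bx,\cdot\,)$ needed to legitimately interchange the finite sum with the expectation and to identify the resulting integral. The substantive content is simply the exchangeability step that collapses the $N$-term sum into a single expectation rescaled by $N$.
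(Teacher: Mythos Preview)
Your proof is correct and follows essentially the same approach as the paper: condition on the candidates to obtain the selection probability of $J$, use the tower property to write $\E_{\bx}[h(\bY_J)]$ as a sum over $j$ of expectations, and then invoke the IID/exchangeability structure of $\bY_1,\ldots,\bY_N$ to collapse the $N$ identical terms into a single one multiplied by $N$. The paper's proof is slightly terser and omits the integrability remark, but the argument is the same.
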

The expectation in \autoref{prop:dist_MTM} is to be compared with that of $h(\bY)$ with $\bY \sim Q_{w, \sigma}(\bx, \cdot \,)$, given a current state $\bx$, that can be written as
 \begin{align}\label{eq:expectation_ideal}
  \E_{\bx}[h(\bY)] &= \int h(\by_{1}) \, \frac{w(\bx, \by_1)}{\int w(\bx, \by_1) \, q_{\sigma}(\bx, \by_1) \, \d\by_1} \prod_{i=1}^{N} q_{\sigma}(\bx, \by_i) \, \d\by_{1:N}.
 \end{align}
It is apparent that the expectation in \autoref{prop:dist_MTM} is an approximation that in \eqref{eq:expectation_ideal} (and that one sampling scheme approximates the other), and that, presumably, the approximation becomes more accurate as $N$ increases.

We now present a formal result about the weak convergence of Markov chains simulated by MTM towards those simulated by the ideal sampling scheme, under some conditions. Let us denote a Markov chain simulated by MTM as $\{\bX_N(m): m \in \N\}$ and that simulated by the corresponding ideal algorithm by $\{\bX_{\text{ideal}}(m): m \in \N\}$. Also, let us denote the Euclidean norm of a vector $\bx$ by $\|\bx\|$.

\begin{Theorem}\label{theorem:weak_con_MTM_d_fixed}
 Assume that $\E[w(\bX, \bY_1)^4] < \infty$ and $\E[w(\bX, \bY_1)^{-4}] < \infty$ with $\bX \sim \pi$ and $\bY_1 \mid \bX \sim q_\sigma(\bX, \cdot \,)$. As $N \rightarrow \infty$,
\begin{enumerate}

   \item given any state $\bx$, the total variation between the distribution of a proposal $\mathbf{Y}_{J}$ sampled using MTM and $Q_{w, \sigma}(\bx, \cdot \,)$ converges to 0 at a rate of $1/\sqrt{N}$;
   \item if additionally
   \begin{description}

    \item[(a)] $\pi$, $Q_{w, \sigma}(\, \cdot \,, \by)$ and $Q_{w, \sigma}(\by, \cdot \,)$ are continuous, for any $\by$,
    \item[(b)] for all $\bx \in \R^d$, there exists an $\varepsilon > 0$ and an integrable function $f(\bx, \cdot \,)$ such that 
        \[
         \sup_{\{\boldsymbol\epsilon \in \R^d: \|\boldsymbol\epsilon\| \leq \varepsilon\}} Q_{w, \sigma}(\bx + \boldsymbol\epsilon, \by) \leq f(\bx, \by),
        \]
          for all $\by \in \R^d$,
   \end{description}
   then $\{\bX_N(m): m \in \N\}$ converges weakly towards $\{\bX_{\text{ideal}}(m): m \in \N\}$ provided that $\bX_N(0) \sim \pi$ and $\bX_{\text{ideal}}(0) \sim \pi$.
\end{enumerate}
\end{Theorem}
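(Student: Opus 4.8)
The plan is to establish the two claims separately, with part~1 providing the key ingredient for part~2. For part~1, I would fix a bounded continuous test function $h$ and compare $\E_{\bx}[h(\mathbf{Y}_J)]$, as given by \autoref{prop:dist_MTM}, with the ideal expectation \eqref{eq:expectation_ideal}. The only difference between the two integrands is that the normalizing term $\frac{1}{N}\sum_{i=1}^N w(\bx,\by_i)$ in the MTM expression is replaced by its mean $\int w(\bx,\by_1)\,q_\sigma(\bx,\by_1)\,\d\by_1$ in the ideal one. By the strong law of large numbers, $\frac{1}{N}\sum_{i=1}^N w(\bx,\by_i) \to \int w(\bx,\by_1)\,q_\sigma(\bx,\by_1)\,\d\by_1$ almost surely (the moment assumption $\E[w(\bX,\bY_1)^4]<\infty$ with $\bX\sim\pi$, together with the strict positivity of $\pi$ and continuity, transfers to finiteness of the relevant first moment under $q_\sigma(\bx,\cdot)$ for fixed $\bx$). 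I would then write the difference of the two integrands as
\begin{align*}
 h(\by_1)\,w(\bx,\by_1)\left(\frac{1}{\frac{1}{N}\sum_{i=1}^N w(\bx,\by_i)} - \frac{1}{\int w(\bx,\by_1)\,q_\sigma(\bx,\by_1)\,\d\by_1}\right)
\end{align*}
and argue convergence to zero of its integral against $\prod_i q_\sigma(\bx,\by_i)$. The main care here is integrability: the reciprocal $1/(\frac{1}{N}\sum_i w(\bx,\by_i))$ could be large when the $w(\bx,\by_i)$ are all small, so I would invoke the assumption $\E[w(\bX,\bY_1)^{-4}]<\infty$ to control it — e.g.\ via a uniform-integrability argument, splitting on the event that $\frac{1}{N}\sum_i w(\bx,\by_i)$ is bounded below by half its limit (whose complement has probability decaying fast by a large-deviation/Chebyshev bound using the negative moments) and bounding $h$ by its sup-norm on the good event. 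Dominated (or Vitali) convergence then yields $\E_{\bx}[h(\mathbf{Y}_J)] \to \E_{\bx}[h(\bY)]$, which is weak convergence of the proposal laws since $h$ was an arbitrary bounded continuous function.

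For part~2, the strategy is to upgrade pointwise (in the current state) convergence of the one-step proposal kernels to weak convergence of the full Markov chains. I would first pass from convergence of the proposal distribution to convergence of the MTM transition kernel $P_N(\bx,\cdot)$ toward the ideal MH transition kernel $P_{\text{ideal}}(\bx,\cdot)$: the MH kernel is built from the proposal together with the acceptance probability \eqref{eqn:acc_ratio}, and one must check that the MTM acceptance ratio — which involves the random sums $\sum_i w(\bx,\by_i)$ and $\sum_i w(\by_j,\bz_i)+w(\by_j,\bx)$ — converges to the ideal MH ratio $1\wedge \frac{\pi(\by)\,q_\sigma(\by,\bx)\,Z_\sigma(\bx)}{\pi(\bx)\,q_\sigma(\bx,\by)\,Z_\sigma(\by)}$, where $Z_\sigma(\bx):=\int w(\bx,\by)\,q_\sigma(\bx,\by)\,\d\by$, again by the law of large numbers applied to both normalizing sums, together with the continuity assumption~(a). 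Assumption~(b), a local domination of the ideal proposal density, is what makes $\bx\mapsto P_{\text{ideal}}(\bx,\cdot)$ weakly (Feller-)continuous, which is needed so that the limiting chain is well-behaved and so that the convergence of kernels is compatible with continuity in the starting point. With $\bX_N(0)\sim\pi$ and $\bX_{\text{ideal}}(0)\sim\pi$ identically distributed, I would then iterate: conditionally on $\bX_N(m)$, the next-step law converges; combining Feller continuity of $P_{\text{ideal}}$ with an induction on $m$ (using that composition of weakly convergent kernels with a weakly convergent sequence of measures is weakly convergent, provided the limit kernel is Feller), one obtains convergence of all finite-dimensional distributions $(\bX_N(0),\ldots,\bX_N(m))$ to $(\bX_{\text{ideal}}(0),\ldots,\bX_{\text{ideal}}(m))$, which is exactly weak convergence of the chains on the path space.

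The step I expect to be the main obstacle is the uniform-integrability/domination argument controlling the reciprocal of the empirical average $\frac{1}{N}\sum_i w(\bx,\by_i)$ — both in part~1 for the proposal law and, more delicately, in the acceptance-ratio convergence in part~2, where the ratio contains such a reciprocal in \emph{both} numerator and denominator and where the second batch $\bz_1,\ldots,\bz_{N-1}$ is drawn from $q_\sigma(\by_j,\cdot)$ with $\by_j$ itself random. Making the large-deviation control uniform enough to justify interchanging limits there, while only assuming fourth moments (rather than exponential moments), is the technical heart of the proof; this is precisely why the hypotheses $\E[w(\bX,\bY_1)^4]<\infty$ and $\E[w(\bX,\bY_1)^{-4}]<\infty$ are imposed. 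Everything else — the algebraic identity for the difference of integrands, the reduction from kernels to chains, and the use of assumptions~(a)--(b) for Feller continuity — is comparatively routine once this integrability is in hand.
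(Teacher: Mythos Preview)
Your proposal is correct and follows essentially the same route as the paper: SLLN for the normalizing averages, uniform integrability to pass to the limit in the expectations, then convergence of kernels in $L^1(\pi)$ plus Feller continuity of $P_{\text{ideal}}$ (from assumptions (a)--(b)) to iterate to finite-dimensional distributions. One technical difference worth noting: where you sketch an event-splitting/Chebyshev argument to control $\bigl(\tfrac{1}{N}\sum_i w(\bx,\bY_i)\bigr)^{-1}$, the paper instead uses the convex-ordering inequality $\E\bigl[\phi(\tfrac{1}{N}\sum_i X_i)\bigr]\le \E[\phi(X_1)]$ for exchangeable $X_i$ and convex $\phi$ (applied with $\phi(x)=x^{-4}$), which yields $\sup_N \E_{\bx}\bigl[(\tfrac{1}{N}\sum_i w(\bx,\bY_i))^{-4}\bigr]\le \E_{\bx}[w(\bx,\bY_1)^{-4}]$ directly; combined with Cauchy--Schwarz this gives a uniform second-moment bound and hence uniform integrability without any splitting. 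This is cleaner and is exactly why the fourth-moment hypotheses are phrased as they are.
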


This result indicates that MTM can be seen as an approximation to the ideal MH scheme using $Q_{w, \sigma}$ as a proposal distribution. The latter will thus be considered instead of the former for theoretical analyses in the next sections. The advantage of doing so is that the ideal scheme samples only one candidate at each iteration and is thus easier to analyse. A refined version of \autoref{theorem:weak_con_MTM_d_fixed} will be provided in \autoref{sec:convergence_MTM}; it allows to quantitatively evaluate the discrepancy between the chains simulated by MTM and the ideal scheme.

We finish this section with a result indicating that Assumption (b) in part 2 of \autoref{theorem:weak_con_MTM_d_fixed} is verified in great generality.

\begin{Proposition}\label{prop:ass_b}
 Assume that $\pi$ is upper bounded, $q_\sigma(\bx, \cdot \,) = \mathcal{N}(\mathbf{x}, \sigma^2 \I_d)$, and $w(\bx, \by) = \pi(\by) / \pi(\bx)$ or $w(\bx, \by) = \sqrt{\pi(\by) / \pi(\bx)}$. Then, Assumption (b) in part 2 of \autoref{theorem:weak_con_MTM_d_fixed} is satisfied.
\end{Proposition}

\subsection{Locally-balanced MTM as a gradient-free alternative}\label{sec:connection_gradient}

An indirect connection can be established between MTM using $g(x) = \sqrt{x}$ and MALA, and between MTM using $g(x) = x /(1 + x)$ and the MH sampler based on the Barker proposal scheme of \cite{livingstone2019robustness}. Recall that both MALA and the Barker scheme are gradient-based MCMC algorithms which require that $\pi$ is differentiable and that $\nabla \log\pi$ can be evaluated pointwise. Interestingly, the MALA proposal can be viewed as an approximation to the ideal scheme using $g(x) = \sqrt{x}$, whereas the Barker proposal can be viewed as an approximation to that using $g(x) = x /(1 + x)$. Indeed, they both result from an approximation of $Q_{w, \sigma}(\bx, \cdot \,)$ based on a first-order Taylor series expansion of $\log \pi$:
$$
Q_{w, \sigma}(\bx, \by \,) \propto g(\e^{\log\pi(\by)-\log\pi(\bx)}) \, q_\sigma(\bx, \by) \approx g(\e^{(\nabla \log \pi(\bx))^T (\by - \bx)}) \, q_\sigma(\bx, \by). 
$$
When $g(x) = \sqrt{x}$ and $q_\sigma(\bx, \cdot \,) = \mathcal{N}(\mathbf{x}, \sigma^2 \I_d)$, normalizing the last expression gives exactly the MALA proposal, while, when $g(x) = x /(1 + x)$, it gives the Barker scheme (see, respectively, Section 5 of \citet{zanella2019informed} and Section 3 of \citealp{livingstone2019robustness}).
Thus, we can see MALA and the Barker proposal scheme as gradient-based approximations of the same ideal schemes as those approximated by LB MTM samplers with $g(x) = \sqrt{x}$ and $g(x) = x /(1 + x)$. The approximations used in MTM are different in several aspects. First, they are stochastic, by opposition to deterministic as in the gradient-based methods. Second, one has control over the approximations (through $N$).  Last but not least, the algorithms do not require to evaluate the gradient of $\log \pi$, which is advantageous when evaluating the gradient is either infeasible or computationally intensive.

GB MTM approximates an ideal scheme which does not correspond to any known first-order method. A rich body of literature has shown that MALA and the MH sampler with the Barker scheme behave quantitatively and qualitatively differently (in terms of ergodicity measure or scaling-limit regime) to most zero-th order methods such as random-walk Metropolis (see, e.g., \cite{roberts1998optimal}, \cite{bou2013nonasymptotic}, \cite{dwivedi2018log} and \cite{livingstone2019robustness}). It is thus important to study whether LB MTM inherits some of those favourable properties, which cannot be expected by GB MTM.

\section{Performance during the convergence phase}\label{sec:convergence}

In this section, we evaluate the performance during the convergence phase\footnote{By convergence phase, often called \emph{burn-in} in the MCMC literature, we mean the iterations until the Markov chain simulated by MTM is close in distribution to $\pi$.} of MTM with the different weight functions.
 We do this by analysing the acceptance probabilities in the tails in \autoref{sec:tails} and by empirically measuring the convergence time of adaptively tuned MTM in \autoref{sec:conv_numerics}. The adaptive tuning aims to represent how one would use and tune MTM in practice. As mentioned previously, the acceptance probabilities are near-zero in the tails with GB MTM, unless the step size $\sigma$ is made extremely small, which causes convergence issues in either case. Our analysis shows that these convergence issues do not arise with LB MTM. Also, numerical results show that MTM with the Barker weight function has higher acceptance probabilities than that with $g(x) = \sqrt{x}$. This advantage has been observed in other contexts \citep{zanella2019informed, livingstone2019robustness}, and is attributed to the boundedness of the function $g(x) = x /(1 + x)$. In the MTM context, it yields more stability in the approximation of the ideal scheme. Even though the unboundedness of $g(x) = \sqrt{x}$ yields less stability, it leads to more persistent movement from the tails to the high-probability region, which is shown in \autoref{sec:conv_numerics} to provide better convergence performance.

\subsection{Acceptance probabilities in the tails}\label{sec:tails}

In this section, we analyse the behaviour of MTM when initialized in the tails by evaluating the conditional expected acceptance probability, given an initial state $\bx$ with $\|\bx\|$ large, where the expectation is with respect to the random variables involved in the proposal mechanism. A low conditional expected acceptance probability implies that it is likely that the chain gets stuck and that an issue arises in terms of convergence to the target distribution as the algorithm progresses. The analysis rests upon a theoretical result about the ideal scheme. The result is established under a specific and simple scenario: the target density factorizes, and more precisely,
\begin{align}\label{eqn:target_product}
 \pi(\bx) = \prod_{i=1}^d \varphi(x_i), \quad \bx := (x_1, \ldots, x_d)^T \in \R^d,
\end{align}
and $q_\sigma(\bx, \cdot \,) = \mathcal{N}(\mathbf{x}, \sigma^2 \I_d)$, where $\varphi$ is the PDF of a standard normal distribution. The normal assumption and the factorization allow to make precise calculations and in particular to establish that $Q_{w, \sigma}$ is a normal distribution when the weight function factorizes as well, that is when using for instance $w(\bx, \by) = \pi(\by)/\pi(\bx)$ or $w(\bx, \by) = \sqrt{\pi(\by)/\pi(\bx)}$.

We acknowledge that the scenario limits the scope of the analysis. Note that, in \autoref{sec:proofs_upper_bound_acc}, we provide a result which is less precise, but which holds under weaker assumptions. With the result provided in \autoref{sec:proofs_upper_bound_acc}, we cannot conduct an analysis as thorough as that performed below. The assumptions are essentially that $U := - \log \pi$ is strongly convex and $L$-smooth, instead of assuming that the target density factorizes into a product of normal densities. The factorization assumption has a long history in analysis of MCMC, especially in the scaling-limit literature where it is a standard assumption (see, e.g., \cite{roberts1997weak}, \cite{roberts1998optimal}, \cite{bedard2007weak}, \cite{bedard2012scaling}, \cite{durmus2017opimal} and \cite{GAGNON201932}). The factorization is an important structural limitation which implies independence of the random variables. The normal assumption can be justified in Bayesian large-sample regimes where the models are regular enough\footnote{Models that are regular enough are those which satisfy regularity conditions; see \cite{schmon2021optimal} for more details.} \citep{schmon2021optimal}, but it is an important limitation as well. We thus expect the results to be informative at least when MTM is used to sample from a posterior distribution resulting from a large data set ($n \gg d$) and a regular model, provided that the model parameters are \textit{a posteriori} weakly dependent. The same scenario will be considered for the scaling-limit analysis in \autoref{sec:scaling_convergence}.

We now present the result in which we use the notation $\alpha_{\text{ideal}}$ for the acceptance probability in the ideal MH scheme.

\begin{Proposition}\label{prop:upper_bound_acc}
 Consider a current state $\bx$ and that $\bY \sim Q_{w, \sigma}(\bx, \cdot \,)$ with $q_\sigma(\bx, \cdot \,) = \mathcal{N}(\mathbf{x}, $ $ \sigma^2 \I_d)$ and $w(\bx, \by) = \pi(\by)/\pi(\bx)$. If the target distribution is defined as in \eqref{eqn:target_product},
\begin{align}\label{eqn:bound_acc_prob_glo}
 \E_{\bx}[\alpha_{\text{ideal}}(\bx, \bY)] \leq \exp\left(-\|\mathbf{x}\|^2 \frac{\sigma^2}{2((1+\sigma^2)^2 - \sigma^2)}\right) \left(1 - \frac{\sigma^2}{(1 + \sigma^2)^2}\right)^{-d/2}.
\end{align}
In particular, for any $\sigma$ and $d$, it holds that $\lim_{\|\bx\| \rightarrow \infty}  \E_{\bx}[\alpha_{\text{ideal}}(\bx, \bY)] = 0$.
\end{Proposition}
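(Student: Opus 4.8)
The plan is to leverage that, in the scenario of \eqref{eqn:target_product} with a Gaussian random-walk proposal and $w(\bx,\by)=\pi(\by)/\pi(\bx)$, the ideal proposal distribution is Gaussian, so that every integral reduces to a product of one-dimensional Gaussian integrals. First I would record that $Q_{w,\sigma}(\bx,\by)\propto\pi(\by)\,q_\sigma(\bx,\by)$ is, as a function of $\by$, a product of Gaussian densities, hence $Q_{w,\sigma}(\bx,\cdot)=\mathcal{N}\!\left(\bx/(1+\sigma^2),\,\tfrac{\sigma^2}{1+\sigma^2}\I_d\right)$; moreover, by the convolution identity for Gaussians, its normalizing constant $\int\pi(\bz)\,q_\sigma(\bx,\bz)\,\d\bz$ equals the $\mathcal{N}(\mathbf{0},(1+\sigma^2)\I_d)$ density evaluated at $\bx$, which I denote $\phi_{1+\sigma^2}(\bx)$. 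Thus $Q_{w,\sigma}(\bx,\by)=\pi(\by)\,q_\sigma(\bx,\by)/\phi_{1+\sigma^2}(\bx)$, and swapping the roles of $\bx$ and $\by$ gives $Q_{w,\sigma}(\by,\bx)=\pi(\bx)\,q_\sigma(\by,\bx)/\phi_{1+\sigma^2}(\by)$.

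Next I would bound the acceptance probability by the Metropolis--Hastings ratio itself, $\alpha_{\text{ideal}}(\bx,\by)=1\wedge r(\bx,\by)\le r(\bx,\by)$ with $r(\bx,\by)=\pi(\by)\,Q_{w,\sigma}(\by,\bx)/\big(\pi(\bx)\,Q_{w,\sigma}(\bx,\by)\big)$. Taking the expectation over $\bY\sim Q_{w,\sigma}(\bx,\cdot)$, the factor $Q_{w,\sigma}(\bx,\by)$ in the denominator of $r$ cancels the proposal density, leaving
\[
\E_{\bx}[\alpha_{\text{ideal}}(\bx,\bY)]\;\le\;\frac{1}{\pi(\bx)}\int\pi(\by)\,Q_{w,\sigma}(\by,\bx)\,\d\by .
\]
Substituting the expression for $Q_{w,\sigma}(\by,\bx)$ from the first paragraph and using the symmetry $q_\sigma(\by,\bx)=q_\sigma(\bx,\by)$, the $\pi(\bx)$ factors cancel; since $\pi(\by)/\phi_{1+\sigma^2}(\by)=(1+\sigma^2)^{d/2}\exp\!\big(-\tfrac{\sigma^2\|\by\|^2}{2(1+\sigma^2)}\big)$, the right-hand side becomes $(1+\sigma^2)^{d/2}\int q_\sigma(\bx,\by)\,\exp\!\big(-\tfrac{\sigma^2\|\by\|^2}{2(1+\sigma^2)}\big)\,\d\by$.

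It remains to evaluate this last integral, which factorizes over coordinates into $d$ copies of $\int(2\pi\sigma^2)^{-1/2}\exp\!\big(-\tfrac{(y-x_i)^2}{2\sigma^2}-\tfrac{\sigma^2 y^2}{2(1+\sigma^2)}\big)\,\d y$. Completing the square in $y$---the coefficient of $y^2$ being $\gamma:=\sigma^{-2}+\tfrac{\sigma^2}{1+\sigma^2}=\tfrac{(1+\sigma^2)^2-\sigma^2}{\sigma^2(1+\sigma^2)}>0$, which guarantees convergence---gives the value $(\sigma^2\gamma)^{-1/2}\exp\!\big(-\tfrac{x_i^2}{2}\cdot\tfrac{\sigma^2}{(1+\sigma^2)^2-\sigma^2}\big)$ for each coordinate. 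Multiplying the $d$ factors and incorporating the outer factor $(1+\sigma^2)^{d/2}$, one uses the identities $\tfrac{1+\sigma^2}{\sigma^2\gamma}=\tfrac{(1+\sigma^2)^2}{(1+\sigma^2)^2-\sigma^2}=\big(1-\tfrac{\sigma^2}{(1+\sigma^2)^2}\big)^{-1}$ and $\tfrac{\sigma^2}{(1+\sigma^2)^2-\sigma^2}=\tfrac{\sigma^2+\sigma^4}{(1+\sigma^2)((1+\sigma^2)^2-\sigma^2)}$ to obtain exactly \eqref{eqn:bound_acc_prob_glo}. Because $(1+\sigma^2)^2-\sigma^2>0$ for every $\sigma>0$, the exponent is strictly negative in $\|\bx\|^2$ while the prefactor is independent of $\bx$, so the bound vanishes as $\|\bx\|\to\infty$ with $\sigma$ and $d$ fixed. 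I do not expect any real obstacle here: the only care needed is the bookkeeping of the several Gaussian normalizing constants and the reduction of the superficially different expressions for the decay rate to the single form $\sigma^2/((1+\sigma^2)^2-\sigma^2)$.
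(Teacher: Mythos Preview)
Your proof is correct and follows essentially the same approach as the paper: bound $1\wedge r\le r$ and compute the resulting Gaussian expectation in closed form. The only difference is bookkeeping---the paper first derives the explicit acceptance-ratio formula (its \autoref{lemma:acc_prob}), substitutes $Y_i=\tfrac{x_i}{1+\sigma^2}+\sqrt{\tfrac{\sigma^2}{1+\sigma^2}}\,U_i$ (its \autoref{lemma:dist_Y_ideal}), and invokes the non-central chi-squared moment generating function, whereas you exploit the cancellation $\int r(\bx,\by)\,Q_{w,\sigma}(\bx,\by)\,\d\by=\tfrac{1}{\pi(\bx)}\int\pi(\by)\,Q_{w,\sigma}(\by,\bx)\,\d\by$ and complete the square directly; both routes yield exactly the same closed-form bound.
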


\autoref{prop:upper_bound_acc} highlights a pathological behaviour as most MCMC methods do not have acceptance probabilities that are near zero when the current state is in the tails of the target density. We obtained the corresponding upper bound for the ideal scheme with $w(\bx, \by) = \sqrt{\pi(\by)/\pi(\bx)}$ and it does not converge to 0. Of course, this does not guarantee mathematically that the conditional expected acceptance probability of that scheme is not near zero in the tails, but it indicates a significant difference. We also tried deriving other upper bounds to see if they yield different results, but we did not obtain any that allows to conclude otherwise. We provide below numerical results for both the ideal scheme and MTM using $w(\bx, \by) = \sqrt{\pi(\by)/\pi(\bx)}$ which corroborate those findings and suggest that the acceptance probabilities do not converge to 0 as the current/initial state gets further and further in the tails.

The result provided in \autoref{sec:proofs_upper_bound_acc} essentially states that $\lim_{\|\bx\| \rightarrow \infty}  \E_{\bx}[\alpha_{\text{ideal}}(\bx, \bY)] = 0$, for any $\sigma$ and $d$, when $U := - \log \pi$ is strongly convex and $L$-smooth. While being interesting, it does not provide an explicit upper bound on the conditional expected acceptance probability as in \eqref{eqn:bound_acc_prob_glo} and does not allow for a precise characterization in high-dimensional regimes where $d \rightarrow \infty$. Given that we are interested in such regimes, we study the implications of \eqref{eqn:bound_acc_prob_glo} when $d \rightarrow \infty$, with $\bx$ and $\sigma$ functions of $d$. Such a study allows to characterize the relation between $d$ and the location of $\bx$ in the tails in the situations where there are convergence issues with the GB ideal scheme and MTM. We highlight a dependence on $d$ of $\bx, \bY, \pi$ and $\sigma$ by denoting these for the rest of the section by $\bx_d, \bY_d, \pi_d$ and $\sigma_d$. For the analysis, we consider that $\sigma_d^2 = \ell^2 / d$ and $\|\bx_d\| = d^\kappa$ with $\kappa$ a positive constant; setting $\sigma_d^2 = \ell^2 / d$ will be seen to be an effective way of scaling $\sigma$ with $d$. With these choices, the conclusion is the following: \emph{the conditional expected acceptance probability of the GB ideal scheme \eqref{eqn:bound_acc_prob_glo} converges to 0 when $\kappa > 1/2$, implying that it is sufficient for $\|\bx_d\|$ to grow with $d$ at any rate faster than $\sqrt{d}$ to lead to near-zero acceptance probabilities.} We highlight that, with a target distribution such as that defined in \eqref{eqn:target_product}, $\|\bx_d\| = d^\kappa$ with $\kappa$ around $0.5$ is not even far in the tails of the density as $\|\bX_d\|^2$ has a chi-squared distribution with a mean of $d$ and a standard-deviation of $\sqrt{2 d}$. This implies that even a random initialization of GB MTM using a distribution slightly different from the target may lead to issues of convergence to the target distribution as the algorithm progresses.

We present in Figures \ref{fig:exp_acc_prob} and \ref{fig:exp_acc_prob_d_50} numerical results which complete the analysis. In both figures, we provide conditional expected acceptance probabilities as a function of $\kappa$ in the situation where the target distribution is defined as in \eqref{eqn:target_product}, $q_{\sigma_d}(\bx_d, \cdot \,) = \mathcal{N}(\mathbf{x}_d, \sigma_d^2 \I_d)$ with $\ell = 2.38$ and the current/initial state $\bx_d$ is set to $\bx_d = (d^{\kappa - 1/2}, \ldots, d^{\kappa - 1/2})$, ensuring that $\|\bx_d\| = d^\kappa$. The value $\ell = 2.38$ will be seen in \autoref{sec:scaling-limits} to be optimal for the GB ideal scheme in a high-dimensional regime. The expectations are approximated using independent Monte Carlo sampling. The approximations are based on samples of size 1,000,000.

The difference between \autoref{fig:exp_acc_prob} and \autoref{fig:exp_acc_prob_d_50} is that in the former the results are for the ideal schemes, whereas in the latter they are for MTM. The results in \autoref{fig:exp_acc_prob_d_50} are for $d = 50$; we observed similar results when $d = 200$. The results for GB samplers are consistent with the theoretical result about the convergence to 0 of the expectation in \eqref{eqn:bound_acc_prob_glo} when $\kappa > 1/2$, with conditional expected acceptance probabilities close to 1 for $\kappa$ smaller than $0.5$ (for moderate to high dimensions, and moderate values of $N$ for MTM), followed by a sharp drop around $\kappa = 0.5$. In \autoref{fig:exp_acc_prob} (a), we notice that the conditional expected acceptance probability converges to 0 even when $d = 5$ (thus in the case where the high-dimensional regime is not attained); this is because $\ell$ is not large enough to yield an algorithm that performs approximately IID sampling (recall the discussion towards the end of \autoref{sec:ideal}), suggesting that the conclusion of \autoref{prop:upper_bound_acc} holds.

The results in \autoref{fig:exp_acc_prob} (b) and \autoref{fig:exp_acc_prob_d_50} (b)-(c) suggest that LB schemes do not have issues of convergence to the target distribution when initialized in the tails. They also suggest that using the Barker weight function in MTM leads to higher acceptance probabilities. As mentioned, we attribute the difference to the fact that, with MTM, the normalizing constant of $Q_{w, \sigma}(\bx, \cdot \,)$ needs to be approximated and the boundedness of the function $g(x) = x /(1 + x)$ yields more stability in the approximation.

 \begin{figure}[ht]
  \centering
  \footnotesize
  $\begin{array}{cc}
 \vspace{-2mm}\hspace{-2mm}\includegraphics[width=0.50\textwidth]{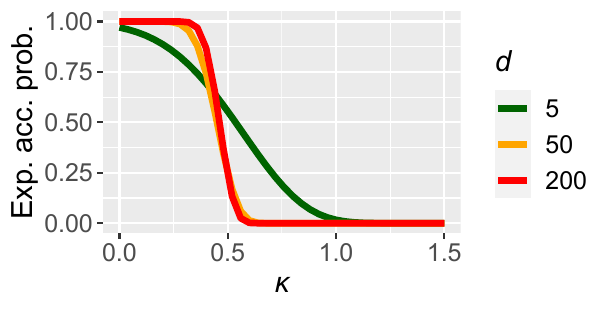} &  \hspace{-5mm} \includegraphics[width=0.50\textwidth]{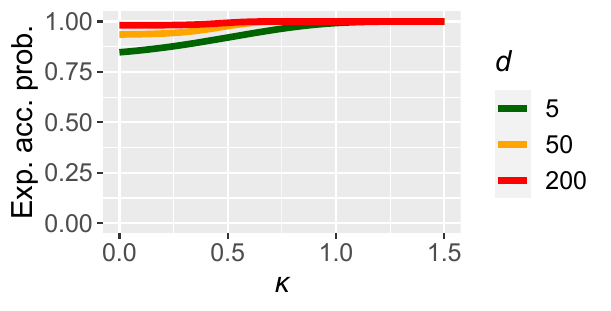}  \cr
   \hspace{-6mm} \textbf{(a) Ideal scheme w. $w(\mathbf{x}_d, \mathbf{y}_d) = \pi_d(\mathbf{y}_d) / \pi_d(\mathbf{x}_d)$} & \hspace{-6mm} \textbf{(b) Ideal scheme w. $w(\mathbf{x}_d, \mathbf{y}_d) = \sqrt{\pi_d(\mathbf{y}_d) / \pi_d(\mathbf{x}_d)}$} \cr
  \end{array}$\vspace{-2mm}
  \caption{\small Conditional expected acceptance probability as a function of $\kappa$ when $\mathbf{x}_d = (d^{\kappa - 1/2}, \ldots, d^{\kappa - 1/2})$ and $\ell = 2.38$, for several values of $d$ and: (a) the ideal scheme with $w(\mathbf{x}_d, \mathbf{y}_d) = \pi_d(\mathbf{y}_d) / \pi_d(\mathbf{x}_d)$, and (b) the ideal scheme with $w(\mathbf{x}_d, \mathbf{y}_d) = \sqrt{\pi_d(\mathbf{y}_d) / \pi_d(\mathbf{x}_d)}$.}\label{fig:exp_acc_prob}
 \end{figure}
\normalsize

  \begin{figure}[ht]
  \centering\scriptsize
  $\begin{array}{ccc}
 \vspace{-2mm}\hspace{-2mm}\includegraphics[width=0.34\textwidth]{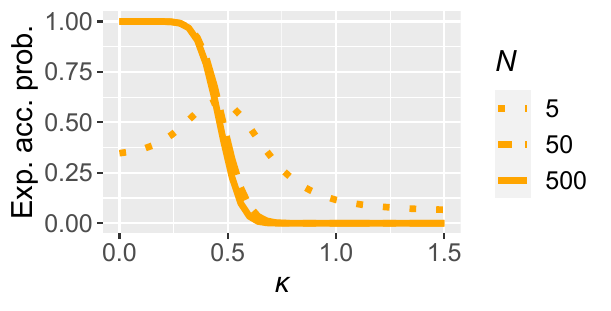} &  \hspace{-5mm} \includegraphics[width=0.34\textwidth]{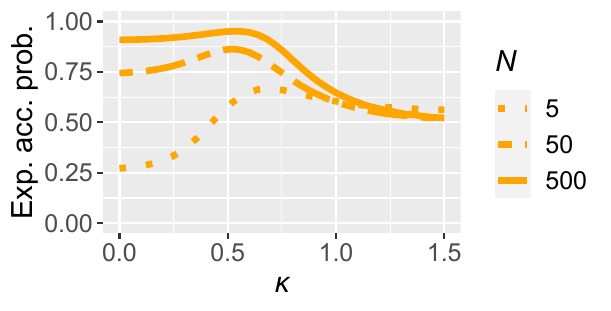} &  \hspace{-5mm} \includegraphics[width=0.34\textwidth]{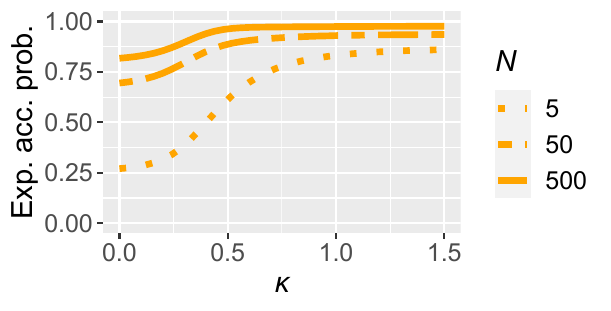} \cr
   \hspace{-5mm} \textbf{(a) MTM w. $w(\mathbf{x}_d, \mathbf{y}_d) = \frac{\pi_d(\mathbf{y}_d)}{\pi_d(\mathbf{x}_d)}$} & \hspace{-7mm} \textbf{(b) MTM w. $w(\mathbf{x}_d, \mathbf{y}_d) = \sqrt{\frac{\pi_d(\mathbf{y}_d)}{\pi_d(\mathbf{x}_d)}}$} & \hspace{-5mm} \textbf{(c) MTM w. $w(\mathbf{x}_d, \mathbf{y}_d) = \frac{\frac{\pi_d(\mathbf{y}_d)}{ \pi_d(\mathbf{x}_d)}}{1 + \frac{\pi_d(\mathbf{y}_d)}{\pi_d(\mathbf{x}_d)}}$} \cr
  \end{array}$\vspace{-2mm}
  \caption{\small Conditional expected acceptance probability as a function of $\kappa$ when $\mathbf{x}_d = (d^{\kappa - 1/2}, \ldots, d^{\kappa - 1/2})$, $\ell = 2.38$ and $d = 50$, for several values of $N$ and: (a) MTM with $w(\mathbf{x}_d, \mathbf{y}_d) = \pi_d(\mathbf{y}_d) / \pi_d(\mathbf{x}_d)$, (b) MTM with $w(\mathbf{x}_d, \mathbf{y}_d) = \sqrt{\pi_d(\mathbf{y}_d) / \pi_d(\mathbf{x}_d)}$, and (c) MTM with $w(\mathbf{x}_d, \mathbf{y}_d) = (\pi_d(\mathbf{y}_d) / \pi_d(\mathbf{x}_d))/(1+ \pi_d(\mathbf{y}_d) / \pi_d(\mathbf{x}_d))$.}\label{fig:exp_acc_prob_d_50}
 \end{figure}
\normalsize

\subsection{Convergence to stationarity: simulations with adaptive MCMC}\label{sec:conv_numerics}

In this section, we perform numerical simulations with adaptive MTM schemes, where the scale parameter $\sigma$ is tuned on the fly while the MTM algorithm progresses. Such simulations allow to: (i) reduce the sensitivity of the simulation set-up to the choice of a specific value for $\sigma$; (ii) assess the impact of the choice of weight function in a more advanced and realistic MTM implementation (arguably closer to one that a careful practitioner would use).

The study in this section is non-asymptotic; the mathematical objects like the target distribution, the scale parameter and the states are thus denoted without a subscript $d$, that is $\pi$, $\sigma$ and $\bx$. Algorithm 4 found in Section 5 of \cite{andrieu2008tutorial} is used to adaptively tune $\sigma$. The algorithm targets an acceptance rate to adapt tuning parameters. The targeted acceptance rates are $25\%$ and $50\%$ for GB and LB MTM, respectively. These targets are chosen according to theoretical and empirical results presented in the next sections. Algorithm 4 of \cite{andrieu2008tutorial} also uses a learning rate $\gamma(m)$, which here is set to $m^{-0.6}$, $m$ representing the iteration index. A power of $-0.6$ allows to reach a good balance between fast adaptation and stability in this example. We experimented with different power values and obtained similar conclusions.

The results are presented in Figures \ref{fig:Traceplots} and \ref{fig:Boxplots}. \autoref{fig:Traceplots} displays trace plots for GB MTM and LB MTM with $q_{\sigma}(\bx, \cdot \,) = \mathcal{N}(\mathbf{x}, \sigma^2 \I_d)$ and target distribution as in \eqref{eqn:target_product} with $d = 50$. Trace plots of $\|\bX(m)\|$ are presented, with a log-scale on the $x$-axis, in the situation where the algorithms are initialized from $\bx = (10, \dots, 10)$, which corresponds to $\mathbf{x} = (d^{\kappa - 1/2}, \ldots, d^{\kappa - 1/2})$ with $\kappa \approx 1.09$. We observe the pathological behaviour of GB MTM described before: increasing $N$ deteriorates the convergence performance up to having to set $\sigma$ to near-zero values to achieve non-negligible acceptance rate when $N$ equals $50$ or $500$. We observe the opposite and desirable results for LB MTM, whose convergence speed increases with $N$. To provide a more quantitative picture, \autoref{fig:Boxplots} shows the convergence times for the same algorithms, target distribution and starting state as in \autoref{fig:Traceplots}. The results are obtained from 100 independent runs for each algorithm using different values of $N$. Here, the convergence time is defined as the first time the chain reaches the 95th percentile of $\|\bX\|$ under the target distribution. \autoref{fig:Boxplots} also presents analogous results for a different target distribution, namely a $50$-dimensional product of standard Laplace distributions. The results are consistent with \autoref{fig:Traceplots}, with LB MTM providing a smooth and regular improvement in performance with $N$, unlike GB MTM. Also, it is interesting to note the difference between the LB MTM with the Barker weight function and that with $g(x) = \sqrt{x}$. The performance of the former stabilizes quicker as $N$ increases, and do not reach to same level of improvement as the latter.

  \begin{figure}[ht]
  \centering\footnotesize
  $\begin{array}{ccc}
 \vspace{-2mm}\hspace{-2mm}\includegraphics[width=0.34\textwidth]{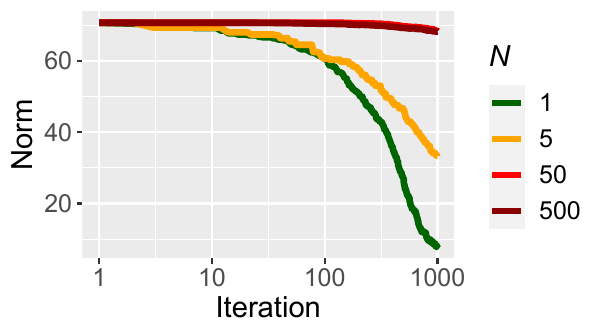} &  \hspace{-5mm} \includegraphics[width=0.34\textwidth]{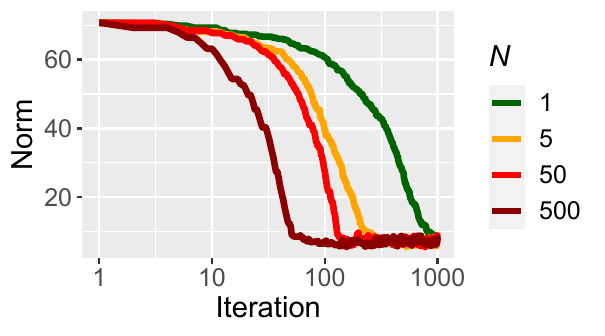} &  \hspace{-5mm} \includegraphics[width=0.34\textwidth]{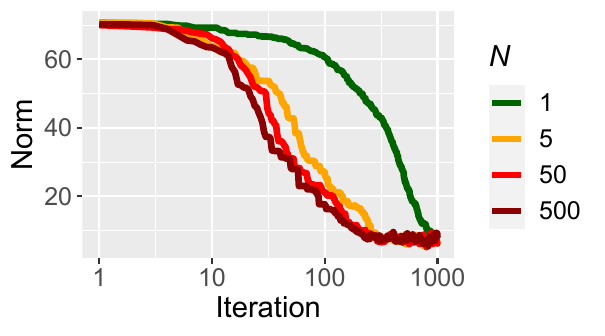} \cr
   \hspace{-5mm} \textbf{(a) MTM w. $w(\mathbf{x}, \mathbf{y}) = \frac{\pi(\mathbf{y})}{\pi(\mathbf{x})}$} & \hspace{-7mm} \textbf{(b) MTM w. $w(\mathbf{x}, \mathbf{y}) = \sqrt{\frac{\pi(\mathbf{y})}{\pi(\mathbf{x})}}$} & \hspace{-6mm} \textbf{(c) MTM w. $w(\mathbf{x}, \mathbf{y}) = \frac{\pi(\mathbf{y}) / \pi(\mathbf{x})}{1 + \pi(\mathbf{y}) / \pi(\mathbf{x})}$} \cr
  \end{array}$\vspace{-2mm}
\caption{\small Trace plots of the Euclidean norm of the state when $d=50$, the scale parameter is adaptively tuned and the initial state is $\mathbf{x} = (10, \ldots, 10)$, for several values of $N$ and: (a) GB MTM, (b) LB MTM with $w(\mathbf{x}, \mathbf{y}) = \sqrt{\pi(\mathbf{y}) / \pi(\mathbf{x})}$, and (c) LB MTM with $w(\mathbf{x}, \mathbf{y}) = (\pi(\mathbf{y}) / \pi(\mathbf{x}))/(1+ \pi(\mathbf{y}) / \pi(\mathbf{x}))$;
the scale on the $x$-axis is logarithmic.}\label{fig:Traceplots}
 \end{figure}
\normalsize

  \begin{figure}[ht]
  \centering\footnotesize
  $\begin{array}{ccc}
  \multicolumn{3}{c}{\large\textbf{Target: normal distribution}} \cr
 \vspace{-2mm}\hspace{-2mm}\includegraphics[width=0.34\textwidth]{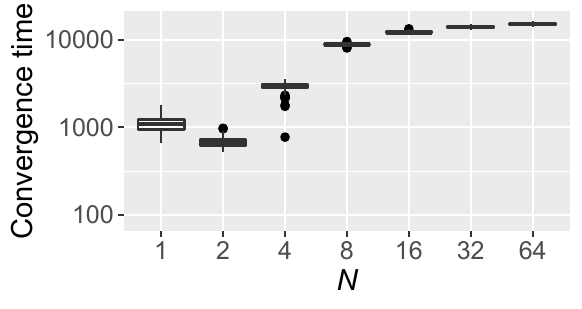} &  \hspace{-5mm} \includegraphics[width=0.34\textwidth]{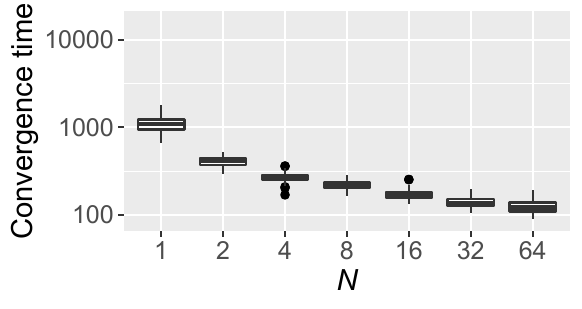} &  \hspace{-5mm} \includegraphics[width=0.34\textwidth]{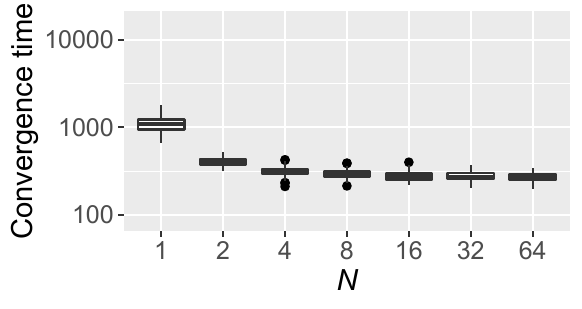} \cr
  \vspace{2mm} \hspace{-2mm} \textbf{(a) MTM w. $w(\mathbf{x}, \mathbf{y}) = \frac{\pi(\mathbf{y})}{\pi(\mathbf{x})}$} & \hspace{-5mm} \textbf{(b) MTM w. $w(\mathbf{x}, \mathbf{y}) = \sqrt{\frac{\pi(\mathbf{y})}{\pi(\mathbf{x})}}$} & \hspace{-6mm} \textbf{(c) MTM w. $w(\mathbf{x}, \mathbf{y}) = \frac{\pi(\mathbf{y}) / \pi(\mathbf{x})}{1 + \pi(\mathbf{y}) / \pi(\mathbf{x})}$} \cr
     \multicolumn{3}{c}{\large\textbf{Target: Laplace distribution}} \cr
 \vspace{-2mm}\hspace{-2mm}\includegraphics[width=0.34\textwidth]{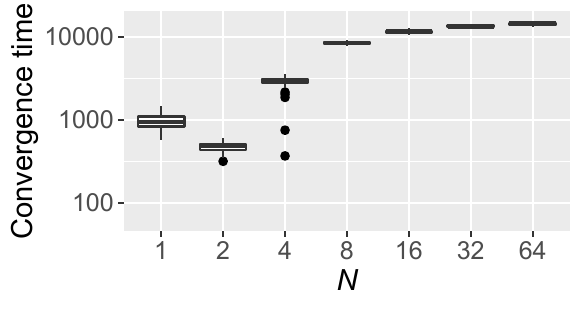} &  \hspace{-5mm} \includegraphics[width=0.34\textwidth]{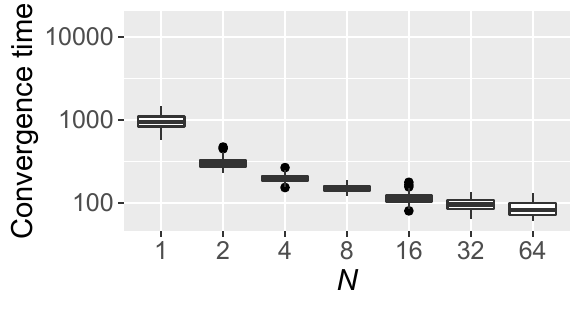} &  \hspace{-5mm} \includegraphics[width=0.34\textwidth]{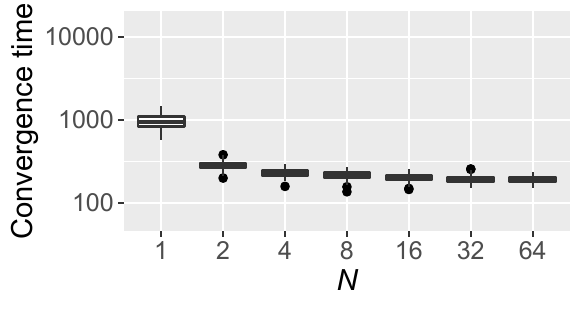} \cr
   \hspace{-2mm} \textbf{(a) MTM w. $w(\mathbf{x}, \mathbf{y}) = \frac{\pi(\mathbf{y})}{\pi(\mathbf{x})}$} & \hspace{-5mm} \textbf{(b) MTM w. $w(\mathbf{x}, \mathbf{y}) = \sqrt{\frac{\pi(\mathbf{y})}{\pi(\mathbf{x})}}$} & \hspace{-6mm} \textbf{(c) MTM w. $w(\mathbf{x}, \mathbf{y}) = \frac{\pi(\mathbf{y}) / \pi(\mathbf{x})}{1 + \pi(\mathbf{y}) / \pi(\mathbf{x})}$} \cr
  \end{array}$\vspace{-2mm}
  \caption{\small Convergence time as a function of $N$ when $d = 50$, the scale parameter is adaptively tuned and the initial state is $\mathbf{x} = (10, \ldots, 10)$, for several values of $N$ and: (a) MTM with $w(\mathbf{x}, \mathbf{y}) = \pi(\mathbf{y}) / \pi(\mathbf{x})$, (b) MTM with $w(\mathbf{x}, \mathbf{y}) = \sqrt{\pi(\mathbf{y}) / \pi(\mathbf{x})}$, and (c) MTM with $w(\mathbf{x}, \mathbf{y}) = (\pi(\mathbf{y}) / \pi(\mathbf{x}))/(1+ \pi(\mathbf{y}) / \pi(\mathbf{x}))$; on the first row, the target is a 50-dimensional standard normal distribution, whereas it is a 50-dimensional standard Laplace distribution on the second row.}\label{fig:Boxplots}
 \end{figure}
\normalsize

Based on all the observations made in this section, we provide a recommendation for an adaptive implementation of LB MTM.
\begin{enumerate}

    \item Set $g(x) = \sqrt{x}$.

    \item Set $N$ equal to the number of available cores for parallel computing.

    \item Initialize the step size as $\sigma = \ell / \sqrt{d}$ with $\ell = 2.38$.

    \item Run MTM with adaptive tuning of $\sigma$ as in Algorithm 4 of \cite{andrieu2008tutorial}, with $\gamma(m) = m^{-0.6}$ and a targeted acceptance rate of $50\%$.

\end{enumerate}

\section{Performance at stationarity}\label{sec:scaling_convergence}

In this section, we characterize the high-dimensional behaviour of MTM algorithms after they have reached stationarity. To this end, we first establish in \autoref{sec:scaling-limits} the weak convergence of a transformation of the Markov chains produced by ideal MH schemes towards Langevin diffusions, each started in stationarity, as $d \rightarrow \infty$. In \autoref{sec:convergence_MTM}, we bridge the gap between the MTM algorithms and diffusion processes by providing conditions about the scaling of $N$ with $d$ ensuring the asymptotic equivalence of MTM and ideal schemes. We finish with numerical experiments in \autoref{sec:numerical} that corroborate the findings of Sections \ref{sec:scaling-limits} and \ref{sec:convergence_MTM}.

It is worth mentioning that a scaling-limit analysis of MTM was conducted in \cite{bedard2012scaling}, but the analysis in that paper is quite different from that conducted here. First, in \cite{bedard2012scaling}, MTM is not seen as an approximation to an ideal scheme because $N$ is considered fixed; a weak convergence of a transformation of the Markov chains produced by MTM towards Langevin diffusions is directly obtained as $d \rightarrow \infty$. The asymptotic regime considered here is that where the number of candidates $N$ increases with $d$, whereas the asymptotic regime in \cite{bedard2012scaling} can be thought of as the situation where $N$ is small relatively to $d$. Another difference is that \cite{bedard2012scaling} considers only GB MTM, while we study also LB MTM.

Because of the nature of the analysis conducted in Sections \ref{sec:scaling-limits} and \ref{sec:convergence_MTM}, we, as for the asymptotic analysis in \autoref{sec:tails}, highlight a dependency on $d$ of the target distribution, the scale parameter, the number of candidates, and so on, by denoting them $\pi_d$, $\sigma_d$, $N_d$, etc.

\subsection{Scaling limits of ideal schemes}\label{sec:scaling-limits}

For the analysis, we consider the same scenario as in \autoref{sec:tails}; in particular, we consider that the target distribution is defined as in \eqref{eqn:target_product}. Also, for the analysis, we set $\sigma_d = \ell / d^\tau$ in $q_{\sigma_d}(\bx_d, \cdot \,) = \mathcal{N}(\mathbf{x}_d, \sigma_d^2 \I_d)$, with $\ell$ being a positive tuning parameter and $\tau$ a positive constant characterizing the scalability of the algorithm with respect to the dimension (the smaller is $\tau$, the better is the scalability with respect to $d$).

Before presenting the scaling-limit result, we introduce required notation. We use $\Phi$ to denote the cumulative distribution function of the standard normal distribution. We use $\{\bX_{d, \text{ideal}}(m): m \in \N\}$ to denote a Markov chain simulated by an ideal MH scheme using $Q_{w, \sigma_d}(\bx_d, \cdot \,)$ for proposal distribution, and define a re-scaled continuous-time version $\{\bZ_{d, \text{ideal}}(t): t \geq 0\}$ using:
\begin{align}\label{eqn:cont_proc_ideal}
 \bZ_{d, \text{ideal}}(t) := \bX_{d, \text{ideal}}(\lfloor d^{2\tau} t \rfloor),
\end{align}
with $\lfloor \, \cdot \, \rfloor$ being the floor function. A scaling limit consists in proving that the first component of $\{\bZ_{d, \text{ideal}}(t): t \geq 0\}$, denoted by $\{Z_{d, \text{ideal}}(t): t\geq0\}$, converges weakly to $\{Z(t): t\geq0\}$, a Langevin diffusion.

We are now ready to present the scaling-limit result. It is about the GB ideal scheme and the LB ideal scheme with $w(\mathbf{x}_d, \mathbf{y}_d) = \sqrt{\pi_d(\mathbf{y}_d) / \pi_d(\mathbf{x}_d)}$.

\begin{Theorem}\label{theorem:scaling_limit_ideal}
 Assume that $\pi_d$ is as in \eqref{eqn:target_product} and that the proposal distribution in a MH algorithm is $Q_{w, \sigma_d}(\bx_d, \cdot \,)$ with $q_{\sigma_d}(\bx_d, \cdot \,) = \mathcal{N}(\mathbf{x}_d, \sigma_d^2 \I_d)$ and $\sigma_d = \ell / d^\tau$. Assume also that $\bX_{d, \text{ideal}}(0) \sim \pi_d$ and that $\bZ_{d, \text{ideal}}(t)$ for $t \geq 0$ is defined as in \eqref{eqn:cont_proc_ideal}. Then, as $d \rightarrow \infty$, $\{Z_{d, \text{ideal}}(t): t\geq0\}$ converges weakly towards $\{Z(t): t\geq0\}$, a Langevin diffusion such that $Z(0) \sim \mathcal{N}(0, 1)$ and
 \[
  \d Z(t) = \ell^2 (\vartheta_{w, \tau}(\ell) / 2) (\log \varphi(Z(t)))' \, \d t + \sqrt{\ell^2 \vartheta_{w, \tau}(\ell)} \, \d B(t),
 \]
 with $\{B(t): t \geq 0\}$ being a standard Brownian motion and  $\vartheta_{w, \tau}$ being defined as follows:
 if $w(\mathbf{x}_d, \mathbf{y}_d) = \sqrt{\pi_d(\mathbf{y}_d) / \pi_d(\mathbf{x}_d)}$,
 \[
  \vartheta_{w, \tau}(\ell) = \begin{cases}
   2\Phi(-\ell^3/2^3) &\text{if} \quad \tau = 1/6, \cr
   1 & \text{if} \quad \tau > 1/6;
  \end{cases}
 \]
 if $w(\mathbf{x}_d, \mathbf{y}_d) = \pi_d(\mathbf{y}_d) / \pi_d(\mathbf{x}_d)$,
 \[
  \vartheta_{w, \tau}(\ell) = \begin{cases}
   2\Phi(-\ell/2) & \text{if} \quad \tau = 1/2, \cr
   1 & \text{if} \quad \tau > 1/2.
  \end{cases}
 \]
\end{Theorem}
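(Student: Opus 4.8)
The plan is to reduce the theorem to a one-coordinate computation by first writing the ideal proposal $Q_{w,\sigma_d}(\bx_d,\cdot\,)$ in closed form, and then running the classical generator/martingale-problem argument for diffusion limits in the spirit of \cite{roberts1997weak} and \cite{bedard2012scaling}, now specialised to the single-candidate ideal scheme. Completing the square coordinate-wise shows that, for $\pi_d$ as in \eqref{eqn:target_product}, $g(\varphi(y)/\varphi(x))\,q_{\sigma_d}(x,y)$ is proportional (in $y$) to a Gaussian density, so under $Q_{w,\sigma_d}(\bx_d,\cdot\,)$ the proposal $\bY_d$ has independent coordinates $Y_{d,i}\sim\mathcal N\big((1-\rho_d)\,x_{d,i},\,v_d\big)$, with $\rho_d=\sigma_d^2/(2+\sigma_d^2)$ and $v_d=2\sigma_d^2/(2+\sigma_d^2)=2\rho_d$ in the locally-balanced case and $\rho_d=v_d=\sigma_d^2/(1+\sigma_d^2)$ in the globally-balanced case; in both $\rho_d,v_d=\Theta(\sigma_d^2)$. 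This gives the convenient representation $Y_{d,i}-x_{d,i}=-\rho_d x_{d,i}+\sqrt{v_d}\,\xi_i$ of the proposed increment, with $\xi_i$ i.i.d.\ $\mathcal N(0,1)$.

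Next I would make the acceptance ratio explicit. Since $q_{\sigma_d}$ is symmetric, $\alpha_{\text{ideal}}(\bx_d,\by_d)=1\wedge e^{R_d}$ with $R_d=\log\!\big(g(\pi_d(\bx_d)/\pi_d(\by_d))/g(\pi_d(\by_d)/\pi_d(\bx_d))\big)+\log(\pi_d(\by_d)/\pi_d(\bx_d))+\log(Z_w(\bx_d)/Z_w(\by_d))$, where $Z_w(\bx):=\int w(\bx,\by)\,q_{\sigma_d}(\bx,\by)\,\d\by$. For the LB function the identity $g(u)/g(1/u)=u$ makes the first two terms cancel, leaving $R_d=\log(Z_w(\bx_d)/Z_w(\by_d))$, while for GB those two terms combine to $\log(\pi_d(\bx_d)/\pi_d(\by_d))$. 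In the Gaussian product model every factor of $Z_w$ is of the form $\mathrm{const}\cdot e^{a_d x^2}$, so in either case $R_d$ is an explicit affine function of $\|\by_d\|^2-\|\bx_d\|^2$. Inserting the proposal representation and using that, in stationarity, $\bX_{d,\text{ideal}}(0)$ has i.i.d.\ $\mathcal N(0,1)$ coordinates (hence $\|\bX_{d,\text{ideal}}(0)\|^2$ concentrates at $d$) independent of $\xi_{1:d}$, one finds $\|\by_d\|^2-\|\bx_d\|^2=(\rho_d^2-2\rho_d)\|\bx_d\|^2+v_d\sum_i\xi_i^2+2(1-\rho_d)\sqrt{v_d}\sum_i x_{d,i}\xi_i$. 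For LB, where $v_d=2\rho_d$, the $O(d\sigma_d^2)$ contributions cancel when $\|\bx_d\|^2$ is replaced by $d$, leaving an $O(d\sigma_d^4)$ deterministic part (this is precisely why LB admits the smaller critical exponent $\tau=1/6$); for GB there is no such cancellation and the $O(d\sigma_d^2)$ piece survives. In both cases $R_d$ is a deterministic term plus a sum of $\Theta(d)$ mean-zero, weakly dependent terms, so a Lindeberg central limit theorem gives, at the critical exponent, $R_d\Rightarrow\mathcal N(-s^2/2,\,s^2)$ with $s=\ell^3/4$ when $\tau=1/6$ (LB) and $s=\ell$ when $\tau=1/2$ (GB); the identity $\E[e^{R_d}]=1$ (valid in stationarity by reversibility) forces mean $=-$variance$/2$ once a mild uniform-integrability bound on $e^{R_d}$ is in place. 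Consequently $\E[1\wedge e^{R_d}]\to 2\Phi(-s/2)=\vartheta_{w,\tau}(\ell)$, and $\E[e^{R_d}\1_{\{R_d<0\}}]\to\Phi(-s/2)$ by Gaussian tilting; when $\tau$ exceeds the critical value, $R_d\to0$ in probability and $\vartheta_{w,\tau}(\ell)=1$.

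The last step is the generator argument. Fix $f\in C_c^\infty(\R)$, view it as a function of $X_{d,1}$, the first coordinate of $\bX_{d,\text{ideal}}$, and show that $G_d f(x):=d^{2\tau}\,\E[f(X_{d,1}(1))-f(X_{d,1}(0))\mid X_{d,1}(0)=x]$ converges, uniformly on compacts, to $\mathcal L f(x):=\tfrac{\ell^2\vartheta_{w,\tau}(\ell)}{2}\big[(\log\varphi(x))'f'(x)+f''(x)\big]$, the other $d-1$ coordinates and $\xi_{1:d}$ being averaged in stationarity. A second-order Taylor expansion reduces this to $d^{2\tau}\E[(Y_{d,1}-x)\1_{\{\mathrm{acc}\}}\mid x]$ and $d^{2\tau}\E[(Y_{d,1}-x)^2\1_{\{\mathrm{acc}\}}\mid x]$: the diffusion coefficient is $d^{2\tau}v_d\,\E[1\wedge e^{R_d}]\to\ell^2\vartheta_{w,\tau}(\ell)$, while the drift splits into a proposal-mean part $d^{2\tau}(-\rho_d x)\,\E[1\wedge e^{R_d}]$ and an acceptance-asymmetry part coming from the part of $R_d$ correlated with $\xi_1$ --- a term of order $\sigma_d$ for GB but only $\sigma_d^3$ for LB. For LB the asymmetry part is thus $O(d^{2\tau}\sigma_d^4 x)\to 0$, and (using $\rho_d\sim\sigma_d^2/2$, $d^{2\tau}\sigma_d^2=\ell^2$) the drift equals $-\tfrac{\ell^2\vartheta_{w,\tau}(\ell)}{2}x$. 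For GB the asymmetry part tends to $\ell^2 x\,\Phi(-s/2)$ (Taylor-expand in the $\sigma_d$-order perturbation, then use $\E[e^{R_d}\1_{\{R_d<0\}}]\to\Phi(-s/2)$), which added to the proposal part $-\ell^2 x\,\vartheta_{w,\tau}(\ell)=-2\ell^2 x\,\Phi(-s/2)$ gives total drift $-\ell^2 x\,\Phi(-s/2)=-\tfrac{\ell^2\vartheta_{w,\tau}(\ell)}{2}x$; since $(\log\varphi(x))'=-x$, both cases match $\mathcal L$. Because $X_{d,1}(0)\sim\varphi=\mathcal N(0,1)$ the initial law is correct, $\mathcal L$ generates the stated Langevin diffusion (whose martingale problem is well posed), and a moment bound on the rescaled increments --- exploiting the Gaussian tails and the concentration of $\|\bX_{d,\text{ideal}}\|^2$ at $d$ --- supplies the uniform integrability needed to upgrade the convergence of $G_d f$ to weak convergence of $\{Z_{d,\text{ideal}}(t):t\ge0\}$ towards $\{Z(t):t\ge0\}$ in $D([0,\infty),\R)$ via the standard criterion used in \cite{roberts1997weak} and \cite{bedard2012scaling}.

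I expect the main obstacle to be the asymptotic bookkeeping underpinning the last two paragraphs: among a sum of $\Theta(d)$ weakly dependent terms one must retain exactly those of order $d^{-2\tau}$ after the $d^{2\tau}$ rescaling, verify the cancellation of the $O(d\sigma_d^2)$ deterministic part for LB (automatic via $v_d=2\rho_d$ and $\|\bX_d\|^2\approx d$, more delicate for GB), check a Lindeberg condition together with the two $\Phi(-s/2)$ identities, show that the first-coordinate contribution to $R_d$ is negligible enough that $R_d$ is asymptotically independent of $\xi_1$ at the order that matters, and control everything uniformly in $x$ on compacts with sufficient integrability to invoke the diffusion-approximation theorem. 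The single subtlest point is the GB drift, where the acceptance asymmetry must supply exactly half of the limiting drift, bringing the ``doubled'' mean reversion built into the GB proposal down to the value $\ell^2\vartheta_{w,\tau}(\ell)/2$ dictated by the $\varphi$-targeting Langevin diffusion.
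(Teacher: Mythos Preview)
Your proposal is correct and follows essentially the same route as the paper: closed-form Gaussian proposals via completing the square, an explicit log-acceptance ratio that is affine in $\|\by_d\|^2-\|\bx_d\|^2$, and a pseudo-generator argument (the paper invokes Theorem 8.2 and Corollary 8.6 of Chapter 4 in \cite{ethier1986markov}) with a second-order Taylor expansion of the test function. The only cosmetic differences are that the paper works in $L^1(\pi_d)$ rather than uniformly on compacts, identifies the limiting acceptance probability by conditioning on $\bX_d$ and approximating $S_d=\sum_{i\ge 2}(Y_i^2-X_i^2)$ by an explicit conditionally Gaussian $W_d$ (instead of your CLT plus the reversibility identity $\E[e^{R_d}]=1$), and extracts the first-coordinate contribution by Taylor-expanding $\alpha_{\text{ideal}}(\bx_d,\by_d)$ in $y_1$ around $x_1$ --- which produces exactly your proposal-mean/acceptance-asymmetry split and the same $f_1,f_2$ terms leading to the drift $-\ell^2(\vartheta_{w,\tau}(\ell)/2)x$.
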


To establish such a result, it is crucial that the expected acceptance probability (in stationarity)  $\E[\alpha_{\text{ideal}}(\bX_d, \bY_d)]$ (with $\bX_d \sim \pi_d$) converges towards a non-null function of $\ell$ that is independent of $d$; the function $\vartheta_{w, \tau}(\ell)$ in \autoref{theorem:scaling_limit_ideal} is precisely this function. \autoref{theorem:scaling_limit_ideal} thus indicates that $\tau \geq 1/2$ in the GB ideal scheme allows such a convergence, whereas $\tau \geq 1/6$ in the LB ideal scheme is sufficient. This implies that the LB ideal scheme has a better scaling with the dimension than the GB ideal scheme.

We present numerical results in \autoref{fig:acc_rates_d} of $\E[\alpha_{\text{ideal}}(\bX_d, \bY_d)]$ as a function of $d$. These numerical results allow to show that $\E[\alpha_{\text{ideal}}(\bX_d, \bY_d)]$ converges to 0 when $w(\mathbf{x}_d, \mathbf{y}_d) = \sqrt{\pi_d(\mathbf{y}_d) / \pi_d(\mathbf{x}_d)}$ and $\tau < 1/6$ and when $w(\mathbf{x}_d, \mathbf{y}_d) = \pi_d(\mathbf{y}_d) / \pi_d(\mathbf{x}_d)$ and $\tau < 1/2$. The expectations are approximated using independent Monte Carlo sampling; the approximations are based on samples of size 1,000,000. We stress that there is an important difference between $\E[\alpha_{\text{ideal}}(\bX_d, \bY_d)]$ and what we called the \textit{conditional expected acceptance probability} in \autoref{sec:tails}: $\E[\alpha_{\text{ideal}}(\bX_d, \bY_d)]$ is the unconditional expectation with $\bX_d \sim \pi_d$, whereas the \textit{conditional expected acceptance probability} in \autoref{sec:tails} is the conditional expectation given a current state $\bx_d$. We highlight the difference by referring to $\E[\alpha_{\text{ideal}}(\bX_d, \bY_d)]$ as (simply) the \textit{expected acceptance probability}.

 \begin{figure}[ht]
  \centering\footnotesize
  $\begin{array}{cc}
 \vspace{-2mm}\hspace{-2mm}\includegraphics[width=0.50\textwidth]{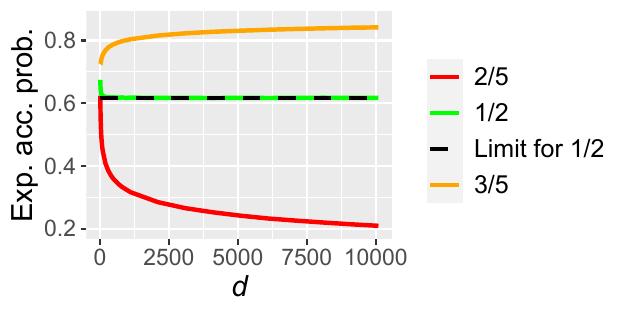} &  \hspace{-3mm} \includegraphics[width=0.50\textwidth]{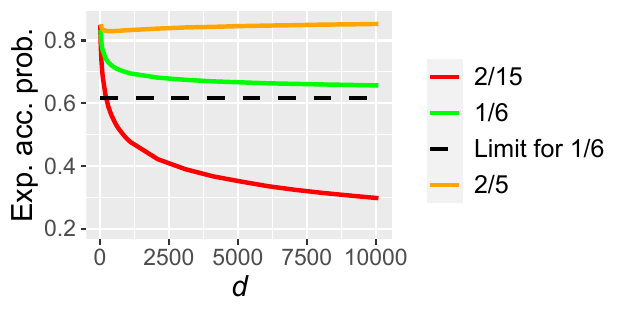} \cr
   \hspace{-4mm} \textbf{(a) Ideal scheme w. $w(\mathbf{x}_d, \mathbf{y}_d) = \pi_d(\mathbf{y}_d) / \pi_d(\mathbf{x}_d)$} & \hspace{-5mm} \textbf{(b) Ideal scheme w. $w(\mathbf{x}_d, \mathbf{y}_d) = \sqrt{\pi_d(\mathbf{y}_d) / \pi_d(\mathbf{x}_d)}$}
  \end{array}$\vspace{-0mm}
  \caption{\small Expected acceptance probabilities as a function of $d$ for: (a) the ideal scheme with $w(\mathbf{x}_d, \mathbf{y}_d) = \pi_d(\mathbf{y}_d) / \pi_d(\mathbf{x}_d)$, $\ell = 1$, and $\tau = 2/5$, $\tau = 1/2$ and $\tau = 3/5$; (b) the ideal scheme with $w(\mathbf{x}_d, \mathbf{y}_d) = \sqrt{\pi_d(\mathbf{y}_d) / \pi_d(\mathbf{x}_d)}$, $\ell = 2^{2/3}$, and $\tau = 2/15$, $\tau = 1/6$ and $\tau = 2/5$; the limiting expected acceptance probabilities for $\tau = 1/2$ and $\tau = 1/6$ are also presented, in the GB and LB cases, respectively; with the values used for $\ell$, the limits are the same; the values for $\tau$ other than $1/2$ and $1/6$ have been obtained by increasing and decreasing these by $20\%$.}\label{fig:acc_rates_d}
 \end{figure}
\normalsize

We finish this section with a discussion about the tuning of the parameter $\ell$. For this part, we analyse another characteristic of the limiting stochastic process in \autoref{theorem:scaling_limit_ideal}. This stochastic process can be seen as a function of another process:
\begin{align}\label{eqn:langevin}
 Z(t) = V(\ell^2 \vartheta_{w, \tau}(\ell) t),
 \end{align}
 where $\{V(t): t \geq 0\}$ is the Langevin diffusion with a stochastic differential equation given by
\[
 \d V(t)= (\log \varphi(V(t)))' / 2 \times \d t + \d B(t).
\]
The term $\ell^2 \vartheta_{w, \tau}(\ell)$ in \eqref{eqn:langevin} is sometimes referred to as the \textit{speed measure} of $\{Z(t): t\geq0\}$. From a MCMC perspective, the largest speed is best. Indeed, the stationary integrated autocorrelation time of \emph{any} function $h$ of the diffusion is proportional to the inverse of the diffusion speed. The following corollary presents the largest speeds and tuning procedures.

\begin{Corollary}\label{cor:optimization}
 The speed measure when $w(\mathbf{x}_d, \mathbf{y}_d) = \sqrt{\pi_d(\mathbf{y}_d) / \pi_d(\mathbf{x}_d)}$ and $\tau = 1/6$, given by $2\ell^2\Phi(-\ell^3/2^3)$, is maximized at $\ell^* = 1.650$ (to three decimal places), which yields a limiting expected acceptance probability of $2\Phi(-(\ell^*)^3/2^3) = 0.574$ (to three decimal places). The speed measure when $w(\mathbf{x}_d, \mathbf{y}_d) = \pi_d(\mathbf{y}_d) / \pi_d(\mathbf{x}_d)$ and $\tau = 1/2$, given by $2\ell^2\Phi(-\ell/2)$, is maximized at $\ell^{**} = 2.381$ (to three decimal places), which yields a limiting expected acceptance probability of $2\Phi(-\ell^{**}/2) = 0.234$ (to three decimal places).
\end{Corollary}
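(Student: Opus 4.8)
The plan is to treat both cases as a straightforward one-dimensional calculus exercise, maximizing $f_w(\ell) := \ell^2 \vartheta_{w,\tau}(\ell)$ over $\ell > 0$ at the respective critical value of $\tau$, since for $\tau$ strictly larger the speed measure is $\ell^2$ which is increasing and unbounded (so no interior maximum exists) — hence the relevant optimization is precisely at $\tau = 1/6$ (resp.\ $\tau = 1/2$). First I would write, for the square-root weight at $\tau = 1/6$, the speed measure $f(\ell) = 2\ell^2 \Phi(-\ell^3/2^3)$. Differentiating with respect to $\ell$ and using $\Phi'(z) = \varphi(z)$ gives
\[
 f'(\ell) = 4\ell\,\Phi(-\ell^3/8) - 2\ell^2 \cdot \frac{3\ell^2}{8}\,\varphi(\ell^3/8) = 4\ell\,\Phi(-\ell^3/8) - \frac{3\ell^4}{4}\,\varphi(\ell^3/8).
\]
Setting $f'(\ell) = 0$ and dividing by $4\ell$ yields the transcendental equation $\Phi(-\ell^3/8) = \tfrac{3}{16}\ell^3 \varphi(\ell^3/8)$, which has a unique positive root $\ell^*$; solving numerically gives $\ell^* = 1.650$ to three decimals, and substituting back gives the acceptance rate $2\Phi(-(\ell^*)^3/8) = 0.574$.

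Next I would do the analogous computation for the globally-balanced weight at $\tau = 1/2$, where the speed measure is $f(\ell) = 2\ell^2 \Phi(-\ell/2)$. Here
\[
 f'(\ell) = 4\ell\,\Phi(-\ell/2) - 2\ell^2 \cdot \tfrac12\,\varphi(\ell/2) = 4\ell\,\Phi(-\ell/2) - \ell^2\,\varphi(\ell/2),
\]
so the first-order condition (after dividing by $\ell$) is $4\Phi(-\ell/2) = \ell\,\varphi(\ell/2)$, whose unique positive solution is $\ell^{**} = 2.381$ to three decimals, giving acceptance rate $2\Phi(-\ell^{**}/2) = 0.234$. This last equation is exactly the classical optimal-scaling equation for random-walk Metropolis and MALA-type schemes, so one may also simply cite the standard value; I would note that the numerical value $2.381$ rounds consistently with the $\ell^{**} = 2.38$ used throughout the numerics.

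The only points requiring a little care — rather than genuine obstacles — are (i) verifying uniqueness of the critical point, which follows because $\ell \mapsto f(\ell)/\ell^2 = 2\Phi(-c\ell^3)$ (resp.\ $2\Phi(-\ell/2)$) is strictly decreasing while $\ell^2$ is strictly increasing, so $f$ is unimodal: it increases from $0$, and since $\Phi(-c\ell^3) \to 0$ super-polynomially the product eventually decreases, forcing exactly one interior maximizer; and (ii) confirming the reported digits by a routine numerical root-find (Newton's method on the two transcendental equations above converges in a handful of iterations). The step most worth spelling out is the reduction argument showing that $\tau$ equal to the threshold value is the only regime in which an optimal finite $\ell$ exists — for $\tau$ above the threshold the diffusion speed is unbounded in $\ell$, which is the usual degenerate situation where the continuous-time limit does not pin down a scaling and one instead appeals to the boundary case; I would state this explicitly so that \autoref{cor:optimization} is not misread as claiming a global optimum over all $(\ell,\tau)$.
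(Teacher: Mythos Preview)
Your proposal is correct and follows the same route as the paper, which simply notes that the optimization results ``can be obtained from \cite{roberts1997weak} and \cite{roberts1998optimal}'' without writing out the calculus; you have supplied exactly those details. One small remark: your unimodality argument in (i) (increasing then decreasing implies a single maximizer) is not airtight as stated, but it is easily repaired by observing that the first-order condition rewrites as an equality between the Mills ratio $\Phi(-u)/\varphi(u)$ (strictly decreasing) and a strictly increasing function of $\ell$, forcing a unique root.
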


This result highlights that: (i) the (asymptotically) optimal expected acceptance probability for the GB ideal scheme is the same as that for random-walk Metropolis, with the same maximum speed for the limiting diffusion \citep{roberts1997weak}; (ii) the (asymptotically) optimal expected acceptance probability for the LB ideal scheme is the same as that for MALA, with the same maximum speed for the limiting diffusion \citep{roberts1998optimal}. The latter is expected as MALA can be viewed as an approximation to the ideal scheme with $w(\mathbf{x}_d, \mathbf{y}_d) = \sqrt{\pi_d(\mathbf{y}_d) / \pi_d(\mathbf{x}_d)}$ (recall the discussion in \autoref{sec:connection_gradient}), and the approximation is asymptotically exact if the step size in MALA diminishes adequately with $d$, as $d \rightarrow \infty$. The results in \autoref{cor:optimization} can be obtained from \cite{roberts1997weak} and \cite{roberts1998optimal}.

Presenting the largest speed and a tuning procedure for the ideal scheme with $w(\mathbf{x}_d, \mathbf{y}_d) = \sqrt{\frac{\pi_d(\mathbf{y}_d)}{\pi_d(\mathbf{x}_d)}}$ and $\tau = 1/6$ in \autoref{cor:optimization} is interesting as it allows to make that connection with MALA, but we will see in \autoref{sec:convergence_MTM} that in order to take advantage of such a value for $\tau$ in MTM, one would need to use computational resource well beyond what is reasonable and realistic. We will see that, in MTM, it is more reasonable to use values around $\tau = 1/2$. With $\tau = 1/2$, the limiting diffusion of the ideal scheme with $w(\mathbf{x}_d, \mathbf{y}_d) = \sqrt{\pi_d(\mathbf{y}_d) / \pi_d(\mathbf{x}_d)}$ has a speed measure given by $\ell^2$, which can be compared with that of the limiting diffusion of the ideal scheme with $w(\mathbf{x}_d, \mathbf{y}_d) = \pi_d(\mathbf{y}_d) / \pi_d(\mathbf{x}_d)$ in \autoref{cor:optimization}, given by $2\ell^2\Phi(-\ell/2)$, because both schemes use the same form for the scale parameter; see \autoref{fig:speed} for a comparison of the speed measures.

\begin{figure}[ht]
  \centering
  \includegraphics[width=0.40\textwidth]{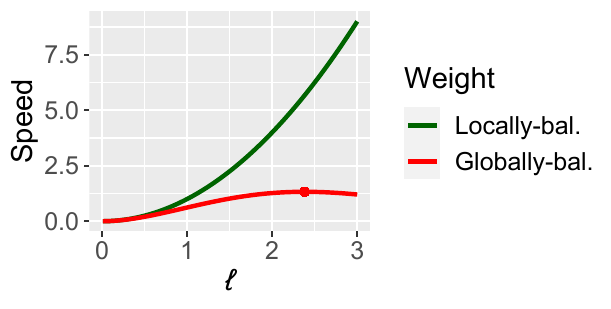}
  \vspace{-3mm}
  \caption{\small Speed measures as a function of $\ell$ of the limiting diffusions of the GB and LB ideal schemes when $\tau = 1/2$; the red point indicates the maximum speed when using the GB weight function.}\label{fig:speed}
 \end{figure}
\normalsize

\autoref{fig:speed} suggests that MTM using $w(\mathbf{x}_d, \mathbf{y}_d) = \sqrt{\pi_d(\mathbf{y}_d) / \pi_d(\mathbf{x}_d)}$ is at least as good as MTM using $w(\mathbf{x}_d, \mathbf{y}_d) = \pi_d(\mathbf{y}_d) / \pi_d(\mathbf{x}_d)$, in high dimensions if it approximates well the ideal scheme; this is observed empirically in \autoref{sec:numerical}. \autoref{fig:speed} also suggests to set $\ell$ in MTM with $w(\mathbf{x}_d, \mathbf{y}_d) = \sqrt{\pi_d(\mathbf{y}_d) / \pi_d(\mathbf{x}_d)}$ as large as possible. However, in practice when sampling from target distributions having high but fixed dimensions $d$, $\ell$ has to be constrained to small values compared to $d$ to reflect that it is held constant and thus does not grow with $d$ in our asymptotic analysis. In our numerical experiments in \autoref{sec:numerical}, we observed that in moderate to high dimensions, with moderate to large values for $N_d$, optimally tuned MTM using $w(\mathbf{x}_d, \mathbf{y}_d) = g(\pi_d(\mathbf{y}_d) / \pi_d(\mathbf{x}_d))$ with $g(x) = \sqrt{x}$ and $g(x) = x / (1 + x)$ have acceptance rates in a range of $50\%$ to $60\%$. We thus recommend to users to start their tuning procedures with a value of $\ell$ yielding an acceptance rate in that range (if they do not tune $\ell$ adaptively as in \autoref{sec:conv_numerics}). Note that the algorithms in \autoref{sec:numerical} were tuned using expected squared jumping distance (ESJD).

\subsection{Characterizing the approximation of ideal schemes by MTM}\label{sec:convergence_MTM}

In this section, we provide conditions on $N_d$ under which MTM with $w(\mathbf{x}_d, \mathbf{y}_d) = \pi_d(\mathbf{y}_d) / \pi_d(\mathbf{x}_d)$ and MTM with $w(\mathbf{x}_d, \mathbf{y}_d) = \sqrt{\pi_d(\mathbf{y}_d) / \pi_d(\mathbf{x}_d)}$ are asymptotically equivalent to their ideal counterparts as $d \rightarrow \infty$. The sense in which they are asymptotically equivalent implies a convergence of transformations of the Markov chains simulated by these MTM algorithms towards diffusions. We observed in the proof of \autoref{theorem:weak_con_MTM_d_fixed} that, for an MTM algorithm to be asymptotically equivalent to its ideal counterpart, it is sufficient that: (i) the weight normalization in the proposal distribution of $\bY_J$ in MTM be asymptotically equivalent to the normalizing constant of $Q_{w, \sigma_d}$ (recall \autoref{prop:dist_MTM} and \eqref{eq:expectation_ideal}), and (ii) the acceptance probability in MTM be asymptotically equivalent to that in the ideal scheme. We present below a result stating conditions on $N_d$ under which this holds. The result is established under the same scenario as in \autoref{sec:scaling-limits} (the target distribution is defined as in \eqref{eqn:target_product} and $q_{\sigma_d}(\bx_d, \cdot \,) = \mathcal{N}(\mathbf{x}_d, \sigma_d^2 \I_d)$ with $\sigma_d = \ell / d^\tau$).

Before presenting the result, we introduce required notation. We use $\{\bX_{d, \text{MTM}}(m): m \in \N\}$ to denote a Markov chain simulated by a MTM algorithm, and define a re-scaled continuous-time version $\{\bZ_{d, \text{MTM}}(t): t \geq 0\}$ using:
\begin{align}\label{eqn:cont_proc_ideal}
 \bZ_{d, \text{MTM}}(t) := \bX_{d, \text{MTM}}(\lfloor d^{2\tau} t \rfloor).
\end{align}
We use $\{Z_{d, \text{MTM}}(t): t \geq 0\}$ to denote the first component of $\{\bZ_{d, \text{MTM}}(t): t \geq 0\}$.

We are now ready to present the result.  It is about GB MTM and LB MTM with $w(\mathbf{x}_d, \mathbf{y}_d) = \sqrt{\pi_d(\mathbf{y}_d) / \pi_d(\mathbf{x}_d)}$.

\begin{Theorem}\label{theorem:MTM_app_ideal}
 Assume that $\pi_d$ is as in \eqref{eqn:target_product} and that $q_{\sigma_d}(\bx_d, \cdot \,) = \mathcal{N}(\mathbf{x}_d, \sigma_d^2 \I_d)$ with $\sigma_d = \ell / d^\tau$.  Let $\mathbf{Y}_1, \ldots, \mathbf{Y}_{N_d}$ be $N_d$ conditionally independent random variables given $\mathbf{X}_d$, each distributed as $\mathbf{Y}_i \mid \mathbf{X}_d \sim q_{\sigma_d}(\mathbf{X}_d, \cdot \,)$ with $\mathbf{X}_d \sim \pi_d$. Let $\mathbf{Y}_{J}$ be a proposal sampled using MTM. Then, there exists a positive integer $d_0$ such that for any $d \geq d_0$,
 \begin{align*}
  \E\left[d^{2\tau}\left|\frac{w(\mathbf{X}_d, \mathbf{Y}_1)}{\frac{1}{N_d}\sum_{i = 1}^{N_d} w(\mathbf{X}_d, \mathbf{Y}_i)} -  \frac{w(\mathbf{X}_d, \mathbf{Y}_1)}{\E[w(\mathbf{X}_d, \mathbf{Y}_1) \mid \mathbf{X}_d]}\right|\right] \leq \frac{d^{2\tau}}{N_d^{1/2}} \, \varrho_1(d),
 \end{align*}
 and
  \begin{align*}
  \E[d^{2\tau}|\alpha(\mathbf{X}_d, \mathbf{Y}_J) - \alpha_{\text{ideal}}(\mathbf{X}_d, \mathbf{Y}_J)|] \leq \frac{d^{2\tau}}{N_d^{1/2}} \, \varrho_2(d),
 \end{align*}
 whenever $w(\mathbf{x}_d, \mathbf{y}_d) = \sqrt{\pi_d(\mathbf{y}_d) / \pi_d(\mathbf{x}_d)}$ or $w(\mathbf{x}_d, \mathbf{y}_d) = \pi_d(\mathbf{y}_d) / \pi_d(\mathbf{x}_d)$, $\varrho_1$ and $\varrho_2$ being functions of $d$ that are explicitly defined in the proof. If $\tau \geq 1/2$, $\varrho_1(d)$ and $\varrho_2(d)$ converge to positive constants as $d \rightarrow \infty$; $N_d = d^{4\tau(1 + \rho)}$ with $\rho$ being any positive constant makes the expectations converge to 0. If $\tau < 1/2$, $\varrho_1(d)$ and $\varrho_2(d)$ grows with $d$ and $N_d = (1+\nu)^d$ with $\nu$ being any positive constant makes the expectations converge to 0. When these expectations converge to 0, $\{Z_{d, \text{MTM}}(t): t \geq 0\}$ converges weakly towards the same Langevin diffusion $\{Z(t): t \geq 0\}$ as in \autoref{theorem:scaling_limit_ideal}, under the same conditions.
 \end{Theorem}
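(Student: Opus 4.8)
The plan is to prove the two expectation bounds by direct moment computations exploiting the product structure in \eqref{eqn:target_product} and the normality of $q_{\sigma_d}$, and then to invoke \autoref{theorem:scaling_limit_ideal} together with a coupling/Slutsky-type argument to transfer the diffusion limit from the ideal scheme to MTM. For the first bound, I would write $\frac{1}{N_d}\sum_{i=1}^{N_d} w(\mathbf{X}_d,\mathbf{Y}_i) = \E[w(\mathbf{X}_d,\mathbf{Y}_1)\mid \mathbf{X}_d] + E_{N_d}$ where $E_{N_d}$ is a zero-mean (conditionally on $\mathbf{X}_d$) fluctuation, so that
\[
 \left|\frac{w(\mathbf{X}_d,\mathbf{Y}_1)}{\frac1{N_d}\sum_i w(\mathbf{X}_d,\mathbf{Y}_i)} - \frac{w(\mathbf{X}_d,\mathbf{Y}_1)}{\E[w(\mathbf{X}_d,\mathbf{Y}_1)\mid \mathbf{X}_d]}\right|
 = \frac{w(\mathbf{X}_d,\mathbf{Y}_1)\,|E_{N_d}|}{\E[w(\mathbf{X}_d,\mathbf{Y}_1)\mid \mathbf{X}_d]\,\bigl(\E[w(\mathbf{X}_d,\mathbf{Y}_1)\mid \mathbf{X}_d]+E_{N_d}\bigr)}.
\]
I would then apply Cauchy--Schwarz (or Hölder) in $\mathbf{Y}_{1:N_d}$ to separate the numerator factor $w(\mathbf{X}_d,\mathbf{Y}_1)$ from $|E_{N_d}|$, bound $\E[E_{N_d}^2 \mid \mathbf{X}_d] = \var(w(\mathbf{X}_d,\mathbf{Y}_1)\mid \mathbf{X}_d)/N_d$, and control the negative moments of the denominator. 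Because $w(\mathbf{x}_d,\mathbf{y}_d)$ factorizes over coordinates (being $\prod_i g(\varphi(y_i)/\varphi(x_i))$ for $g(x)=x$ or $g(x)=\sqrt{x}$) and each coordinate is lognormal, all these moments are available in closed form as products over $d$ coordinates, and the whole thing reduces to tracking the leading-order exponential growth/decay in $d$ of those products; this is exactly what gets absorbed into $\varrho_1(d)$. The factor $d^{2\tau}$ is a bookkeeping constant from the time-rescaling \eqref{eqn:cont_proc_ideal}.

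For the second bound, on the acceptance probabilities, I would start from the two representations of $\alpha$ and $\alpha_{\text{ideal}}$: $\alpha_{\text{ideal}}$ uses the exact normalizing constant $\int w\,q_{\sigma_d}$ in both numerator and denominator, while $\alpha$ (from \eqref{eqn:acc_ratio}) uses the Monte Carlo sums $\sum_{i=1}^{N_d} w(\mathbf{X}_d,\mathbf{Y}_i)$ and $\sum_{i=1}^{N_d-1} w(\mathbf{Y}_J,\mathbf{Z}_i) + w(\mathbf{Y}_J,\mathbf{X}_d)$. Using $|1\wedge a - 1\wedge b| \le |a-b|$ and then $|a/b' - a'/b| \le |a||1/b'-1/b| + |1/b||a-a'|$ style decompositions, the difference $|\alpha - \alpha_{\text{ideal}}|$ is controlled by the relative fluctuations of both Monte Carlo sums around their means, which are again $O_P(N_d^{-1/2})$ with $d$-dependent constants from products of lognormal moments — exactly the role of $\varrho_2(d)$. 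I would reuse the moment estimates from the first part, being slightly careful that one sum is built from the $\mathbf{Y}_i$'s and the other from the $\mathbf{Z}_i$'s evaluated at the random anchor $\mathbf{Y}_J$, so the Proposition~\ref{prop:dist_MTM} change-of-measure (tilting by $w(\mathbf{X}_d,\mathbf{Y}_1)/\sum w$) must be applied to identify the law of $\mathbf{Y}_J$ and then condition on it.

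Once both expectation bounds are in hand, determining when $d^{2\tau}\varrho_j(d)/N_d^{1/2} \to 0$ is an asymptotic exercise: when $\tau \ge 1/2$ the lognormal-product constants $\varrho_j(d)$ converge (this is the same "criticality at $\tau = 1/2$" phenomenon that makes $\vartheta_{w,\tau}$ nontrivial in \autoref{theorem:scaling_limit_ideal}), so a polynomial $N_d = d^{4\tau(1+\rho)}$ suffices; when $\tau < 1/2$, the constants $\varrho_j(d)$ blow up sub-exponentially (the discrepancy between proposal and target, as discussed after Proposition~\ref{prop:upper_bound_acc}, costs an exponential-in-$d$ factor in these moments), which forces the exponential rate $N_d = (1+\nu)^d$. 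Finally, to conclude the weak-convergence statement, I would observe that the MTM chain and the ideal chain can be run on a common probability space so that, at each step, the vanishing of $\E[d^{2\tau}|\text{weight ratio discrepancy}|]$ and $\E[d^{2\tau}|\alpha - \alpha_{\text{ideal}}|]$ makes the one-step transition kernels of $\{Z_{d,\text{MTM}}\}$ and $\{Z_{d,\text{ideal}}\}$ asymptotically indistinguishable at the relevant time scale $d^{2\tau}$; combined with the tightness and generator-convergence machinery already used to prove \autoref{theorem:scaling_limit_ideal}, this yields the same Langevin limit. I expect the main obstacle to be the second part: carefully handling the random index $J$ — the law of $\mathbf{Y}_J$ is itself a $w$-tilted product measure whose per-coordinate moments shift — and ensuring the fluctuation bounds for the \emph{denominator} sum $\sum_{i} w(\mathbf{Y}_J,\mathbf{Z}_i) + w(\mathbf{Y}_J,\mathbf{X}_d)$ are uniform enough (e.g. controlling negative moments of that sum uniformly in the anchor) to survive taking the outer expectation over $\mathbf{X}_d$ and $\mathbf{Y}_J$; the purely Gaussian/lognormal algebra, while lengthy, is routine by comparison.
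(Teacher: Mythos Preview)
Your proposal is correct and follows essentially the same route as the paper: Cauchy--Schwarz plus explicit Gaussian/non-central-chi-squared moment-generating-function computations for the first bound, the 1-Lipschitz property of $x\mapsto 1\wedge x$ together with the change-of-measure from \autoref{prop:dist_MTM} for the second, and then a direct pseudo-generator comparison (not a coupling per se) to piggyback on \autoref{theorem:scaling_limit_ideal}. Two small points to sharpen: the negative moments of $\tfrac{1}{N_d}\sum_i w(\mathbf{X}_d,\mathbf{Y}_i)$ are handled via the convexity inequality of \autoref{prop:bound_convex}, and the paper's change of measure for the acceptance-probability bound rewrites the expectation under the density $\pi_d(\mathbf{y}_1)\prod_{i\ge 2} q_{\sigma_d}(\mathbf{x}_d,\mathbf{y}_i)\prod_{i} q_{\sigma_d}(\mathbf{y}_1,\mathbf{z}_i)\,q_{\sigma_d}(\mathbf{y}_1,\mathbf{x}_d)$, which by the $\mathbf{x}_d\leftrightarrow\mathbf{y}_1$ symmetry makes the $\mathbf{Z}_i$-sum term reduce exactly to the first bound --- this is the clean way to resolve the ``uniform-in-anchor'' obstacle you flagged.
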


 \autoref{theorem:MTM_app_ideal} indicates that, if one wanted to use a larger step size with $\tau = 1/6$,  a sufficient condition for MTM with $w(\mathbf{x}_d, \mathbf{y}_d) = \sqrt{\pi_d(\mathbf{y}_d) / \pi_d(\mathbf{x}_d)}$ to approximates well its ideal counterpart is to scale $N_d$ exponentially with $d$.
However, using such an enormous number of candidates is computationally prohibitive.
When $\tau=1/2$, \autoref{theorem:MTM_app_ideal} indicates that scaling $N_d$ quadratically (essentially) with $d$ is sufficient for both MTM
 with $w(\mathbf{x}_d, \mathbf{y}_d) = \sqrt{\pi_d(\mathbf{y}_d) / \pi_d(\mathbf{x}_d)}$ and MTM with $w(\mathbf{x}_d, \mathbf{y}_d) = \pi_d(\mathbf{y}_d) / \pi_d(\mathbf{x}_d)$ to approximate well their ideal counterparts.
We note that these are only sufficient conditions, rather than necessary ones, and that the implied scalings may not be tight. In particular, taking $N_d$ less than quadratic in $d$ may be enough when $\tau=1/2$ and in general more research is needed to establish optimal ways of scaling $N_d$ with $d$.
Nonetheless, the result suggests that, in high dimensions, LB MTM schemes will struggle to approximate their ideal counterparts when an aggressive step size with $\tau = 1/6$ is used.
The numerical results in \autoref{sec:numerical} complement this analysis: they show that in moderately high dimensions and with a moderately large number of candidates, optimally tuned MTM with $w(\mathbf{x}_d, \mathbf{y}_d) = \sqrt{\pi_d(\mathbf{y}_d) / \pi_d(\mathbf{x}_d)}$ and optimally tuned MTM with $w(\mathbf{x}_d, \mathbf{y}_d) = \pi_d(\mathbf{y}_d) / \pi_d(\mathbf{x}_d)$ have similar performance beyond the burn-in period, showing that MTM with $w(\mathbf{x}_d, \mathbf{y}_d) = \sqrt{\pi_d(\mathbf{y}_d) / \pi_d(\mathbf{x}_d)}$ is not able to take advantage of a larger step size because with such a step size it does not approximate well the ideal scheme and does not have a good performance.
This is in contrast with the results of \autoref{sec:convergence} (e.g., Figures \ref{fig:Traceplots} and \ref{fig:Boxplots}) where it is shown that even $N_d \ll d$ is sufficient to yield a significant improvement in terms of convergence speed.

\subsection{Numerical experiments}\label{sec:numerical}

In this section, we evaluate the empirical performance in stationarity of GB MTM and LB MTM under an non-asymptotic framework; the mathematical objects like the target distribution, the scale parameter and the number of candidates are thus denoted without a subscript $d$, that is $\pi$, $\sigma$ and $N$. As previously, two LB MTM algorithms are evaluated: both use a weight function given by $w(\mathbf{x}, \mathbf{y}) = g(\pi(\mathbf{y}) / \pi(\mathbf{x}))$; one uses $g(x) = \sqrt{x}$, and the other one, $g(x) = x / (1 + x)$. The performance is evaluated using the Monte Carlo estimate of ESJD in stationarity, the latter being defined as
\[
 \text{ESJD} := \E\left[\|\bX_{\text{MTM}}(m + 1) - \bX_{\text{MTM}}(m)\|^2\right] = \E\left[\|\bY_J - \bX\|^2 \alpha(\bX, \bY_J)\right],
\]
where $\bX \sim \pi$ and $\bY_J$ is sampled using the MTM mechanism. The performance evaluation is conducted under the same scenario as previously: the target distribution is defined as in \eqref{eqn:target_product}, $d = 50$, and $q_{\sigma}(\bx, \cdot \,) = \mathcal{N}(\mathbf{x}, \sigma^2 \I_d)$ with $\sigma = \ell / \sqrt{d}$. The results have been observed to be similar when the target is still a normal distribution but with components having different marginal variances and in higher dimensions. Note that in the scenario considered here, ESJD can be approximated efficiently using independent Monte Carlo sampling. The approximations are based on samples of size 100,000.

In \autoref{fig:ESJD_N_ell_fixed}, we present results of ESJD as a function of $N$, for $\ell$ fixed, while in \autoref{fig:ESJD_ell}, results of ESJD as a function of $\ell$ are presented, for several values of $N$. The value of $\ell$ used in \autoref{fig:ESJD_N_ell_fixed} corresponds to that which maximizes ESJD when $N = 5$; the same value of $\ell = 3.20$ is optimal for all algorithms (at least according to our grid search). The results in Figures \ref{fig:ESJD_N_ell_fixed} and \ref{fig:ESJD_ell} unveil that when $N$ is small, all algorithms are essentially equivalent in stationarity, but they also unveil another problem with GB MTM. The algorithm behaviour changes drastically as $N$ increases, to the extend that a performance reduction may be observed while keeping $\ell$ fixed. This is counter-intuitive and may be confusing to users that may diminish the value of $\ell$ to compensate, while the opposite is desirable. That pathological behaviour is not exhibited by LB MTM algorithms, which behave as one would expect, with a performance that increases monotonically with $N$, while keeping $\ell$ fixed. The results in \autoref{fig:ESJD_ell} also allow to notice that if a user manages to optimally tune GB MTM, then it is not significantly outperformed in stationarity by LB MTM for reasonable values of $N$, as mentioned previously.

\begin{figure}[ht]
  \centering
  \includegraphics[width=0.45\textwidth]{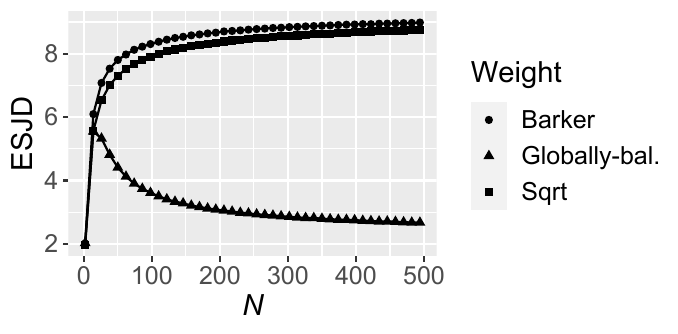}
  \vspace{-3mm}
  \caption{\small ESJD as a function of $N$ when $d = 50$, $\ell = 3.20$ for MTM with $w(\mathbf{x}, \mathbf{y}) = \pi(\mathbf{y}) / \pi(\mathbf{x})$, MTM with $w(\mathbf{x}, \mathbf{y}) = \sqrt{\pi(\mathbf{y}) / \pi(\mathbf{x})}$, and MTM with $w(\mathbf{x}, \mathbf{y}) = (\pi(\mathbf{y}) / \pi(\mathbf{x}))/(1+ \pi(\mathbf{y}) / \pi(\mathbf{x}))$.}\label{fig:ESJD_N_ell_fixed}
 \end{figure}
\normalsize

 \begin{figure}[ht]
  \centering\footnotesize
  $\begin{array}{ccc}
 \vspace{-2mm}\hspace{-2mm}\includegraphics[width=0.34\textwidth]{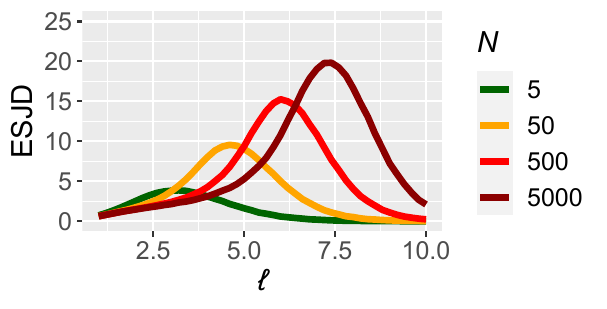} &  \hspace{-5mm} \includegraphics[width=0.34\textwidth]{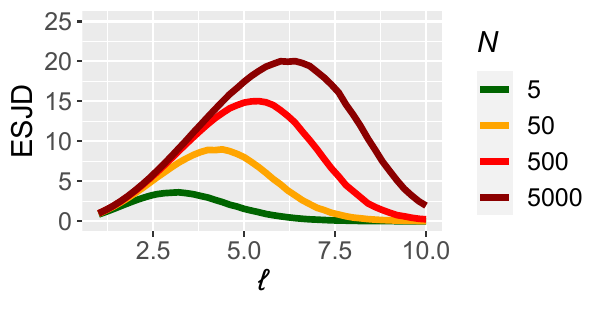} &  \hspace{-5mm} \includegraphics[width=0.34\textwidth]{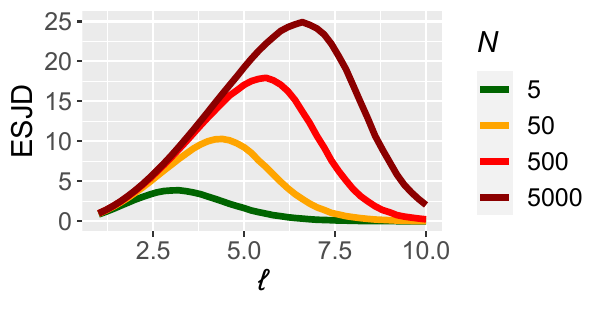} \cr
  \hspace{-2mm} \textbf{(a) MTM w. $w(\mathbf{x}, \mathbf{y}) = \frac{\pi(\mathbf{y})}{\pi(\mathbf{x})}$} & \hspace{-7mm} \textbf{(b) MTM w. $w(\mathbf{x}, \mathbf{y}) = \sqrt{\frac{\pi(\mathbf{y})}{\pi(\mathbf{x})}}$} & \hspace{-6mm} \textbf{(c) MTM w. $w(\mathbf{x}, \mathbf{y}) = \frac{\pi(\mathbf{y}) / \pi(\mathbf{x})}{1 + \pi(\mathbf{y}) / \pi(\mathbf{x})}$} \cr
  \end{array}$\vspace{-2mm}
  \caption{\small ESJD as a function of $\ell$ when $d = 50$, for several values of $N$ and: (a) MTM with $w(\mathbf{x}, \mathbf{y}) = \pi(\mathbf{y}) / \pi(\mathbf{x})$, (b) MTM with $w(\mathbf{x}, \mathbf{y}) = \sqrt{\pi(\mathbf{y}) / \pi(\mathbf{x})}$, and (c) MTM with $w(\mathbf{x}, \mathbf{y}) = (\pi(\mathbf{y}) / \pi(\mathbf{x}))/(1+ \pi(\mathbf{y}) / \pi(\mathbf{x}))$.}\label{fig:ESJD_ell}
 \end{figure}
\normalsize

\section{Application of MTM for Bayesian inference in immunotherapy}\label{sec:real_example}

We study in this section the application of MTM as an inference solution in a real-world context of immunotherapy in precision medicine involving a model with a likelihood function that is expensive to evaluate. The context is described in \autoref{sec:context_real_example}, the data and the model used to analyse them are presented in \autoref{sec:data_model_real_example}, and the application of MTM algorithms is studied in \autoref{sec:inference_real_example}. Our study suggests that the scope of the results (empirical and theoretical) of the previous sections about GB MTM and LB MTM extends beyond the simple contexts in which they were derived. Note that in this section, we adopt a notation which is consistent with typical Bayesian-statistics contexts; it is thus different from that adopted in the other sections.

\subsection{Context}\label{sec:context_real_example}

Precision medicine is an innovative approach of care whereby patients are subject to personalized treatment strategies that take into account their personal data (genetics, lifestyle, health history, etc.). In \cite{jenner2021silico}, the authors study the effect of such personalized treatments on advanced-stage cancer patients based on cancer vaccines and oncolytic immunotherapy. Oncolytic viruses such as vesicular stomatitis virus (VSV) and vaccinia virus (VV)  have the potential to destroy tumor cells and to induce a systemic anti-tumour immune response and, as such, can lead to what is commonly referred to as \textit{virotherapy}. At the moment, several related open questions form a very active strand of research. For instance, one would like to assess the potential benefits offered by combining different oncolytic viruses or the use of virus enhancers. Even though it is not restricted to cancer treatments, being able to find the optimal treatment schedule for a given patient is a central question in precision medicine. It is believed that, provided that these questions can be addressed, oncolytic virotherapy will form a major breakthrough in cancer treatment. Statistical modelling and Bayesian inference represent a way to help answering those questions. The reliability and efficiency of the numerical methods leading to the inference is thus of crucial importance. In \autoref{sec:inference_real_example}, we evaluate the performance of different MTM algorithms. 

\subsection{Data and model}\label{sec:data_model_real_example}

In practice, data are collected from a cohort of $K$ advanced-stage cancer patients that are examined at $T$ time points. At each time point $t\in\{1,\ldots,T\}$, the state of patient $k\in\{1,\ldots,K\}$ is summarized through statistics $\by_k(t) \in \R^{m_1}$, with $m_1\in\N$. These statistics represent the variables of interest. Covariate data points $\bx_k \in \R^{m_2}$ with $m_2\in\N$, independent of time but associated to a patient, are also collected. Each patient is assigned a personalized treatment schedule $r_k$ which is considered as a data point from a categorical variable. The covariate data points and the personalized treatment schedules are considered to be fixed and known, in contrast to $\by_k(t)$ which is assumed in a statistical model to be a realization of a random variable. The assumed statistical model is a \textit{forward model}:
\begin{equation}\label{eqn:model}
\bY_{k}(t)=\hat{\by}(t, \btheta, \bx_k, r_k) + \sigma \, \bvarepsilon_{k}(t)\,,
\end{equation}
 where $\hat{\by}(t, \btheta, \bx_k, r_k)$ is the output of a dynamical system proposed in \cite{jenner2021silico} and described below, $\sigma > 0$ is a scale parameter, and $\bvarepsilon_{1}(1), \ldots, \bvarepsilon_{1}(T), \ldots, \bvarepsilon_{K}(1), \ldots, \bvarepsilon_{K}(T)  \in \R^{m_1}$ are random standardized errors. The scale parameter is considered here fixed and known to simplify. The unknown parameter is $\btheta \in \R^d$ with $d = 14$. We assume that $\bvarepsilon_{k}(t) \sim  \mathcal{N}(\mathbf{0}, \I_{m_1})$, $t = 1, \ldots, T$ and $k = 1, \ldots, K$, are IID random variables independent of $\btheta$, which is a common assumption in the literature.

 The output $\hat{\by}(t, \btheta, \bx_k, r_k)$ is produced given $(t, \btheta, \bx_k, r_k)$ from a discretized version of the numerical solution of a system of time-delay differential equations (DDE):
\begin{equation}\label{eq:dynamical_system}
  \frac{\d \by_k(t)}{\d t}=\Phi_{\btheta}(t, \by_k(t), \by_k^-(t), \bx_k, r_k)\,,
\end{equation}
where $\Phi_{\btheta}$ is a differential operator, parameterized by $\btheta$. The notation $\by_k^-(t)$ refers to $\{\by_k(t')\,,t'<t\}$. Time-delay differential equations are commonly used to model dynamical systems of interest in fields such as epidemiology and demography. More details on the operator $\Phi_{\btheta}$ used can be found in \cite{jenner2021silico}. In particular, the parameter $\btheta$ (see Table TS3 in \cite{jenner2021silico}) consists essentially of a logarithmic transformation of biological rates such as rates at which certain cell cycles occur, infection rates of VV and VSV, etc.  It is important to stress that \eqref{eq:dynamical_system} cannot be solved exactly but that a computationally-intensive numerical solver (which requires solving a system of intermediate ordinary differential equations obtained by the so-called \textit{linear chain technique}) exists and is made available in \cite{jenner2021silico}. This solver needs to run at each evaluation of the likelihood function.

The work of \cite{jenner2021silico} is in a context of optimization of treatment schedule, and is based on a virtual cohort. The virtual patients are created from: 1) simulated covariate data points $\bx_k$ with summary statistics similar to real cohorts, 2) a treatment schedule $r_k$ that is assigned to each virtual patient, and 3) outputs $\hat{\by}(t, \btheta, \bx_k, r_k)$ resulting from the numerical solution of \eqref{eq:dynamical_system}. The parameter $\btheta$ used to produce the outputs $\hat{\by}(t, \btheta, \bx_k, r_k)$ is set to a value $\btheta^\ast$ based on expert opinion. Here the problem that we consider is that of numerical estimation of the unknown parameter $\btheta$ in the model \eqref{eqn:model} based on a data set.

Our data set has been simulated from a virtual cohort (generated as in \cite{jenner2021silico}) that is then fed into the model defined in \eqref{eqn:model}. The simulated data set consists of a virtual cohort of $K=10$ patients, where each patient is examined once per week for $T=20$ weeks. At each examination, $m_1=4$ statistics are measured; they are described in \autoref{sec5:fig1}. These data were simulated using the expert opinion ${\btheta^\ast}$. Having knowledge about the parameter (that is considered unknown in the Bayesian analysis) helps to evaluate the reliability of the different MTM algorithms.

 \begin{figure}[ht]
  \centering
  $\begin{array}{cc}
 \vspace{-2mm}\hspace{-5mm}\includegraphics[scale=.45]{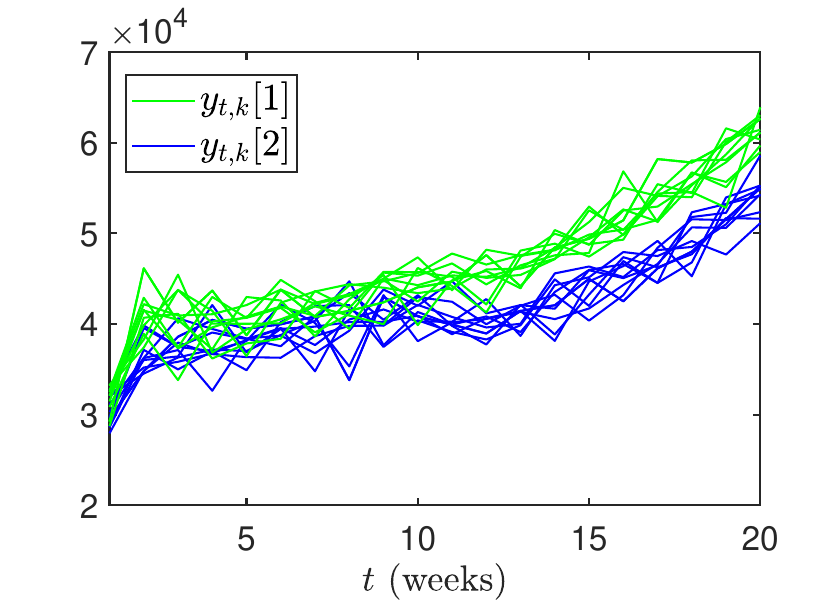}  &  \hspace{-4mm} \includegraphics[scale=.45]{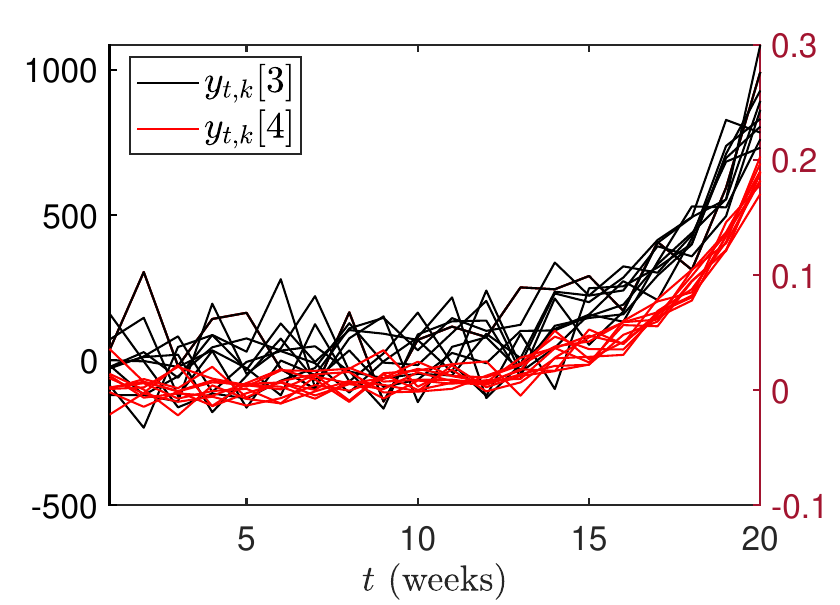} \cr
\end{array}$
\caption{Data $\{\by_k(t)\} := \{y_{t,k}[1], y_{t,k}[2], y_{t,k}[3], y_{t,k}[4]\}$ for the cohort, simulated using \eqref{eqn:model}; $y_{t,k}[1]$ is related to the number of quiescent tumor cells at time $t$ for individual $k$, $y_{t,k}[2]$ is related to the number of G1-phase tumour cell population, $y_{t,k}[3]$ is related to the total infected cell population, $y_{t,k}[4]$ is related to the total virus load. \label{sec5:fig1}}
\end{figure}

\subsection{Application study of MTM for Bayesian inference}\label{sec:inference_real_example}

The goal in practice is to perform Bayesian inference for $\btheta$, given the data related to the cohort of patients. We here discuss how one can proceed and present a study of the reliability of MTM algorithms.

A non-informative truncated Gaussian prior distribution is assigned to $\btheta$. The truncation stems from that the DDE solver is numerically unstable when the parameter value is beyond a compact set $\boldsymbol\Theta$, which includes $\btheta^\ast$. It can be readily checked that the posterior distribution verifies
\small
$$
\pi(\btheta\,|\,\{\by_{k}(t)\})\propto \bar\pi(\btheta\,|\,\{\by_{k}(t)\}):=\exp\left\{-\frac{1}{2\sigma^2}\sum_{t,k}(\by_{k}(t) - \hat{\by}(t, \btheta, \bx_k, r_k))^2-\frac{1}{2}\|\btheta-\boldsymbol\mu\|^2\right\}\1_{\btheta \in \boldsymbol\Theta}\,,
$$
\normalsize
where $\bar{\pi}$ is the unnormalized version of the posterior density and $\boldsymbol\mu\in\R^d$ is a prior hyperparameter. Because of the terms $\hat{\by}(t, \btheta, \bx_k, r_k)$, posterior expectations cannot be derived in closed form for non-trivial observables and IID sampling from $\pi$ is virtually impossible. Moreover, assuming that $\pi$ is differentiable, the gradient of $\pi$ is not available explicitly and is computationally expensive to approximate. This makes standard gradient-based methods, such as MALA or Hamiltonian Monte Carlo, computationally intensive and not straightforward to implement. Also, the posterior distribution exhibits an irregular and complex behaviour; see, e.g., the strong correlations and truncation effect in \autoref{sec5:fig2}. This makes posterior-approximation methods requiring a tractable approximation to the posterior distribution, such as importance sampling, independent MH, as well as default variational methods, highly non-trivial to apply successfully.

To perform Bayesian inference in such a situation, one may thus naturally turn to a random-walk Metropolis algorithm. If parallel computing is available, LB MTM with $w(\mathbf{x}, \mathbf{y}) = \sqrt{\pi(\mathbf{y}) / \pi(\mathbf{x})}$ is an appealing alternative as it can exploit parallel computing to speed up convergence. The non-trivial shape of the posterior distribution provides an interesting test case for the results derived in the previous sections with regular and isotropic target distributions. Here, we compare MH with LB MTM and GB MTM using different values of $N$. All samplers use the proposal $q_\sigma(\bx, \cdot\,) = \mathcal{N}(\bx, \sigma^2 \I_d)$, with $\sigma = \ell / \sqrt{d}$, and the parameter $\ell$ is adaptively tuned as in \autoref{sec:conv_numerics}.

\autoref{sec5:fig3} shows the convergence to stationarity of the different algorithms, monitored through the log-posterior-density value (up to a normalizing constant). The right panel of \autoref{sec5:fig3} displays the evolution of the step-size parameter  $\{\ell_m\}$ across MCMC iterations. We clearly observe the pathology of GB MTM described previously, with a performance that deteriorates as $N$ increases. In particular, it can be seen that GB MTM requires extremely small step sizes to navigate the low density regions (see \autoref{sec5:fig3}, right panel).

LB MTM instead exhibits a convergence speed that increases with $N$, as one expects. In particular, in this example, LB MTM with $N = 20$ converges significantly faster compared to $N=1$, roughly by a factor of $7$ based on a quantitative comparison of the trace plots in \autoref{sec5:fig3}, left panel. With our parallel implementation\footnote{In our experiment, we used a desktop computer with 32 cores (thus larger than the number of candidates $N$), AMD Ryzen 9 5950X processor,  Alma Linux 8.5 operating system, 64 GB of RAM, and off-the-shelf high-level MATLAB parallelization.}, the computational cost per iteration of MTM with $N=20$ is about 3.5 times higher than MH, resulting in an effective reduction of wall-clock time required for burn-in by a factor of roughly $2$. Note that the actual improvement ratio can depend heavily on the model and parallel implementation used, with the improvement typically larger for higher-dimensional problems.

\begin{figure}
\centering
\includegraphics[width=0.75\linewidth]{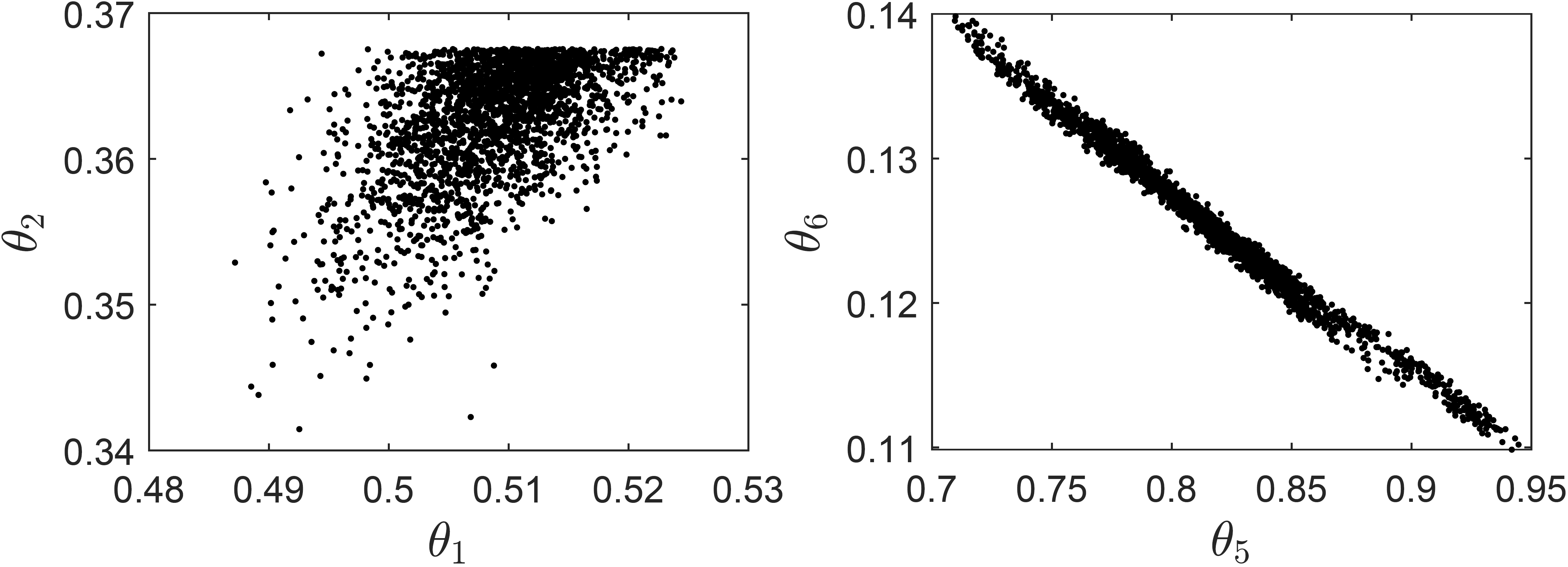}
\vspace{-2mm}
\caption{Pairwise marginal samples from $\pi$. \label{sec5:fig2}}
\end{figure}

\begin{figure}
  \centering
  $\begin{array}{cc}
 \vspace{-2mm}\hspace{-2mm}\includegraphics[width=.52\linewidth]{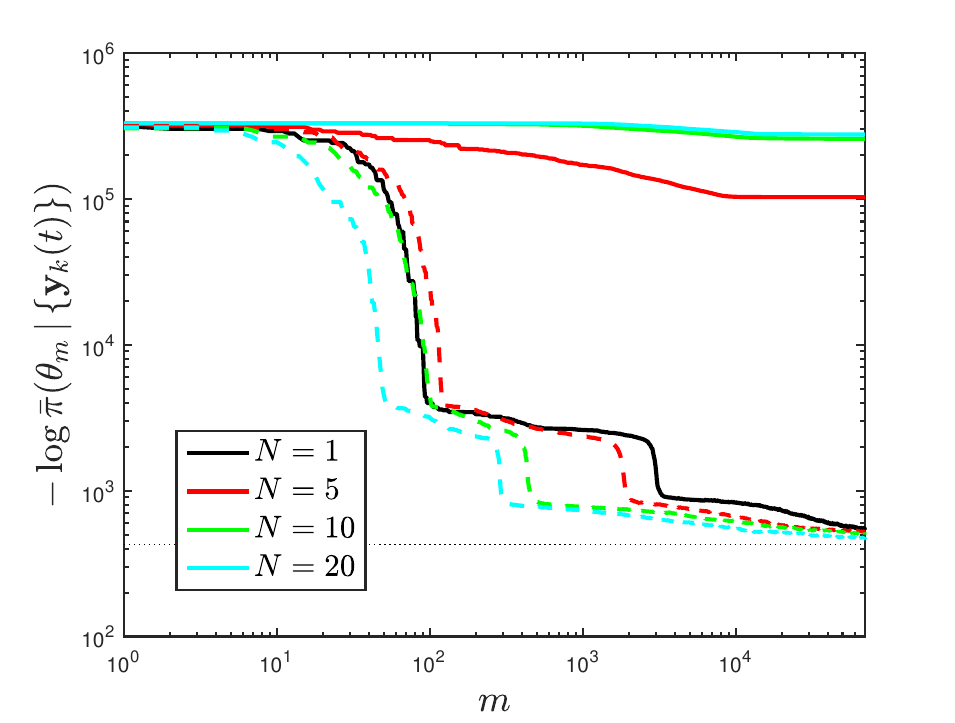}  &  \hspace{-4mm} \includegraphics[width=.48\linewidth]{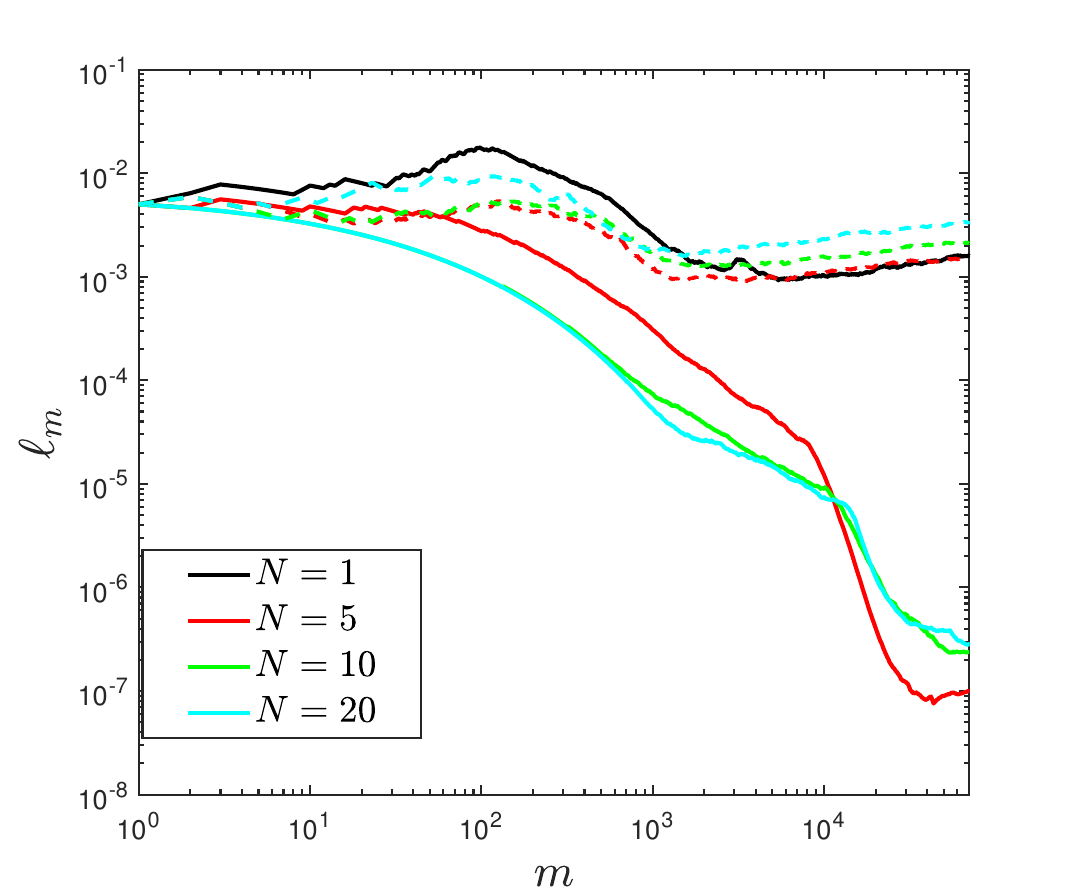} \cr
\end{array}$
\caption{Convergence of GB MTM (solid lines) and LB MTM (dashed lines), with a log-scale on the $x$-axis; all chains are started from the same state $\btheta_0$ belonging to the tails of $\pi$; left panel: trace plots of $-\log\bar\pi(\btheta_m\,|\,\{\by_{k}(t)\})$ indicating the progress towards the high-probability region (designated by the dotted black line); right panel: adaptation of the step-size parameter $\ell$ as the algorithms progress.}\label{sec5:fig3}
\end{figure}

\section{Discussion}\label{sec:discussion}

In this paper, we revisited the promises and pitfalls of a popular MCMC method, namely MTM, through several new theoretical and empirical results. We proposed to use a novel class of weight functions, based on the so-called locally-balanced proposal distributions, that can be employed instead of the classical GB weight function without impacting neither the computational cost nor the coding complexity of MTM. The resulting LB MTM scheme is remarkably and positively different compared to the GB MTM one, especially regarding the behaviour of the induced Markov chains during the convergence phase. This difference is associated with substantially reduced burn-in time and an element of stability that, together with an easier tuning of the method, make LB MTM an appealing and competitive algorithm for high-dimensional Bayesian inference problems for which nothing except the unnormalized posterior density is known and when the latter is computationally expensive to evaluate, which makes in-step parallel computing beneficial.

Part of the research effort conducted in this paper can also be cast as an attempt to generalize the use of LB samplers beyond the discrete-state-space scenario. In particular, LB MTM is similar in spirit to exact-approximation schemes, in the sense of \cite{andrieu2016establishing}, given that it can be thought of as approximations of ideal LB samplers that use a noisy version of the acceptance probability without affecting the invariant distribution. Further work in this direction is needed to explore these connections to strands of literature. Also, the results in this paper motivate further non-asymptotic (in $N$ and $d$) theoretical analyses to identify other quantitative or qualitative differences between LB MTM algorithms and their GB counterpart. The rich literature on non-asymptotic analysis of MH exact-approximations (see, e.g., \cite{andrieu2016establishing}, \cite{andrieu2020metropolis}, and \cite{wang2022theoretical}) may prove useful in this direction.

While we have focused mainly on MTM and the impact of the weight function, it would be interesting to provide a systematic comparison of LB MTM with other multiple-proposal MCMC schemes (e.g., \cite{neal2003markov,tjelmeland2004using,frenkel2004speed,calderhead2014general,holbrook2022generating} and references therein) and more generally to other approaches that could in principle exploit in-step parallelization to speed-up convergence of MCMC.
One such approach would be to exploit parallel target-density evaluations to derive numerical (e.g.\ finite-difference) approximations to the gradient, $\nabla\log\pi$, and then employ off-the-shelf gradient-based MCMC methods.
We leave the exploration of such comparisons for future work, noting that it is likely that the optimal scheme among the ones mentioned above will depend on features of the target distribution (such as dimensionality, smoothness and regularity), and on the parallel-computing environment available (see, e.g., \citet{glatt2022parallel} for examples of GPU implementations of multiple-proposal MCMC methods with large values of $N$)
as well as potentially other aspects.


\section*{Acknowledgments}

Philippe Gagnon acknowledges support from NSERC (Natural Sciences and Engineering Research Council of Canada) and FRQNT (Fonds de recherche du Québec -- Nature et technologies). Florian Maire acknowledges support from NSERC. Giacomo Zanella acknowledges support from the European Research Council (ERC), through StG ``PrSc-HDBayLe'' grant ID 101076564.
Also, the authors thank three anonymous referees for helpful suggestions that led to an improved manuscript. The authors additionally thank Morgan Craig, professor in the Department of Mathematics and Statistics of Universit\'{e} de Montr\'{e}al, for an introduction to the use of dynamical systems in precision medicine and for computer code to numerically solve the system of intermediate ordinary differential equations in the application of \autoref{sec:real_example}.

\bibliographystyle{rss}
\bibliography{references}

\appendix

\section{Proofs}\label{sec:proofs}

\begin{proof}[Proof of \autoref{prop:dist_MTM}]
 First, let us look at
  \[
  \P_{\bx}(J = j \mid \bY_1, \ldots, \bY_N) = \E_{\bx}[\1_{J = j} \mid \bY_1, \ldots, \bY_N] = \frac{w(\bx, \bY_j)}{\sum_{i = 1}^{N} w(\bx, \bY_i)}.
 \]
  Using this and that $\bY_1, \ldots, \bY_N$ are conditionally IID given $\bx$, we have that
 \begin{align*}
  \E_{\bx}[h(\mathbf{Y}_{J})] = \sum_{j = 1}^N \E_{\bx}[h(\bY_j) \, \1_{J = j}] &= \sum_{j = 1}^N \E_{\bx}[h(\bY_j) \, \E[\1_{J = j}\mid \bY_1, \ldots, \bY_N]] \cr
  &= \sum_{j = 1}^N \E_{\bx}\left[h(\bY_j) \, \frac{w(\bx, \bY_j)}{\sum_{i = 1}^{N} w(\bx, \bY_i)}\right] \cr
  &= \sum_{j = 1}^N \int h(\by_{j}) \, \frac{w(\bx, \by_j)}{\sum_{i = 1}^{N} w(\bx, \by_i)} \prod_{i=1}^{N} q_{\sigma_d}(\bx, \by_i) \, \d\by_{1:N} \cr
  &= \int h(\by_{1}) \, \frac{w(\bx, \by_1)}{\frac{1}{N}\sum_{i = 1}^{N} w(\bx, \by_i)} \prod_{i=1}^{N} q_{\sigma_d}(\bx, \by_i) \, \d\by_{1:N}.
 \end{align*}
\end{proof}

\subsection{Proof of \autoref{theorem:weak_con_MTM_d_fixed}}

In order to prove \autoref{theorem:weak_con_MTM_d_fixed}, we provide a general convergence result.

\begin{Theorem}\label{theorem:weak_con_gen}
 Let $\{P_N: N \geq 1\}$ be a collection of Markov transition kernels and $P$ a Markov transition kernel such that $P_N$ (for any $N$) and $P$ admit $\pi$ as invariant distribution. If the following assumptions hold:
   \begin{description}

    \item[(a)] the Markov transition kernels $\{P_N: N \geq 1\}$ satisfy
    \[
     \int |P_Nh(\mathbf{x}) - Ph(\mathbf{x})| \, \pi(\mathbf{x}) \, \d\mathbf{x} \rightarrow 0
    \]
    as $N \rightarrow \infty$ for all $h \in \mathcal{C}_{\text{b}}$, where
    \[
        P_Nh(\mathbf{x}) := \int P_N(\mathbf{x}, \d\mathbf{y}) \, h(\mathbf{y}),
       \]
       with an analogous definition for $Ph$, $\mathcal{C}_{\text{b}}$ being the space of bounded continuous functions;
    \item[(b)] $Ph$ is continuous for any $h \in \mathcal{C}_{\text{b}}$,
   \end{description}
   then $\{\mathbf{X}_N(m): m \in \N\}$ converges weakly to $\{\mathbf{X}(m): m \in \N\}$ provided that $\mathbf{X}_N(0) \sim \pi$ and $\mathbf{X}(0) \sim \pi$, $\{\mathbf{X}_N(m): m \in \N\}$ and $\{\mathbf{X}(m): m \in \N\}$ being the Markov chains with transition kernels $P_N$ and $P$, respectively.
\end{Theorem}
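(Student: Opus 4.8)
The plan is to reduce the stated weak convergence of processes to the convergence of all finite-dimensional distributions, and to establish the latter by a backward induction over the time index. For $\N$-indexed Markov chains, it is enough to show that, for every $k \in \N$ and all bounded continuous $h_0, \dots, h_k$, one has $\E[\prod_{j=0}^{k} h_j(\bX_N(j))] \to \E[\prod_{j=0}^{k} h_j(\bX(j))]$ as $N \to \infty$. First I would use the Markov property together with $\bX_N(0) \sim \pi$ to write the left-hand side as $\pi(\phi^{(N)}_0)$, where the functions $\phi^{(N)}_j$ are built by the backward recursion $\phi^{(N)}_k := h_k$ and $\phi^{(N)}_j := h_j \cdot (P_N \phi^{(N)}_{j+1})$ for $j < k$; defining $\phi_j$ in the same way with $P$ in place of $P_N$ gives $\E[\prod_{j=0}^{k} h_j(\bX(j))] = \pi(\phi_0)$. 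A preliminary backward induction, using $\|Ph\|_\infty \le \|h\|_\infty$ and, by assumption (b), $Ph \in \mathcal{C}_{\text{b}}$ whenever $h \in \mathcal{C}_{\text{b}}$, shows each $\phi_j$ lies in $\mathcal{C}_{\text{b}}$; likewise $\|\phi^{(N)}_j\|_\infty \le \prod_{i=j}^{k}\|h_i\|_\infty$ uniformly in $N$.

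The core of the argument is then to prove, by backward induction on $j$, that $\|\phi^{(N)}_j - \phi_j\|_{L^1(\pi)} \to 0$. The base case $j = k$ is trivial. For the inductive step I would split
\[
 P_N \phi^{(N)}_{j+1} - P\phi_{j+1} = P_N\bigl(\phi^{(N)}_{j+1} - \phi_{j+1}\bigr) + \bigl(P_N\phi_{j+1} - P\phi_{j+1}\bigr),
\]
integrate the absolute value against $\pi$, and bound the two terms separately. The first is at most $\int P_N|\phi^{(N)}_{j+1} - \phi_{j+1}| \, \d\pi = \|\phi^{(N)}_{j+1} - \phi_{j+1}\|_{L^1(\pi)}$, using that $\pi$ is invariant for $P_N$ (which holds here since the MTM and ideal kernels are $\pi$-reversible), so it vanishes by the inductive hypothesis; the second tends to $0$ by assumption (a), applied to the \emph{fixed} function $\phi_{j+1} \in \mathcal{C}_{\text{b}}$. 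Multiplying through by the bounded factor $h_j$ gives $\|\phi^{(N)}_j - \phi_j\|_{L^1(\pi)} \le \|h_j\|_\infty \, \|P_N\phi^{(N)}_{j+1} - P\phi_{j+1}\|_{L^1(\pi)} \to 0$. Taking $j = 0$ then yields $|\pi(\phi^{(N)}_0) - \pi(\phi_0)| \le \|\phi^{(N)}_0 - \phi_0\|_{L^1(\pi)} \to 0$, which is exactly the desired convergence of finite-dimensional distributions.

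The main obstacle to be careful about is that assumption (a) controls $P_N h - Ph$ only for a \emph{single}, $N$-independent test function $h$, whereas the recursion naturally produces functions $\phi^{(N)}_{j+1}$ that vary with $N$. The decomposition above is designed precisely to peel off an ``$N$-dependent'' part, handled not by (a) but by the $L^1(\pi)$-contractivity of $P_N$ (i.e.\ stationarity of $\pi$) combined with the inductive hypothesis, from an ``$N$-free'' part to which (a) applies verbatim; this is why the induction must be run in $L^1(\pi)$ rather than pointwise or uniformly. I would record beforehand the routine facts that $\pi$ is $P_N$-invariant and that each $\phi_j$ is continuous, so that assumption (b) is invoked only finitely many times and no uniformity in $j$ is needed. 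Once \autoref{theorem:weak_con_gen} is in place, \autoref{theorem:weak_con_MTM_d_fixed} follows by taking $P_N$ to be the MTM kernel and $P$ the ideal kernel and verifying the two hypotheses: condition (a) there is obtained from the proposal-distribution convergence of Part~1 together with a law-of-large-numbers argument for the acceptance ratios under the fourth-moment conditions $\E[w(\bX,\bY_1)^{4}] < \infty$ and $\E[w(\bX,\bY_1)^{-4}] < \infty$, while condition (b) follows from the continuity and domination conditions imposed on $Q_{w,\sigma}$ via dominated convergence.
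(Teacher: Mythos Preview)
Your proof is correct and follows essentially the same strategy as the paper's: reduce to convergence of finite-dimensional distributions, use induction together with the Markov property, and split the resulting difference into a piece handled by assumption (a) and a piece handled by the inductive hypothesis, with assumption (b) guaranteeing the relevant test functions stay in $\mathcal{C}_{\text{b}}$. The only differences are organizational---the paper runs a forward induction on the time horizon $k$ (absorbing $Pf_{k+1}$ into the last test function $f_k$), while you run a backward induction via the recursion $\phi_j = h_j\cdot P\phi_{j+1}$ and work in the $L^1(\pi)$ norm---and you make explicit the $\pi$-invariance of $P_N$ (needed to get $\bX_N(k)\sim\pi$), which the paper uses implicitly.
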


We present a proof here for self-containedness, but do not claim the originality of the result. The proof is strongly inspired by that of Theorem 2 in \cite{Schmon2020}.

\begin{proof}
 To prove that $\{\mathbf{X}_N(m): m \in \N\}$ converges weakly to $\{\mathbf{X}(m): m \in \N\}$, we only need to prove the convergence of finite-dimensional distributions \citep{karr1975weak}. It thus suffices to prove that for any positive integer $k$
 \[
  |\E[f_0(\mathbf{X}_N(0)) \ldots f_k(\mathbf{X}_N(k))] -  \E[f_0(\mathbf{X}(0)) \ldots f_k(\mathbf{X}(k))]| \rightarrow 0,
 \]
 as $N \rightarrow \infty$, for all $f_0, \ldots, f_k \in \mathcal{C}_{\text{b}}$ given that $(\bx(0), \ldots, \bx(k)) \mapsto \prod_{i=0}^k f_i(\bx(i))$  is measure determining by Proposition 4.6 in Chapter 2 of \cite{ethier1986markov}.

 We prove this by induction. For $k = 0$, the convergence is trivial because $\mathbf{X}_N(0) \sim \pi$ and $\mathbf{X} (0) \sim \pi$. Now assume that it is true for $k \geq 0$, and let us verify that it is true for $k + 1$.
 \begin{align*}
  &|\E[f_0(\mathbf{X}_N(0)) \ldots f_k(\mathbf{X}_N(k))f_{k + 1}(\mathbf{X}_N(k + 1))] -  \E[f_0(\mathbf{X}(0)) \ldots f_k(\mathbf{X}(k)) f_{k + 1}(\mathbf{X}(k + 1))]| \cr
  &=|\E[f_0(\mathbf{X}_N(0)) \ldots f_k(\mathbf{X}_N(k))P_Nf_{k + 1}(\mathbf{X}_N(k))] -  \E[f_0(\mathbf{X}(0)) \ldots f_k(\mathbf{X}(k)) Pf_{k + 1}(\mathbf{X}(k))]| \cr
  &\leq |\E[f_0(\mathbf{X}_N(0)) \ldots f_k(\mathbf{X}_N(k))P_Nf_{k + 1}(\mathbf{X}_N(k)) - f_0(\mathbf{X}_N(0)) \ldots f_k(\mathbf{X}_N(k))Pf_{k + 1}(\mathbf{X}_N(k))]| \cr
  &\quad + |\E[f_0(\mathbf{X}_N(0)) \ldots f_k(\mathbf{X}_N(k))Pf_{k + 1}(\mathbf{X}_N(k))] - \E[f_0(\mathbf{X}(0)) \ldots f_k(\mathbf{X}(k)) Pf_{k + 1}(\mathbf{X}(k))]|  \cr
  &\leq  M^k \E[|P_Nf_{k + 1}(\mathbf{X}_N(k)) - Pf_{k + 1}(\mathbf{X}_N(k))|] \cr
  &\quad + |\E[f_0(\mathbf{X}_N(0)) \ldots f_k(\mathbf{X}_N(k))Pf_{k + 1}(\mathbf{X}_N(k))] - \E[f_0(\mathbf{X}(0)) \ldots f_k(\mathbf{X}(k)) Pf_{k + 1}(\mathbf{X}(k))]|,
 \end{align*}
 using that there exists a positive constant $M$ such that $f_i \leq M$ for all $i$. The term on the penultimate line vanishes as a consequence of Assumption (a). That on the last line vanishes because $Pf_{k+1}$ is bounded and continuous by Assumption (b).
\end{proof}

Before presenting the proof of \autoref{theorem:weak_con_MTM_d_fixed}, we present a result that will be used in it and in other proofs.

\begin{Proposition}{\citep[Corollary 1.5.24]{muller2002comparison}}\label{prop:bound_convex}
 For any $N \geq 2$ exchangeable random variables $X_1, \ldots, X_N$ and any convex function $\phi$, we have
 \[
  \E\left[\phi\left(\frac{1}{N} \sum_{i=1}^N X_i\right)\right] \leq \E\left[\phi\left(\frac{1}{N-1} \sum_{i=1}^{N-1} X_i\right)\right],
 \]
 whenever the expectations exist.
\end{Proposition}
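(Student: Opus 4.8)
The plan is to exhibit the $N$-term average as an average of leave-one-out $(N-1)$-term averages and then combine Jensen's inequality with exchangeability. First I would define, for each $j \in \{1, \ldots, N\}$, the random variable $Y_j := \frac{1}{N-1} \sum_{i \neq j} X_i$, and verify the elementary identity $\frac{1}{N} \sum_{j=1}^N Y_j = \frac{1}{N} \sum_{i=1}^N X_i$. This holds because each index $i$ belongs to exactly $N-1$ of the sets $\{1, \ldots, N\} \setminus \{j\}$, so $\sum_{j=1}^N \sum_{i \neq j} X_i = (N-1) \sum_{i=1}^N X_i$, and dividing by $N(N-1)$ gives the claim.

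Next I would apply convexity of $\phi$ to the finite average $\frac{1}{N} \sum_{j=1}^N Y_j$, which yields the pointwise bound $\phi\left(\frac{1}{N} \sum_{j=1}^N Y_j\right) \leq \frac{1}{N} \sum_{j=1}^N \phi(Y_j)$. Taking expectations and using the identity from the previous step gives $\E\left[\phi\left(\frac{1}{N} \sum_{i=1}^N X_i\right)\right] \leq \frac{1}{N} \sum_{j=1}^N \E[\phi(Y_j)]$.

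Finally I would invoke exchangeability: for each $j$, the subvector $(X_i)_{i \neq j}$ has the same distribution as $(X_1, \ldots, X_{N-1})$, hence $\E[\phi(Y_j)] = \E\left[\phi\left(\frac{1}{N-1} \sum_{i=1}^{N-1} X_i\right)\right]$ for every $j$. Substituting this into the previous display collapses the right-hand side to $\E\left[\phi\left(\frac{1}{N-1} \sum_{i=1}^{N-1} X_i\right)\right]$, which is exactly the asserted inequality. The only delicate points — and the closest thing to an obstacle — are bookkeeping: confirming the combinatorial identity that lets the leave-one-out averages reconstruct the full average, and checking that each $\E[\phi(Y_j)]$ is well defined so that the manipulations above are legitimate, which follows from the assumed existence of $\E[\phi(\frac{1}{N-1} \sum_{i=1}^{N-1} X_i)]$ together with exchangeability.
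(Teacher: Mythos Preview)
Your argument is correct: the leave-one-out identity $\frac{1}{N}\sum_{i=1}^N X_i = \frac{1}{N}\sum_{j=1}^N Y_j$ with $Y_j = \frac{1}{N-1}\sum_{i\neq j} X_i$, followed by Jensen's inequality and exchangeability, is exactly the standard proof of this inequality. The paper does not actually prove this proposition; it is stated with a citation to \cite{muller2002comparison} (Corollary 1.5.24) and used as a black box in the proofs of Theorems~\ref{theorem:weak_con_MTM_d_fixed} and~\ref{theorem:MTM_app_ideal}, so your self-contained argument goes beyond what the paper supplies.
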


\begin{proof}[Proof of \autoref{theorem:weak_con_MTM_d_fixed}]
 We start with Result 1. The strategy to prove the result is the same as that to prove Theorem 2.1.1 in \cite{huggins2014information}. The probability that $\mathbf{Y}_J$ belongs to a set $A$, for fixed $\mathbf{x}$, as a function of $N$ is
 \[
  Q_{w, \sigma}^N(\mathbf{x}, A) := \P_{\mathbf{x}, N}(\mathbf{Y}_J \in A) = \E_{\bx}\left[\1_{\bY_1 \in A} \, \frac{w(\mathbf{x}, \mathbf{Y}_1)}{\frac{1}{N} \sum_{i = 1}^N w(\mathbf{x}, \mathbf{Y}_i)}\right],
 \]
using \autoref{prop:dist_MTM}. To simplify the notation for this part of the proof, we omit the dependence on $w, \sigma$ and $\bx$ that are fixed throughout, and use $Q^N(A) := Q_{w, \sigma}^N(\mathbf{x}, A)$ and $Q(A) := Q_{w, \sigma}(\mathbf{x}, A)$.

We have that
\begin{align*}
  Q^N(A) &= \E_{\bx}\left[\1_{\bY_1 \in A} \, w(\mathbf{x}, \mathbf{Y}_1) \, \E_{\bx}\left[\frac{1}{\frac{1}{N} \sum_{i = 1}^N w(\mathbf{x}, \mathbf{Y}_i)} \mid \bY_1\right]\right].
\end{align*}
With
\[
 h(\bY_1) := \E_{\bx}\left[\frac{1}{\frac{1}{N} \sum_{i = 1}^N w(\mathbf{x}, \mathbf{Y}_i)} \mid \bY_1\right],
\]
we have that
\begin{align*}
Q^N(A) = \E_{\bx}\left[\1_{\bY_1 \in A} \, w(\mathbf{x}, \mathbf{Y}_1) \, h(\bY_1)\right] &= \int_A w(\mathbf{x}, \mathbf{y}_1) \, h(\by_1) \, q_\sigma(\bx, \by_1) \, \d\by_1 \cr
&= \int_A \E_{\bx}[w(\mathbf{x}, \mathbf{Y}_1)] \, h(\by_1) \, Q_{w,\sigma}(\bx, \by_1) \, \d\by_1.
\end{align*}
This implies that we have an expression for the following Radon--Nikodym derivative:
\begin{align*}
 \frac{\d Q^N}{\d Q}(\by_1) &= \E_{\bx}[w(\mathbf{x}, \mathbf{Y}_1)] \E_{\bx}\left[\frac{1}{\frac{1}{N} \sum_{i = 1}^N w(\mathbf{x}, \mathbf{Y}_i)} \mid \bY_1 = \by_1\right] \cr
  &\geq \frac{ \E_{\bx}[w(\mathbf{x}, \mathbf{Y}_1)]}{\frac{1}{N} \sum_{i = 1}^N \E_{\bx}[w(\mathbf{x}, \mathbf{Y}_i)\mid \bY_1 = \by_1]} \cr
 &= \frac{ N\E_{\bx}[w(\mathbf{x}, \mathbf{Y}_1)]}{(N-1) \E_{\bx}[w(\mathbf{x}, \mathbf{Y}_1)] + w(\mathbf{x}, \mathbf{y}_1)},
\end{align*}
using Jensen's inequality.

To prove Result 1, we use that the total variation between $Q^N(A)$ and $Q(A)$ can be bounded above using the Kullback--Leibler divergence, and more precisely, by
\[
 \sqrt{\frac{1}{2} \int Q(\by_1) \, \d\by_1 \log \frac{\d Q}{\d Q^N}(\by_1)}.
\]
We have that
\begin{align*}
 \int Q(\by_1) \, \d\by_1 \log \frac{\d Q(\by_1)}{\d Q^N(\by_1)} &\leq \int Q(\by_1) \, \d\by_1 \log \frac{(N-1) \E_{\bx}[w(\mathbf{x}, \mathbf{Y}_1)] + w(\mathbf{x}, \mathbf{y}_1)}{ N\E_{\bx}[w(\mathbf{x}, \mathbf{Y}_1)]} \cr
 &\leq \log \int Q(\by_1) \, \d\by_1 \frac{(N-1) \E_{\bx}[w(\mathbf{x}, \mathbf{Y}_1)] + w(\mathbf{x}, \mathbf{y}_1)}{ N\E_{\bx}[w(\mathbf{x}, \mathbf{Y}_1)]} \cr
 &= \log\left(1 + \frac{\int w(\mathbf{x}, \mathbf{y}_1) Q(\by_1) \, \d\by_1 - \E_{\bx}[w(\mathbf{x}, \mathbf{Y}_1)]}{N\E_{\bx}[w(\mathbf{x}, \mathbf{Y}_1)]}\right) \cr
 &= \log\left(1 + \frac{\int w(\mathbf{x}, \mathbf{y}_1)^2 q_\sigma(\bx, \by_1) \, \d\by_1 - \E_{\bx}[w(\mathbf{x}, \mathbf{Y}_1)]^2}{N\E_{\bx}[w(\mathbf{x}, \mathbf{Y}_1)]^2}\right) \cr
 &\leq \frac{\var_{\bx}[w(\mathbf{x}, \mathbf{Y}_1)]}{N\E_{\bx}[w(\mathbf{x}, \mathbf{Y}_1)]^2},
\end{align*}
using Jensen's inequality and that $\log(1+x)\leq x$ for all $x > -1$. This concludes the proof of Result 1 as, by assumption $\E_{\bx}\left[w(\mathbf{x}, \mathbf{Y}_1)^4\right] < \infty$, and $\E_{\bx}[w(\mathbf{x}, \mathbf{Y}_1)] > 0$ given that $w$ is a strictly positive function.

 We now prove Result 2. To achieve this, we use \autoref{theorem:weak_con_gen}, which requires that:
  \begin{description}
    \item[(a)] for every $h \in \mathcal{C}_{\text{b}}$ (the space of bounded continuous functions),
        \[
         \int |P_Nh(\mathbf{x}) - P_{\text{ideal}}h(\mathbf{x})| \, \pi(\mathbf{x}) \, \d\mathbf{x} \rightarrow 0,
        \]
        where $P_N$ is the Markov kernel simulated by MTM and $P_{\text{ideal}}$ is the Markov kernel simulated by the ideal scheme;
    \item[(b)] $P_{\text{ideal}}h$ is continuous for any $h \in \mathcal{C}_{\text{b}}$.
  \end{description}

 Let us first define $P_N$ and $P_{\text{ideal}}$:
 \begin{align*}
  & P_N(\mathbf{x}, A)  := \sum_{j = 1}^N \int_{\mathbf{y}_j \in A} \frac{w(\mathbf{x}, \mathbf{y}_j)}{\sum_{i = 1}^N w(\mathbf{x}, \mathbf{y}_i)} \prod_{i=1}^N q_\sigma(\mathbf{x}, \mathbf{y}_i) \prod_{i=1}^{N - 1} q_\sigma(\mathbf{y}_j, \mathbf{z}_i) \, \alpha(\mathbf{x}, \mathbf{y}_j) \, \d(\mathbf{y}_{1:N}, \mathbf{z}_{1:N - 1}) \cr
  &\qquad  + \1_{\mathbf{x} \in A} \sum_{j = 1}^N \int \frac{w(\mathbf{x}, \mathbf{y}_j)}{\sum_{i = 1}^N w(\mathbf{x}, \mathbf{y}_i)} \prod_{i=1}^N q_\sigma(\mathbf{x}, \mathbf{y}_i) \prod_{i=1}^{N - 1} q_\sigma(\mathbf{y}_j, \mathbf{z}_i) \, (1 - \alpha(\mathbf{x}, \mathbf{y}_j)) \, \d(\mathbf{y}_{1:N}, \mathbf{z}_{1:N - 1}),
 \end{align*}
 and
 \begin{align*}
  P_{\text{ideal}}(\mathbf{x}, A) := \int_{\mathbf{y} \in A} Q_{w, \sigma}(\mathbf{x}, \mathbf{y}) \, \alpha_{\text{ideal}}(\mathbf{x}, \mathbf{y}) \, \d\mathbf{y}  + \1_{\mathbf{x} \in A} \int Q_{w, \sigma}(\mathbf{x}, \mathbf{y}) \, (1 - \alpha_{\text{ideal}}(\mathbf{x}, \mathbf{y})) \, \d\mathbf{y},
 \end{align*}
 where
 \[
   \alpha_{\text{ideal}}(\mathbf{x}, \mathbf{y}) := 1 \wedge \frac{\pi(\mathbf{y}) \, Q_{w, \sigma}(\mathbf{y}, \mathbf{x})}{\pi(\mathbf{x}) \, Q_{w, \sigma}(\mathbf{x}, \mathbf{y})}.
 \]
  In $P_N(\mathbf{x}, A)$, the integrals are the same for all $j$. Therefore,
  \begin{align*}
  & P_N(\mathbf{x}, A)  := \int_{\mathbf{y}_1 \in A} \frac{w(\mathbf{x}, \mathbf{y}_1)}{\frac{1}{N}\sum_{i = 1}^N w(\mathbf{x}, \mathbf{y}_i)} \prod_{i=1}^N q_\sigma(\mathbf{x}, \mathbf{y}_i) \prod_{i=1}^{N - 1} q_\sigma(\mathbf{y}_1, \mathbf{z}_i) \, \alpha(\mathbf{x}, \mathbf{y}_1) \, \d(\mathbf{y}_{1:N}, \mathbf{z}_{1:N - 1}) \cr
  &\qquad  + \1_{\mathbf{x} \in A} \int \frac{w(\mathbf{x}, \mathbf{y}_1)}{\frac{1}{N}\sum_{i = 1}^N w(\mathbf{x}, \mathbf{y}_i)} \prod_{i=1}^N q_\sigma(\mathbf{x}, \mathbf{y}_i) \prod_{i=1}^{N - 1} q_\sigma(\mathbf{y}_1, \mathbf{z}_i) \, (1 - \alpha(\mathbf{x}, \mathbf{y}_1)) \, \d(\mathbf{y}_{1:N}, \mathbf{z}_{1:N - 1}).
 \end{align*}
  Consequently,
 \begin{align*}
  & P_Nh(\mathbf{x}) = \int \frac{w(\mathbf{x}, \mathbf{y}_1)}{\frac{1}{N}\sum_{i = 1}^N w(\mathbf{x}, \mathbf{y}_i)} \prod_{i=1}^N q_\sigma(\mathbf{x}, \mathbf{y}_i) \prod_{i=1}^{N - 1} q_\sigma(\mathbf{y}_1, \mathbf{z}_i) \, \alpha(\mathbf{x}, \mathbf{y}_1) \, h(\mathbf{y}_1) \, \d(\mathbf{y}_{1:N}, \mathbf{z}_{1:N - 1}) \cr
  &\qquad  + h(\mathbf{x}) \int \frac{w(\mathbf{x}, \mathbf{y}_1)}{\frac{1}{N}\sum_{i = 1}^N w(\mathbf{x}, \mathbf{y}_i)} \prod_{i=1}^N q_\sigma(\mathbf{x}, \mathbf{y}_i) \prod_{i=1}^{N - 1} q_\sigma(\mathbf{y}_1, \mathbf{z}_i) \, (1 - \alpha(\mathbf{x}, \mathbf{y}_1)) \, \d(\mathbf{y}_{1:N}, \mathbf{z}_{1:N - 1}),
 \end{align*}
 and
  \begin{align}\label{eqn_Ph_ideal}
  P_{\text{ideal}}h(\mathbf{x}) := \int Q_{w, \sigma}(\mathbf{x}, \mathbf{y}) \, \alpha_{\text{ideal}}(\mathbf{x}, \mathbf{y}) \, h(\mathbf{y}) \, \d\mathbf{y}  + h(\mathbf{x}) \int Q_{w, \sigma}(\mathbf{x}, \mathbf{y}) \, (1 - \alpha_{\text{ideal}}(\mathbf{x}, \mathbf{y})) \, \d\mathbf{y}.
 \end{align}

 We first prove that
 \[
  \int |P_Nh(\mathbf{x}) - P_{\text{ideal}}h(\mathbf{x})| \, \pi(\mathbf{x}) \, \d\mathbf{x} \rightarrow 0.
 \]
 Using the triangle inequality, it suffices to prove that
 \begin{align*}
  &\int \left| \int \frac{w(\mathbf{x}, \mathbf{y}_1)}{\frac{1}{N}\sum_{i = 1}^N w(\mathbf{x}, \mathbf{y}_i)} \prod_{i=1}^N q_\sigma(\mathbf{x}, \mathbf{y}_i) \prod_{i=1}^{N - 1} q_\sigma(\mathbf{y}_1, \mathbf{z}_i) \, \alpha(\mathbf{x}, \mathbf{y}_1) \, h(\mathbf{y}_1) \, \d(\mathbf{y}_{1:N}, \mathbf{z}_{1:N - 1}) \right. \cr
  &\qquad \left. -  \int Q_{w, \sigma}(\mathbf{x}, \mathbf{y}) \, \alpha_{\text{ideal}}(\mathbf{x}, \mathbf{y}) \, h(\mathbf{y}) \, \d\mathbf{y}\right| \, \pi(\mathbf{x}) \, \d\mathbf{x} \rightarrow 0,
 \end{align*}
 and
  \begin{align*}
  &\int \left| h(\mathbf{x}) \int \frac{w(\mathbf{x}, \mathbf{y}_1)}{\frac{1}{N}\sum_{i = 1}^N w(\mathbf{x}, \mathbf{y}_i)} \prod_{i=1}^N q_\sigma(\mathbf{x}, \mathbf{y}_i) \prod_{i=1}^{N - 1} q_\sigma(\mathbf{y}_1, \mathbf{z}_i) \, (1 - \alpha(\mathbf{x}, \mathbf{y}_1)) \, \d(\mathbf{y}_{1:N}, \mathbf{z}_{1:N - 1}) \right. \cr
  &\qquad \left. -  h(\mathbf{x}) \int Q_{w, \sigma}(\mathbf{x}, \mathbf{y}) \, (1 - \alpha_{\text{ideal}}(\mathbf{x}, \mathbf{y})) \, \d\mathbf{y} \right| \, \pi(\mathbf{x}) \, \d\mathbf{x} \rightarrow 0.
 \end{align*}
 We prove the first convergence and the other one follows using the same arguments. Using that we can rewrite
  \begin{align*}
   &\int Q_{w, \sigma}(\mathbf{x}, \mathbf{y}) \, \alpha_{\text{ideal}}(\mathbf{x}, \mathbf{y}) \, h(\mathbf{y}) \, \d\mathbf{y} \cr
    &\qquad = \int \frac{w(\mathbf{x}, \mathbf{y}_1)}{\int  w(\mathbf{x}, \mathbf{y}_1) \, q_\sigma(\mathbf{x}, \mathbf{y}_1) \, \d\mathbf{y}_1} \, \alpha_{\text{ideal}}(\mathbf{x}, \mathbf{y}_1) \, h(\mathbf{y}_1) \, \d(\mathbf{y}_{1:N}, \mathbf{z}_{1:N - 1}),
  \end{align*}
  Jensen's inequality and that $h$ is bounded, let us say by a positive constant $M$,
  \begin{align*}
  &\int \left| \int \frac{w(\mathbf{x}, \mathbf{y}_1)}{\frac{1}{N}\sum_{i = 1}^N w(\mathbf{x}, \mathbf{y}_i)} \prod_{i=1}^N q_\sigma(\mathbf{x}, \mathbf{y}_i) \prod_{i=1}^{N - 1} q_\sigma(\mathbf{y}_1, \mathbf{z}_i) \, \alpha(\mathbf{x}, \mathbf{y}_1) \, h(\mathbf{y}_1) \, \d(\mathbf{y}_{1:N}, \mathbf{z}_{1:N - 1}) \right. \cr
  &\qquad \left. -  \int Q_{w, \sigma}(\mathbf{x}, \mathbf{y}) \, \alpha_{\text{ideal}}(\mathbf{x}, \mathbf{y}) \, h(\mathbf{y}) \, \d\mathbf{y}\right| \, \pi(\mathbf{x}) \, \d\mathbf{x} \cr
  &\leq M \iint \left| \frac{w(\mathbf{x}, \mathbf{y}_1)}{\frac{1}{N}\sum_{i = 1}^N w(\mathbf{x}, \mathbf{y}_i)} \, \alpha(\mathbf{x}, \mathbf{y}_1) - \frac{w(\mathbf{x}, \mathbf{y}_1)}{\int  w(\mathbf{x}, \mathbf{y}_1) \, q_\sigma(\mathbf{x}, \mathbf{y}_1) \, \d\mathbf{y}_1} \, \alpha_{\text{ideal}}(\mathbf{x}, \mathbf{y}_1)\right| \cr
  &\qquad \times  \prod_{i=1}^N q_\sigma(\mathbf{x}, \mathbf{y}_i) \prod_{i=1}^{N - 1} q_\sigma(\mathbf{y}_1, \mathbf{z}_i) \, \d(\mathbf{y}_{1:N}, \mathbf{z}_{1:N - 1}) \, \pi(\mathbf{x}) \, \d\mathbf{x} \cr
  &= M \E\left[\left|\frac{w(\mathbf{X}, \mathbf{Y}_1)}{\frac{1}{N}\sum_{i = 1}^N w(\mathbf{X}, \mathbf{Y}_i)} \, \alpha(\mathbf{X}, \mathbf{Y}_1) - \frac{w(\mathbf{X}, \mathbf{Y}_1)}{\int  w(\mathbf{X}, \mathbf{y}_1) \, q_\sigma(\mathbf{X}, \mathbf{y}_1) \, \d\mathbf{y}_1} \, \alpha_{\text{ideal}}(\mathbf{X}, \mathbf{Y}_1) \right|\right].
 \end{align*}
 From the strong law of large numbers, we have that with probability 1
 \[
   \frac{1}{N} \sum_{i = 1}^N w(\mathbf{x}, \mathbf{Y}_i) \rightarrow \int  w(\mathbf{x}, \mathbf{y}_1) \, q_\sigma(\mathbf{x}, \mathbf{y}_1) \, \d\mathbf{y}_1, \quad \text{as $N \rightarrow \infty$.}
 \]
 for all $\bx$. Therefore, with probability 1,
 \[
  \frac{1}{N}\sum_{i = 1}^N w(\mathbf{X}, \mathbf{Y}_i) \rightarrow \int  w(\mathbf{X}, \mathbf{y}_1) \, q_\sigma(\mathbf{X}, \mathbf{y}_1) \, \d\mathbf{y}_1.
 \]
 Also, with probability 1,
 \begin{align*}
  \alpha(\mathbf{X}, \mathbf{Y}_1) &= 1 \wedge \frac{\pi(\mathbf{Y}_1) \, q_\sigma(\mathbf{Y}_1, \mathbf{X}) \, w(\mathbf{Y}_1, \mathbf{X}) \bigg/ \left(\frac{1}{N}\left(\sum_{i = 1}^{N - 1} w(\mathbf{Y}_1, \mathbf{Z}_i) + w(\mathbf{Y}_1, \mathbf{X})\right)\right)}{\pi(\mathbf{X}) \, q_\sigma(\mathbf{X}, \mathbf{Y}_1) \, w(\mathbf{X}, \mathbf{Y}_1) \bigg/ \left(\frac{1}{N}\left(\sum_{i = 2}^N w(\mathbf{X}, \mathbf{Y}_i) + w(\mathbf{X}, \mathbf{Y}_1)\right)\right)} \cr
  &\rightarrow 1 \wedge \frac{\pi(\mathbf{Y}_1) \, q_\sigma(\mathbf{Y}_1, \mathbf{X}) \, w(\mathbf{Y}_1, \mathbf{X}) \bigg/ \int w(\mathbf{Y}_1, \mathbf{z}_1) \, q_\sigma(\mathbf{Y}_1, \mathbf{z}_1) \, \d\mathbf{z}_1}{\pi(\mathbf{X}) \, q_\sigma(\mathbf{X}, \mathbf{Y}_1) \, w(\mathbf{X}, \mathbf{Y}_1) \bigg/ \int w(\mathbf{X}, \mathbf{y}_1) \, q_\sigma(\mathbf{X}_1, \mathbf{y}_1) \, \d\mathbf{y}_1} \cr
  &= 1 \wedge \frac{\pi(\mathbf{Y}_1) \, Q_{w, \sigma}(\mathbf{Y}_1, \mathbf{X})}{\pi(\mathbf{X}) \, Q_{w, \sigma}(\mathbf{X}, \mathbf{Y}_1)} = \alpha_{\text{ideal}}(\mathbf{X}, \mathbf{Y}_1).
 \end{align*}
 Therefore, with probability 1,
 \begin{align*}
 \left|\frac{w(\mathbf{X}, \mathbf{Y}_1)}{\frac{1}{N}\sum_{i = 1}^N w(\mathbf{X}, \mathbf{Y}_i)} \, \alpha(\mathbf{X}, \mathbf{Y}_1) - \frac{w(\mathbf{X}, \mathbf{Y}_1)}{\int  w(\mathbf{X}, \mathbf{y}_1) \, q_\sigma(\mathbf{X}, \mathbf{y}_1) \, \d\mathbf{y}_1} \, \alpha_{\text{ideal}}(\mathbf{X}, \mathbf{Y}_1) \right| \rightarrow 0.
 \end{align*}

 To be able to conclude that the expectation converges, we prove that the random variable is uniformly integrable. We more specifically prove that
 \[
  \sup_N \E\left[\left(\frac{w(\mathbf{X}, \mathbf{Y}_1)}{\frac{1}{N}\sum_{i = 1}^N w(\mathbf{X}, \mathbf{Y}_i)} \, \alpha(\mathbf{X}, \mathbf{Y}_1) - \frac{w(\mathbf{X}, \mathbf{Y}_1)}{\int  w(\mathbf{X}, \mathbf{y}_1) \, q_\sigma(\mathbf{X}, \mathbf{y}_1) \, \d\mathbf{y}_1} \, \alpha_{\text{ideal}}(\mathbf{X}, \mathbf{Y}_1) \right)^2\right] < \infty,
 \]
 which implies that the random variable is uniformly integrable. We have that
 \begin{align*}
  &\E\left[\left(\frac{w(\mathbf{X}, \mathbf{Y}_1)}{\frac{1}{N}\sum_{i = 1}^N w(\mathbf{X}, \mathbf{Y}_i)} \, \alpha(\mathbf{X}, \mathbf{Y}_1) - \frac{w(\mathbf{X}, \mathbf{Y}_1)}{\int  w(\mathbf{X}, \mathbf{y}_1) \, q_\sigma(\mathbf{X}, \mathbf{y}_1) \, \d\mathbf{y}_1} \, \alpha_{\text{ideal}}(\mathbf{X}, \mathbf{Y}_1) \right)^2\right] \cr
  & \leq 2\E\left[\left(\frac{w(\mathbf{X}, \mathbf{Y}_1)}{\frac{1}{N}\sum_{i = 1}^N w(\mathbf{X}, \mathbf{Y}_i)} \, \alpha(\mathbf{X}, \mathbf{Y}_1)\right)^2\right] \cr
   &\qquad + 2\E\left[\left(\frac{w(\mathbf{X}, \mathbf{Y}_1)}{\int  w(\mathbf{X}, \mathbf{y}_1) \, q_\sigma(\mathbf{X}, \mathbf{y}_1) \, \d\mathbf{y}_1} \, \alpha_{\text{ideal}}(\mathbf{X}, \mathbf{Y}_1)\right)^2\right] \cr
  & \leq 2\E\left[\left(\frac{w(\mathbf{X}, \mathbf{Y}_1)}{\frac{1}{N}\sum_{i = 1}^N w(\mathbf{X}, \mathbf{Y}_i)}\right)^2\right] + 2\E\left[\left(\frac{w(\mathbf{X}, \mathbf{Y}_1)}{\int  w(\mathbf{X}, \mathbf{y}_1) \, q_\sigma(\mathbf{X}, \mathbf{y}_1) \, \d\mathbf{y}_1} \right)^2\right] \cr
  & \leq 2\E\left[w(\mathbf{X}, \mathbf{Y}_1)^4\right]^{1/2}\E\left[w(\mathbf{X}, \mathbf{Y}_1)^{-4}\right]^{1/2} +  2\E\left[w(\mathbf{X}, \mathbf{Y}_1)^4 \right]^{1/2} \E\left[\left(\int  w(\mathbf{X}, \mathbf{y}_1) \, q_\sigma(\mathbf{X}, \mathbf{y}_1) \, \d\mathbf{y}_1\right)^{-4}\right]^{1/2} \cr
  &\leq 4\E\left[w(\mathbf{X}, \mathbf{Y}_1)^4\right]^{1/2}\E\left[w(\mathbf{X}, \mathbf{Y}_1)^{-4}\right]^{1/2},
 \end{align*}
 which is finite. We used that for any real numbers $a,b$, $(a+b)^2 \leq 2a^2 + 2b^2$, that $0\leq \alpha, \alpha_{\text{ideal}} \leq 1$, Cauchy--Schwarz inequality, Jensen's inequality (at the conditional expectation level), and \autoref{prop:bound_convex} for
  \[
  \E_{\bx}\left[\left(\frac{1}{N} \sum_{i = 1}^N w(\mathbf{x}, \mathbf{Y}_i)\right)^{-4}\right] \leq \E_{\bx}\left[w(\mathbf{x}, \mathbf{Y}_1)^{-4}\right],
 \]
 with $\E\left[w(\mathbf{X}, \mathbf{Y}_1)^{-4}\right] < \infty$.

 We now show that $P_{\text{ideal}}h$ is continuous. Define $\mathbf{x}_{\boldsymbol\epsilon} := \mathbf{x} + \boldsymbol\epsilon$ with $\|\boldsymbol\epsilon\| \leq \varepsilon$, where $\varepsilon > 0$ can be chosen to be arbitrarily small. We want to prove that
 \begin{align*}
  \lim_{\varepsilon \rightarrow 0} P_{\text{ideal}}h(\mathbf{x}_{\boldsymbol\epsilon}) = P_{\text{ideal}}h(\mathbf{x}).
 \end{align*}
 This is true under the assumptions if we can interchange the limit and the integral in \eqref{eqn_Ph_ideal}. We are allowed to do it using the dominated convergence theorem because $h$ is bounded, $0 \leq\alpha_{\text{ideal}} \leq 1$ and $Q_{w, \sigma}(\mathbf{x} + \boldsymbol\epsilon, \mathbf{y}) \leq f(\mathbf{x}, \mathbf{y})$, an integrable (in $\mathbf{y}$) function, for all $\mathbf{x}$.
\end{proof}

\subsection{Proof of \autoref{prop:ass_b}}

\begin{proof}[Proof of \autoref{prop:ass_b}]
 We prove the result for the case where $w(\bx, \by) = \pi(\by)/\pi(\bx)$. The proof is analogous for the case where  $w(\bx, \by) = \sqrt{\pi(\by)/\pi(\bx)}$.

 We have that
 \begin{align*}
  Q_{w, \sigma}(\bx + \boldsymbol\epsilon, \by) &= \frac{w(\bx + \boldsymbol\epsilon, \by) \, q_{\sigma}(\bx + \boldsymbol\epsilon, \by)}{\int w(\bx + \boldsymbol\epsilon, \by) \, q_{\sigma}(\bx + \boldsymbol\epsilon, \by) \, \d\by} = \frac{\pi(\by) \, q_{\sigma}(\bx + \boldsymbol\epsilon, \by)}{\int \pi(\by) \, q_{\sigma}(\bx + \boldsymbol\epsilon, \by) \, \d\by}.
 \end{align*}
 Also,
 \begin{align*}
    &q_{\sigma}(\bx + \boldsymbol\epsilon, \by) = \frac{1}{(2 \pi \sigma^2)^{d/2}} \exp\left(-\frac{1}{2\sigma^2}(\by - (\bx+\boldsymbol\epsilon))^T (\by - (\bx+\boldsymbol\epsilon))\right) \cr
    &\qquad = \frac{1}{(2 \pi \sigma^2)^{d/2}} \exp\left(-\frac{1}{2\sigma^2}(\by - \bx)^T (\by - \bx)\right) \exp\left(\frac{1}{2\sigma^2}\boldsymbol\epsilon^T (\by - \bx)\right)\exp\left(-\frac{1}{2\sigma^2} \, \boldsymbol\epsilon^T \boldsymbol\epsilon\right).
 \end{align*}
 Therefore,
 \[
  q_{\sigma}(\bx + \boldsymbol\epsilon, \by) \leq q_{\sigma}(\bx, \by)  \exp\left(\frac{\varepsilon}{2\sigma^2}  \|\by - \bx\|\right),
 \]
 given that
 \[
  \exp\left(-\frac{1}{2\sigma^2} \, \boldsymbol\epsilon^T \boldsymbol\epsilon\right) \leq 1,
 \]
 and
 \[
  \boldsymbol\epsilon^T (\by - \bx) \leq |\boldsymbol\epsilon^T (\by - \bx)| \leq \|\boldsymbol\epsilon\| \|\by - \bx\| \leq \varepsilon \|\by - \bx\|,
 \]
 by Cauchy--Schwarz inequality.

 We thus have an upper bound of the numerator of $Q_{w, \sigma}(\bx + \boldsymbol\epsilon, \by)$ that does not depend on $\boldsymbol\epsilon$. We now find a lower bound of the denominator that does not depend on $\boldsymbol\epsilon$. By Fatou's lemma, we have that
 \[
  \liminf_{\varepsilon \rightarrow 0} \int \pi(\by) \, q_{\sigma}(\bx + \boldsymbol\epsilon, \by) \, \d\by \geq \int \pi(\by) \, q_{\sigma}(\bx, \by) \, \d\by.
 \]
 We have that $\|\boldsymbol\epsilon\| \leq \varepsilon$ with $\varepsilon > 0$ an arbitrarily small value. We thus know that we can choose $\varepsilon$ such that
 \[
 \int \pi(\by) \, q_{\sigma}(\bx + \boldsymbol\epsilon, \by) \, \d\by \geq \int \pi(\by) \, q_{\sigma}(\bx, \by) \, \d\by - \xi,
 \]
 with $\xi$ an arbitrarily small value. In particular, we can choose $\varepsilon$ such that
 \[
  \xi \leq \frac{1}{2} \int \pi(\by) \, q_{\sigma}(\bx, \by) \, \d\by,
 \]
 implying that
 \[
 \int \pi(\by) \, q_{\sigma}(\bx + \boldsymbol\epsilon, \by) \, \d\by \geq \frac{1}{2} \int \pi(\by) \, q_{\sigma}(\bx, \by) \, \d\by.
 \]
 Note that $\int \pi(\by) \, q_{\sigma}(\bx, \by) \, \d\by > 0$ because $\pi(\by)$ is assumed to be strictly positive. Note also that $\int \pi(\by) \, q_{\sigma}(\bx, \by) \, \d\by < \infty$. Indeed,
 \[
  \int \pi(\by) \, q_{\sigma}(\bx, \by) \, \d\by \leq \frac{1}{(2 \pi \sigma^2)^{d/2}} \int \pi(\by) \, \d\by < \infty.
 \]
 In the case where  $w(\bx, \by) = \sqrt{\pi(\by)/\pi(\bx)}$, we use Cauchy--Schwarz inequality instead of the boundedness of $q_{\sigma}$ to reach the same conclusion.

 To summarize, we know that we can choose $\varepsilon$ so that
 \[
  Q_{w, \sigma}(\bx + \boldsymbol\epsilon, \by) \leq \frac{\pi(\by) \, q_{\sigma}(\bx, \by)  \exp\left(\frac{\varepsilon}{2\sigma^2}  \|\by - \bx\|\right)}{\frac{1}{2} \int \pi(\by) \, q_{\sigma}(\bx, \by) \, \d\by},
 \]
 which is independent of $\boldsymbol\epsilon$. We know that the denominator on the right-hand side (RHS) is strictly positive and finite. To conclude the proof, we need to show that the numerator is integrable.

 We have that
 \begin{align*}
  \int \pi(\by) \, q_{\sigma}(\bx, \by)  \exp\left(\frac{\varepsilon}{2\sigma^2}  \|\by - \bx\|\right) \, \d\by &=  \int_{A_{\bx}} \pi(\by) \, q_{\sigma}(\bx, \by)  \exp\left(\frac{\varepsilon}{2\sigma^2}  \|\by - \bx\|\right) \, \d\by \cr
 &\quad + \int_{A_{\bx}^\mathsf{c}} \pi(\by) \, q_{\sigma}(\bx, \by)  \exp\left(\frac{\varepsilon}{2\sigma^2}  \|\by - \bx\|\right) \, \d\by,
 \end{align*}
 where $A_{\bx}$ is the set of values of $\by$, with $\bx$ fixed, such that $\|\by - \bx\| \leq 1$. Given that this set is compact and that the integrand is upper bounded, we know that
 \[
  \int_{A_{\bx}} \pi(\by) \, q_{\sigma}(\bx, \by)  \exp\left(\frac{\varepsilon}{2\sigma^2}  \|\by - \bx\|\right) \, \d\by < \infty.
 \]

 Now, let us analyse the other integral, we have that
 \begin{align*}
  \int_{A_{\bx}^\mathsf{c}} \pi(\by) \, q_{\sigma}(\bx, \by)  \exp\left(\frac{\varepsilon}{2\sigma^2}  \|\by - \bx\|\right) \, \d\by &\leq \int_{A_{\bx}^\mathsf{c}} \pi(\by) \, q_{\sigma}(\bx, \by)  \exp\left(\frac{\varepsilon}{2\sigma^2}  \|\by - \bx\|^2\right) \, \d\by \cr
  &\hspace{-5mm}\leq \int \pi(\by) \, \frac{1}{(2 \pi \sigma^2)^{d/2}} \exp\left(-\frac{1 - \varepsilon}{2\sigma^2}\|\by - \bx\|^2\right) \,  \d\by \cr
  &\hspace{-5mm}\leq \frac{1}{(2 \pi \sigma^2)^{d/2}} \int \pi(\by) \, \d\by < \infty,
 \end{align*}
 which concludes the proof.
\end{proof}

\subsection{Proof of \autoref{prop:upper_bound_acc}}\label{sec:proofs_upper_bound_acc}

Before presenting the proof of \autoref{prop:upper_bound_acc}, we provide two lemmas which are used in it and in several other proofs. After presenting the proof of \autoref{prop:upper_bound_acc}, we present a result stating that $\lim_{\|\bx\| \rightarrow \infty}  \E_{\bx}[\alpha_{\text{ideal}}(\bx, \bY)] = 0$ for any $\sigma$ and $d$ under weaker assumptions than those supposed in \autoref{prop:upper_bound_acc}. As mentioned in \autoref{sec:tails}, the assumptions are essentially that $U := - \log \pi$ is strongly convex and $L$-smooth. In the proof of that result, it will be noticed that $L$-smoothness is not necessary but makes the proof simpler. It will also be noticed in the statement of the result that, instead of strong convexity, we assume that $\|\nabla U(\bx)\| \rightarrow \infty$ as $\|\bx\|  \rightarrow \infty$, which is weaker, but is in fact the crucial assumption for having $\lim_{\|\bx\| \rightarrow \infty}  \E_{\bx}[\alpha_{\text{ideal}}(\bx, \bY)] = 0$.

\begin{Lemma}\label{lemma:acc_prob}
 When the target density is defined as in \eqref{eqn:target_product} and the proposal distribution is $Q_{w, \sigma}$ with $q_\sigma(\bx, \cdot \,) = \mathcal{N}(\mathbf{x}, \sigma^2 \I_d)$ and $w(\bx, \by) = \pi(\by)/\pi(\bx)$, we have that
 \[
 \alpha_{\text{ideal}}(\mathbf{x}, \mathbf{y}) = 1 \wedge \exp\left(\frac{1}{2(1+\sigma^2)} \sum_{i=1}^d (y_i^2 - x_i^2)\right).
\]
\end{Lemma}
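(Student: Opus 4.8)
The plan is to evaluate $\alpha_{\text{ideal}}(\bx, \by) = 1 \wedge \frac{\pi(\by)\,Q_{w,\sigma}(\by,\bx)}{\pi(\bx)\,Q_{w,\sigma}(\bx,\by)}$ in closed form, exploiting that with a normal proposal and a normal target everything reduces to Gaussian algebra. Write $Q_{w,\sigma}(\bx,\by) = w(\bx,\by)\,q_\sigma(\bx,\by)/Z(\bx)$ with $Z(\bx) := \int w(\bx,\by)\,q_\sigma(\bx,\by)\,\d\by$. Since $w(\bx,\by) = \pi(\by)/\pi(\bx)$ and $\pi(\bx) \propto \exp(-\|\bx\|^2/2)$, the first step is to note that, as a function of $\by$, $w(\bx,\by)\,q_\sigma(\bx,\by) \propto \exp\big(-\|\by\|^2/2 - \|\by-\bx\|^2/(2\sigma^2)\big)$; completing the square in $\by$ identifies this (up to normalization) with the $\mathcal{N}(\bx/(1+\sigma^2),\,\tfrac{\sigma^2}{1+\sigma^2}\I_d)$ density, which is incidentally the fact that $Q_{w,\sigma}(\bx,\cdot)$ is a tractable normal distribution.

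The second step is to compute $Z(\bx)$. Writing $Z(\bx) = \pi(\bx)^{-1}\int \pi(\by)\,q_\sigma(\bx,\by)\,\d\by$ and using the elementary Gaussian identity $\int \mathcal{N}(\by;0,\I_d)\,\mathcal{N}(\by;\bx,\sigma^2\I_d)\,\d\by = \mathcal{N}(\bx;0,(1+\sigma^2)\I_d)$ gives $Z(\bx) = (1+\sigma^2)^{-d/2}\exp\big(\sigma^2\|\bx\|^2/(2(1+\sigma^2))\big)$, so that $\pi(\bx)\,Z(\bx) \propto (1+\sigma^2)^{-d/2}\exp\big(-\|\bx\|^2/(2(1+\sigma^2))\big)$, with a proportionality constant not depending on $\bx$.

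The final step is to substitute into the ratio. Using $w(\by,\bx) = 1/w(\bx,\by)$ and the symmetry $q_\sigma(\bx,\by) = q_\sigma(\by,\bx)$ of the normal kernel, the factors $w$ and $q_\sigma$ cancel and $\frac{\pi(\by)\,Q_{w,\sigma}(\by,\bx)}{\pi(\bx)\,Q_{w,\sigma}(\bx,\by)}$ collapses to $\frac{\pi(\bx)\,Z(\bx)}{\pi(\by)\,Z(\by)}$. Plugging in the expression just obtained, the $(1+\sigma^2)^{-d/2}$ prefactors and all $2\pi$ constants cancel, leaving $\exp\big((\|\by\|^2-\|\bx\|^2)/(2(1+\sigma^2))\big) = \exp\big(\tfrac{1}{2(1+\sigma^2)}\sum_{i=1}^d(y_i^2-x_i^2)\big)$, which is the claim. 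Everything here is routine Gaussian algebra; the only point to watch is the bookkeeping of normalizing constants, namely verifying that the $d$-dependent and $2\pi$ factors coming from $\pi$, $q_\sigma$ and $Z$ all cancel so that no prefactor survives in $\alpha_{\text{ideal}}$. There is no genuine obstacle.
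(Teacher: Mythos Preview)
Your proof is correct and follows essentially the same route as the paper: both reduce the acceptance ratio to the ratio of normalizing constants $\int\pi(\bz)\,q_\sigma(\bx,\bz)\,\d\bz\big/\int\pi(\bz)\,q_\sigma(\by,\bz)\,\d\bz$ and then evaluate these Gaussian integrals in closed form. The only cosmetic difference is that the paper factorizes coordinate-wise and uses the moment-generating function of a non-central chi-squared, whereas you invoke the multivariate Gaussian convolution identity directly; the computations are equivalent.
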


\begin{proof}[Proof of \autoref{lemma:acc_prob}]
 We have that
 \begin{align*}
  \alpha_{\text{ideal}}(\mathbf{x}, \mathbf{y}) &= 1 \wedge \frac{\pi(\mathbf{y}) \, Q_{w, \sigma}(\mathbf{y}, \mathbf{x})}{\pi(\mathbf{x}) \, Q_{w, \sigma}(\mathbf{x}, \mathbf{y})} \cr
  &= 1 \wedge \prod_{i = 1}^d \frac{\int \varphi(z_i) \, (1 / \sigma) \varphi((z_i - x_i) / \sigma) \, \d z_i}{\int \varphi(z_i) \, (1 / \sigma) \varphi((z_i - y_i) / \sigma)  \, \d z_i} \cr
  &= 1 \wedge \prod_{i = 1}^d \frac{\int \varphi(x_i + \sigma u_i) \, \varphi(u_i) \, \d u_i}{\int \varphi(y_i + \sigma u_i) \, \varphi(u_i)  \, \d u_i} \cr
  &= 1 \wedge \prod_{i = 1}^d \frac{\int \exp\left(-\frac{\sigma^2}{2}(u_i + x_i/\sigma)^2\right) \, \varphi(u_i) \, \d u_i}{\int \exp\left(-\frac{\sigma^2}{2}(u_i + y_i/\sigma)^2\right) \, \varphi(u_i)  \, \d u_i} \cr
  &= 1 \wedge \exp\left(\frac{1}{2(1+\sigma^2)} \sum_{i=1}^d (y_i^2 - x_i^2)\right),
 \end{align*}
using the definition of $Q_{w, \sigma}$, the factorization of the target and proposal densities, a change of variable and the equality
\[
 \int \exp\left(-\frac{\sigma^2}{2}(u_i + x_i/\sigma_d)^2\right) \, \varphi(u_i) \, \d u_i = \E\left[\exp\left(-\frac{\sigma^2}{2}Z_i\right)\right] = \frac{\exp\left(-\frac{x_i^2}{2(1+\sigma^2)}\right)}{(1+\sigma^2)^{1/2}},
\]
with $Z_i$ that follows a non-central chi-squared distribution.
\end{proof}

\begin{Lemma}\label{lemma:dist_Y_ideal}
 When the target density is defined as in \eqref{eqn:target_product} and the proposal distribution is $Q_{w, \sigma}$ with $q_\sigma(\bx, \cdot \,) = \mathcal{N}(\mathbf{x}, \sigma^2 \I_d)$ and $w(\bx, \by) = \pi(\by)/\pi(\bx)$, we have that $\bY \sim Q_{w, \sigma}(\bx, \cdot \,)$ is equal in distribution to
 \[
 \frac{\bx}{1 + \sigma^2} + \sqrt{\frac{\sigma^2}{1 + \sigma^2}} \, \bU,
\]
where the components of $\bU := (U_1, \ldots, U_d)^T$ are $d$ independent standard-normal random variables.
\end{Lemma}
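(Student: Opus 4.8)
The plan is to exploit the product structure and reduce everything to a one-dimensional Gaussian computation. By the definition of the proposal density, $Q_{w,\sigma}(\bx,\by) \propto w(\bx,\by)\,q_\sigma(\bx,\by) = (\pi(\by)/\pi(\bx))\,q_\sigma(\bx,\by) \propto \pi(\by)\,q_\sigma(\bx,\by)$ as a function of $\by$, the factor $1/\pi(\bx)$ being a constant. Since $\pi(\by) = \prod_{i=1}^d \varphi(y_i)$ and $q_\sigma(\bx,\by) = \prod_{i=1}^d (1/\sigma)\varphi((y_i-x_i)/\sigma)$ both factorize over coordinates, so does their product; hence $Q_{w,\sigma}(\bx,\cdot\,)$ is a product measure, and it suffices to identify each of its one-dimensional marginals.

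For a fixed coordinate $i$, the $i$th marginal density is proportional to $\varphi(y_i)\,\varphi((y_i-x_i)/\sigma)$, that is, to the product of a $\mathcal{N}(0,1)$ density and a $\mathcal{N}(x_i,\sigma^2)$ density, both viewed as functions of $y_i$. Completing the square in the exponent $-\tfrac12 y_i^2 - \tfrac{1}{2\sigma^2}(y_i-x_i)^2$ shows this product is proportional to a Gaussian density in $y_i$ with precision $1 + 1/\sigma^2 = (1+\sigma^2)/\sigma^2$, hence variance $\sigma^2/(1+\sigma^2)$, and with mean equal to the precision-weighted average $\bigl(0\cdot 1 + x_i\cdot(1/\sigma^2)\bigr)/\bigl(1+1/\sigma^2\bigr) = x_i/(1+\sigma^2)$. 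In particular the normalizing constant $\int w(\bx,\by)\,q_\sigma(\bx,\by)\,\d\by$ is finite (since $1+1/\sigma^2>0$), so $Q_{w,\sigma}(\bx,\cdot\,)$ is well-defined, and the $i$th marginal is $\mathcal{N}\!\bigl(x_i/(1+\sigma^2),\,\sigma^2/(1+\sigma^2)\bigr)$.

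Assembling the coordinates then gives $\bY \sim \mathcal{N}\!\bigl(\bx/(1+\sigma^2),\,(\sigma^2/(1+\sigma^2))\I_d\bigr)$, which is exactly the distribution of $\bx/(1+\sigma^2) + \sqrt{\sigma^2/(1+\sigma^2)}\,\bU$ with $\bU$ a vector of $d$ independent standard normals. There is no genuine obstacle here; the only point requiring care is to track the mean and variance correctly through the completion of the square and to record that the integral in the denominator of $Q_{w,\sigma}$ converges. (This is also consistent with the intermediate Gaussian-integral computation already carried out in the proof of \autoref{lemma:acc_prob}.)
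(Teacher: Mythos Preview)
Your proof is correct and follows essentially the same route as the paper: factorize $Q_{w,\sigma}(\bx,\by)\propto\prod_i\varphi(y_i)\,(1/\sigma)\varphi((y_i-x_i)/\sigma)$ over coordinates, complete the square in each one-dimensional factor to read off the mean $x_i/(1+\sigma^2)$ and variance $\sigma^2/(1+\sigma^2)$, and assemble. The only cosmetic difference is that the paper writes the completed square explicitly while you phrase it via the precision-weighted mean.
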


\begin{proof}[Proof of \autoref{lemma:dist_Y_ideal}]
 The PDF $Q_{w, \sigma}(\mathbf{x}, \cdot \, )$ is such that
\[
 Q_{w, \sigma}(\mathbf{x}, \mathbf{y}) \propto \prod_{i = 1}^d \varphi(y_i) \, (1 / \sigma) \varphi((y_i - x_i) / \sigma) \propto \prod_{i = 1}^d \exp\left(-\frac{1}{2}\frac{1 + \sigma^2}{\sigma^2}\left(y_i - \frac{x_i}{1 + \sigma^2}\right)^2\right),
\]
implying that the components in $\bY := (Y_1, \ldots, Y_d)^T$ are $d$ conditionally independent random variables with
\[
 Y_i \sim \mathcal{N}\left(\frac{X_i}{1 + \sigma^2}, \frac{\sigma^2}{1 + \sigma^2}\right),
 \]
  for all $i$. This concludes the proof.
\end{proof}

\begin{proof}[Proof of \autoref{prop:upper_bound_acc}]
Using \autoref{lemma:acc_prob},
\[
 \E[\alpha_{\text{ideal}}(\mathbf{x}, \mathbf{Y})] = \E\left[1 \wedge \exp\left(\frac{1}{2(1+\sigma^2)} \sum_{i=1}^d (Y_i^2 - x_i^2)\right)\right].
\]
We have that
\begin{align*}
& \E\left[1 \wedge \exp\left(\frac{1}{2(1+\sigma^2)} \sum_{i=1}^d (Y_i^2 - x_i^2)\right)\right] \leq \E\left[ \exp\left(\frac{1}{2(1+\sigma^2)} \sum_{i=1}^d (Y_i^2 - x_i^2)\right)\right] \cr
 &\hspace{10mm}= \exp\left(-\frac{1}{2(1+\sigma^2)} \, \|\mathbf{x}\|^2\right) \E\left[ \exp\left(\frac{1}{2(1+\sigma^2)} \sum_{i=1}^d \left(\frac{x_i}{1+\sigma^2} + \sqrt{\frac{\sigma^2}{1 + \sigma^2}} \, U_i\right)^2\right)\right] \cr
 &\hspace{10mm}= \exp\left(-\frac{1}{2(1+\sigma^2)} \, \|\mathbf{x}\|^2\right) \E\left[ \exp\left(\frac{\sigma^2}{2(1+\sigma^2)^2} \sum_{i=1}^d \left(\frac{x_i}{\sigma \sqrt{1+\sigma^2}} + U_i\right)^2\right)\right] \cr
 &\hspace{10mm}= \exp\left(-\frac{1}{2(1+\sigma^2)} \, \|\mathbf{x}\|^2\right) \exp\left(\frac{\frac{1}{2(1+\sigma^2)^3} \, \|\mathbf{x}\|^2}{1 - \frac{\sigma^2}{(1 + \sigma^2)^2}}\right) \left(1 - \frac{\sigma^2}{(1 + \sigma^2)^2}\right)^{-d/2},
 \end{align*}
using \autoref{lemma:dist_Y_ideal} and the explicit expression of the moment generating function of a non-central chi-squared distribution. Note that the latter can be used because $\sigma^2 / (1+\sigma^2)^2 < 1$. The first two terms on the RHS above can be combined and simplified; their product is equal to
\[
 \exp\left(-\|\mathbf{x}\|^2 \frac{\sigma^2}{2((1+\sigma^2)^2 - \sigma^2)}\right).
\]
\end{proof}

\begin{Proposition}\label{prop:upper_bound_acc2}
 Consider a current state $\bx$ and that $\bY \sim Q_{w, \sigma}(\bx, \cdot \,)$ with $q_\sigma(\bx, \cdot \,) = \mathcal{N}(\mathbf{x},$ $ \sigma^2 \I_d)$ and $w(\bx, \by) = \pi(\by)/\pi(\bx)$. Assume that $U := - \log \pi$ is continuously differentiable. Assume that $U$ is $L$-smooth, meaning that its gradient, $\nabla U$, is $L$-Lipschitz. Finally, assume that $\|\nabla U(\bx)\| \rightarrow \infty$ as $\|\bx\|  \rightarrow \infty$. Then, for any $\sigma$ and $d$, it holds that
 \[
  \lim_{\|\bx\| \rightarrow \infty}  \E_{\bx}[\alpha_{\text{ideal}}(\bx, \bY)] = 0.
 \]
\end{Proposition}

\begin{proof}[Proof of \autoref{prop:upper_bound_acc2}]
  Let us define $\Z(\bx) := \int \pi(\by) \, q_\sigma(\bx, \by) \,\d\by$. We first provide a lower bound on $\Z(\bx) / \pi(\bx)$ which will be useful to prove our result. To provide this lower bound, we will use that
  \[
   |U(\by) - U(\bx) - \nabla U(\bx)^T (\by - \bx)| \leq \frac{L}{2} \|\by - \bx\|^2,
  \]
  given that $U$ is $L$-smooth \citep[Lemma 3.4]{bubeck2015convex}. Using this, we have that
  \begin{align*}
   \frac{\Z(\bx)}{\pi(\bx)} &= \int \exp(-(U(\by) - U(\bx))) \, q_\sigma(\bx, \by) \,\d\by \cr
   &\geq \int \exp\left(-\nabla U(\bx)^T (\by - \bx) - \frac{L}{2} \|\by - \bx\|^2\right) \, q_\sigma(\bx, \by) \,\d\by \cr
   &= \int \exp\left(-\nabla U(\bx)^T \bz - \frac{L}{2} \|\bz\|^2\right) \, \frac{1}{(2\pi \sigma^2)^{d/2}} \exp\left(-\frac{1}{2\sigma^2} \|\bz\|^2\right) \,\d\bz \cr
   &= \left(\frac{\sigma^{-2}}{L + \sigma^{-2}}\right)^{d/2} \int \exp\left(-\nabla U(\bx)^T \bz\right) \, \frac{1}{(2\pi(L + \sigma^{-2}))^{d/2}} \exp\left(-\frac{L + \sigma^{-2}}{2} \|\bz\|^2\right) \,\d\bz \cr
   &= \left(\frac{\sigma^{-2}}{L + \sigma^{-2}}\right)^{d/2} \exp\left(\frac{1}{2(L + \sigma^{-2})}\|\nabla U(\bx)\|^2\right),
  \end{align*}
  where the third line follows from a change of variables $\bz = \by - \bx$, and the last line follows from the explicit expression of the moment generating function of a multivariate normal distribution.

  Now, we make use of that bound. When $\bY \sim Q_{w, \sigma}(\bx, \cdot \,)$ with $q_\sigma(\bx, \cdot \,) = \mathcal{N}(\mathbf{x}, \sigma^2 \I_d)$ and $w(\bx, \by) = \pi(\by)/\pi(\bx)$,
  \[
   \alpha_{\text{ideal}}(\bx, \by) = 1 \wedge \frac{\Z(\bx)}{\Z(\by)} \leq \frac{\Z(\bx)}{\Z(\by)}.
  \]
  Therefore, we have that
  \begin{align*}
   \E_{\bx}[\alpha_{\text{ideal}}(\bx, \bY)] &\leq \int \frac{\pi(\by) \, q_{\sigma}(\bx, \by)}{\Z(\by)} \, \d\by \cr
   &= \int \frac{\pi(\bx + \bz) \, q_{\sigma}(\mathbf{0}, \bz)}{\Z(\bx + \bz)} \, \d\bz \cr
   & \leq \int \left(\frac{\sigma^{-2}}{L + \sigma^{-2}}\right)^{-d/2} \exp\left(-\frac{1}{2(L + \sigma^{-2})}\|\nabla U(\bx + \bz)\|^2\right) \, q_{\sigma}(\mathbf{0}, \bz) \, \d\bz \
  \end{align*}
  after a change of variables $\bz = \by - \bx$. We conclude the proof using the bounded convergence theorem: for any $\sigma, d, L$ and $\bz$,
  \[
   \left(\frac{\sigma^{-2}}{L + \sigma^{-2}}\right)^{-d/2} \exp\left(-\frac{1}{2(L + \sigma^{-2})}\|\nabla U(\bx + \bz)\|^2\right) \rightarrow 0,
   \]
   as $\|\bx\| \rightarrow \infty$ and
   \[
    \left(\frac{\sigma^{-2}}{L + \sigma^{-2}}\right)^{-d/2} \exp\left(-\frac{1}{2(L + \sigma^{-2})}\|\nabla U(\bx + \bz)\|^2\right) \leq  \left(\frac{\sigma^{-2}}{L + \sigma^{-2}}\right)^{-d/2}.
   \]
  \end{proof}

\subsection{Proof of \autoref{theorem:scaling_limit_ideal}}

\begin{proof}[Proof of \autoref{theorem:scaling_limit_ideal}]
 We prove a weak convergence towards a diffusion, denoted by $\{Z(t): t \geq 0\}$, in the Skorokhod topology (for more details about this type of convergence, see Chapter 3 of \cite{ethier1986markov}). In order to prove the result, we demonstrate the convergence of the finite-dimensional distributions of $\{Z_{d, \text{ideal}}(t): t\geq0\}$ to those of $\{Z(t): t \geq 0\}$. To achieve this, we verify Condition (c) of Theorem 8.2 from Chapter 4 of \cite{ethier1986markov}. The weak convergence then follows from Corollary 8.6 of Chapter 4 of \cite{ethier1986markov}. The remaining conditions of Theorem 8.2 and the conditions of Corollary 8.6 are either straightforward or easily derived from the proof given here.

The proof of the convergence of the finite-dimensional distributions relies on the convergence of (what we call) the \textit{pseudo-generator} of $\{Z_{d, \text{ideal}}(t): t\geq0\}$, an operator that we now introduce. The proof follows.

\textbf{Pseudo-generator.} The process $\{Z_{d, \text{ideal}}(t): t\geq0\}$ is a jump process for which the time in between the (possible) jumps is deterministic: we know that every $1/d^{2\tau}$ unit of time, the process jumps if the proposal is accepted. The pseudo-generator is a discrete version of infinitesimal generators of stochastic processes. It is defined as follows:
\[
 \phi_{d, \text{ideal}}(t) := d^{2\tau} \E[h(Z_{d, \text{ideal}}(t + 1 / d^{2\tau})) - h(Z_{d, \text{ideal}}(t)) \mid \mathcal{F}_{\bZ_{d, \text{ideal}}}(t)],
\]
where $h$ is a test function and $\mathcal{F}_{\bZ_{d, \text{ideal}}}(t)$ is the natural filtration associated to $\{\bZ_{d, \text{ideal}}(t): t \geq 0\}$. The Markov property, the fact that $\bZ_{d, \text{ideal}}(0) \sim \pi_d$ and that $\{\bX_{d, \text{ideal}}(m): m \in \N\}$ is time-homogeneous imply that for any $t$,
\begin{align*}
 \phi_{d, \text{ideal}}(t) &= d^{2\tau} \E[h(Z_{d, \text{ideal}}(t + 1 / d^{2\tau})) - h(Z_{d, \text{ideal}}(t)) \mid \bZ_{d, \text{ideal}}(t)] \cr
 &\hspace{-1mm}\stackrel{\text{dist.}}{=} d^{2\tau} \E[(h(Y_{1}) - h(X_1)) \, \alpha_{\text{ideal}}(\mathbf{X}_d, \mathbf{Y}_d) \mid \mathbf{X}_d],
\end{align*}
where ``\, $\stackrel{\text{dist.}}{=}$ \,'' denotes an equality in distribution, $\mathbf{X}_d \sim \pi_d$ and $Y_{1}$ is the first coordinate of $\mathbf{Y}_{d} \sim Q_{w, \sigma}(\mathbf{X}_d, \cdot \,)$.

We prove the convergence of $\phi_{d, \text{ideal}}(t)$ towards $Gh(Z_{d, \text{ideal}}(t))$ in some sense, where $G$ is the generator of the diffusion. The form of $G$ allows to restrict our attention to test functions $h \in \mathcal{C}_c^\infty(\R)$, the space of infinitely differentiable functions on $\R$ with compact support (Theorem 2.5 from Chapter 8 of \cite{ethier1986markov}).

\textbf{Proof of the convergence of the finite-dimensional distributions.} Condition (c) of Theorem 8.2 from Chapter 4 of \cite{ethier1986markov} essentially reduces to the following convergence:
\[
 \E|\phi_{d, \text{ideal}}(t) - Gh(Z_{d, \text{ideal}}(t))| \rightarrow 0 \quad \text{as} \quad d \rightarrow \infty,
\]
for all $t$. The generator is such that
\begin{align*}
 Gh(Z_{d, \text{ideal}}(t)) &= \ell^2 (\vartheta_{w, \tau}(\ell) / 2)(\log \varphi(Z_{d, \text{ideal}}(t)))' h'(Z_{d, \text{ideal}}(t)) + \ell^2 (\vartheta_{w, \tau}(\ell) / 2) h''(Z_{d, \text{ideal}}(t)) \cr
 &\hspace{-1mm}\stackrel{\text{dist.}}{=} \ell^2 (\vartheta_{w, \tau}(\ell) / 2)(\log \varphi(X_1))' h'(X_1) + \ell^2 (\vartheta_{w, \tau}(\ell) / 2) h''(X_1),
\end{align*}
where the equality in distribution follows from the fact that the process starts in stationarity, that is $\bZ_{d, \text{ideal}}(0) \sim \pi_d$. We can thus see $\phi_{d, \text{ideal}}(t) - Gh(Z_{d, \text{ideal}}(t))$ in the expectation above as a difference of two functions of $\mathbf{X}_d\sim\pi_d$, and will write $Gh(X_1)$ instead of $Gh(Z_{d, \text{ideal}}(t))$ in the expectation. Note that the form of the generator indicates that $\varphi$ is the unique invariant PDF of the diffusion.

 We prove that
 \[
  \E|d^{2\tau} \E[(h(Y_{1}) - h(X_1)) \, \alpha_{\text{ideal}}(\mathbf{X}_d, \mathbf{Y}_d) \mid \mathbf{X}_d] - Gh(X_1)| \rightarrow 0.
  \]
  The key here is to use a Taylor expansion in $(h(Y_{1}) - h(X_1)) \, \alpha_{\text{ideal}}(\mathbf{X}_d, \mathbf{Y}_d)$ to obtain derivatives of $h$ as in $Gh(X_1)$. Specifically, we write
 \begin{align*}
  h(Y_1) - h(X_1) = h'(X_1)(Y_1 - X_1) + h''(X_1)\frac{(Y_1 - X_1)^2}{2} + h'''(W)\frac{(Y_1 - X_1)^3}{6},
 \end{align*}
 where $W$ belongs to $(X_1, Y_1)$ or $(Y_1, X_1)$ (depending which one of $X_1$ and $Y_1$ is smaller). Therefore, using the triangle inequality,
 \begin{align}\label{eqn:proof_thm_2}
  &\E|d^{2\tau} \E[(h(Y_{1}) - h(X_1)) \, \alpha_{\text{ideal}}(\mathbf{X}_d, \mathbf{Y}_d) \mid \mathbf{X}_d] - Gh(X_1)|   \cr
  &\quad= \E|d^{2\tau} \E[h'(X_1)(Y_1 - X_1) \, \alpha_{\text{ideal}}(\mathbf{X}_d, \mathbf{Y}_d) \mid \mathbf{X}_d] - \ell^2 (\vartheta_{w, \tau}(\ell) / 2)(\log \varphi(X_1))' h'(X_1)| \cr
  &\qquad + \E\left|d^{2\tau} \E\left[h''(X_1)\frac{(Y_1 - X_1)^2}{2}  \, \alpha_{\text{ideal}}(\mathbf{X}_d, \mathbf{Y}_d) \mid \mathbf{X}_d\right] - \ell^2 (\vartheta_{w, \tau}(\ell) / 2) h''(X_1)\right| \cr
  &\qquad+\E\left|d^{2\tau} \E\left[h'''(W)\frac{(Y_1 - X_1)^3}{6} \, \alpha_{\text{ideal}}(\mathbf{X}_d, \mathbf{Y}_d) \mid \mathbf{X}_d\right]\right|.
 \end{align}
 We now prove that each term on the RHS converges to 0. We prove this for the case $w(\mathbf{x}_d, \mathbf{y}_d) = \pi(\mathbf{y}_d) / \pi(\mathbf{x}_d)$; the case $w(\mathbf{x}_d, \mathbf{y}_d) = \sqrt{\pi(\mathbf{y}_d) / \pi(\mathbf{x}_d)}$ is proved similarly.

We have that
\begin{align*}
 &\E\left|d^{2\tau} \E\left[h'''(W)\frac{(Y_1 - X_1)^3}{6}\, \alpha_{\text{ideal}}(\mathbf{X}_d, \mathbf{Y}_d) \mid \mathbf{X}_d\right]\right| \leq \frac{M}{6} d^{2\tau} \E[\E[|Y_1 - X_1|^3 \mid \mathbf{X}_d]] \cr
 &\leq \frac{M}{6} d^{2\tau}\left( \E\left|\frac{\sigma_d^2X_1}{1+\sigma_d^2}\right|^3 + 3\E\left[\left(\frac{\sigma_d^2X_1}{1+\sigma_d^2}\right)^2\right]\sqrt{\frac{\sigma_d^2}{1 + \sigma_d^2}}\E|U_1| + 3\E\left[\left|\frac{\sigma_d^2X_1}{1+\sigma_d^2}\right|\right]\frac{\sigma_d^2}{1 + \sigma_d^2}\E[U_1^2] \right. \cr
 &\hspace{90mm}\left. + \left(\frac{\sigma_d^2}{1+\sigma_d^2}\right)^{3/2}\E|U_1|^3 \right),
\end{align*}
using Jensen's inequality, that $0\leq \alpha_{\text{ideal}} \leq 1$, that there exists a positive constant $M$ such that $|h'''|\leq M$, \autoref{lemma:dist_Y_ideal} and the triangle inequality. The random variables $X_1$ and $U_1$ are independent and both follow a standard normal distribution, implying that $\E|X_1|^p\E|U_1|^q$ is finite and independent of $d$ for any $p$ and $q$. Recall that $\sigma_d = \ell / d^\tau$. The sum above thus converges to 0.

 For the other terms in \eqref{eqn:proof_thm_2}, we view the function $\alpha_{\text{ideal}}(\mathbf{x}_d, \mathbf{y}_d)$ (for any realization of $\bX_d$ and $\bY_d$) as a function of $y_1$ (while keeping the other variables fixed) and we use a Taylor expansion around $x_1$ to obtain a function independent of $x_1$ and $y_1$. To see this, we recall that
 \begin{align*}
  \alpha_{\text{ideal}}(\mathbf{x}_d, \mathbf{y}_d) &= 1 \wedge \exp\left(\frac{1}{2(1+\sigma_d^2)} \sum_{i=1}^d (y_i^2 - x_i^2)\right),
 \end{align*}
using \autoref{lemma:acc_prob}. We thus write
\begin{align*}
 \alpha_{\text{ideal}}(\mathbf{x}_d, \mathbf{y}_d) = \alpha_{\text{ideal}}(\mathbf{x}_d, \mathbf{y}_d^*) &+ \left(\left.\frac{\partial}{\partial y_1}  \alpha_{\text{ideal}}(\mathbf{x}_d, \mathbf{y}_d)\right|_{y_1 = x_1}\right)(y_1 - x_1) \cr
 &\quad+  \left(\left.\frac{\partial^2}{\partial y_1^2}  \alpha_{\text{ideal}}(\mathbf{x}_d, \mathbf{y}_d)\right|_{y_1 = w}\right)\frac{(y_1 - x_1)^2}{2},
\end{align*}
where $\mathbf{y}_d^* := (x_1, y_2, \ldots, y_d)$ and $w$ belongs to $(x_1, y_1)$ or $(y_1, x_1)$. We have that
\[
 \alpha_{\text{ideal}}(\mathbf{x}_d, \mathbf{y}_d^*) = 1 \wedge \exp\left(\frac{1}{2(1+\sigma_d^2)} \sum_{i=2}^d (y_i^2 - x_i^2)\right),
\]
\begin{align}\label{eqn_partial_deriv_alpha}
 \left.\frac{\partial}{\partial y_1}  \alpha_{\text{ideal}}(\mathbf{x}_d, \mathbf{y}_d)\right|_{y_1 = x_1} = \frac{x_1}{1+\sigma_d^2} \exp\left(\frac{1}{2(1+\sigma_d^2)} \sum_{i=2}^d (y_i^2 - x_i^2)\right) \1\left(\sum_{i=2}^d (y_i^2 - x_i^2) < 0\right),
\end{align}
\begin{align*}
& \left.\frac{\partial^2}{\partial y_1^2}  \alpha_{\text{ideal}}(\mathbf{x}_d, \mathbf{y}_d)\right|_{y_1 = w} = \left(\frac{1}{1+\sigma_d^2} \exp\left(\frac{1}{2(1+\sigma_d^2)} \sum_{i=2}^d (y_i^2 - x_i^2) + w^2 - x_1^2\right) \right. \cr
 &+\left.\frac{w^2}{1+\sigma_d^2} \exp\left(\frac{1}{2(1+\sigma_d^2)} \sum_{i=2}^d (y_i^2 - x_i^2) + w^2 - x_1^2\right)\right)\1\left(\sum_{i=2}^d (y_i^2 - x_i^2) + w^2 - x_1^2 < 0\right).
\end{align*}
We replace $\alpha_{\text{ideal}}(\mathbf{X}_d, \mathbf{Y}_d)$ in
\[
 \E|d^{2\tau} \E[h'(X_1)(Y_1 - X_1) \, \alpha_{\text{ideal}}(\mathbf{X}_d, \mathbf{Y}_d) \mid \mathbf{X}_d] - \ell^2 (\vartheta_{w, \tau}(\ell) / 2)(\log \varphi(X_1))' h'(X_1)|
 \]
  in \eqref{eqn:proof_thm_2} by the sum above. We want to prove that the expectation converges to 0. Using the triangle inequality and that $M$ can be chosen such that $|h'|\leq M$, it is sufficient to show that
\begin{align*}
 &M\E\left|d^{2\tau} \E\left[(Y_1 - X_1)\left(1 \wedge \exp\left(\frac{1}{2(1+\sigma_d^2)} \sum_{i=2}^d (Y_i^2 - X_i^2)\right)\right)\right.\right.  \cr
 &\quad+ \left. (Y_1 - X_1)^2 \frac{X_1}{1+\sigma_d^2} \exp\left(\frac{1}{2(1+\sigma_d^2)} \sum_{i=2}^d (Y_i^2 - X_i^2)\right) \1\left(\sum_{i=2}^d (Y_i^2 - X_i^2) < 0\right)  \mid \mathbf{X}_d\right] \cr
 &\qquad\left. - \ell^2 (\vartheta_{w, \tau}(\ell) / 2)(\log \varphi(X_1))' \right.| \rightarrow 0,
\end{align*}
and that the following expectation converges to 0:
\small
\begin{align*}
 &Md^{2\tau}\E\left|\frac{(Y_1 - X_1)^3}{2}\left(\frac{1}{1+\sigma_d^2} \exp\left(\frac{1}{2(1+\sigma_d^2)} \sum_{i=2}^d (Y_i^2 - X_i^2) + W^2 - X_1^2\right) \right.\right. \cr
 &+\left.\left.\frac{W^2}{1+\sigma_d^2} \exp\left(\frac{1}{2(1+\sigma_d^2)} \sum_{i=2}^d (Y_i^2 - X_i^2) + W^2 - X_1^2\right)\right)\1\left(\sum_{i=2}^d (Y_i^2 - X_i^2) + W^2 - X_1^2 < 0\right)\right|.
\end{align*}
\normalsize
We start with the last term. Using the triangle inequality, it is lesser than or equal to
\begin{align*}
 \frac{Md^{2\tau}}{2(1+\sigma_d^2)}\E|Y_1 - X_1|^3 + \frac{Md^{2\tau}}{2(1+\sigma_d^2)}\E[|Y_1 - X_1|^3W^2].
\end{align*}
We have seen before that $d^{2\tau}\E|Y_1 - X_1|^3 \rightarrow 0$. Also, given $X_1$ and writing $Y_1 = \frac{X_1}{1+\sigma_d^2}+\sqrt{\frac{\sigma_d^2}{1+\sigma_d^2}}U_1$ under the conditional expectation, we can show that $|W| \leq |X_1| + \sqrt{\frac{\sigma_d^2}{1+\sigma_d^2}}|U_1|$, and consequently, that $d^{2\tau}\E[|Y_1 - X_1|^3W^2] \rightarrow 0$ in the same way we proved that $d^{2\tau}\E|Y_1 - X_1|^3 \rightarrow 0$.

For the other term, we first note that $(\log \varphi(X_1))' = -X_1$. We simplify the notation by defining
\[
 f_1(\mathbf{X}_d) := \E\left[1 \wedge \exp\left(\frac{1}{2(1+\sigma_d^2)} \sum_{i=2}^d (Y_i^2 - X_i^2)\right) \mid \mathbf{X}_d\right],
\]
and
\[
 f_2(\mathbf{X}_d) := \E\left[\exp\left(\frac{1}{2(1+\sigma_d^2)} \sum_{i=2}^d (Y_i^2 - X_i^2)\right) \1\left(\sum_{i=2}^d (Y_i^2 - X_i^2) < 0\right)\mid \mathbf{X}_d\right].
\]
Using the conditional independence among $Y_1, \ldots, Y_d$ given $\mathbf{X}_d$,
\begin{align*}
 &\E\left[(Y_1 - X_1)\left(1 \wedge \exp\left(\frac{1}{2(1+\sigma_d^2)} \sum_{i=2}^d (Y_i^2 - X_i^2)\right)\right)\right.  \cr
 &\quad+ \left. (Y_1 - X_1)^2 \frac{X_1}{1+\sigma_d^2} \exp\left(\frac{1}{2(1+\sigma_d^2)} \sum_{i=2}^d (Y_i^2 - X_i^2)\right) \1\left(\sum_{i=2}^d (Y_i^2 - X_i^2) < 0\right)  \mid \mathbf{X}_d\right] \cr
 &= \E\left[Y_1 - X_1 \mid \mathbf{X}_d\right]f_1(\mathbf{X}_d) + \frac{X_1}{1+\sigma_d^2}\E\left[(Y_1 - X_1)^2 \mid \mathbf{X}_d\right] f_2(\mathbf{X}_d) \cr
 &= \E\left[-\frac{\sigma_d^2 X_1}{1+\sigma_d^2}+\sqrt{\frac{\sigma_d^2}{1+\sigma_d^2}}U_1 \mid \mathbf{X}_d\right]f_1(\mathbf{X}_d) \cr
  &\qquad + \frac{X_1}{1+\sigma_d^2}\E\left[\frac{\sigma_d^4 X_1^2}{(1+\sigma_d^2)^2}-2\frac{\sigma_d^2 X_1}{1+\sigma_d^2}\sqrt{\frac{\sigma_d^2}{1+\sigma_d^2}}U_1+\frac{\sigma_d^2}{1+\sigma_d^2}U_1^2 \mid \mathbf{X}_d\right] f_2(\mathbf{X}_d)\cr
 &=-\frac{\sigma_d^2 X_1}{1+\sigma_d^2}f_1(\mathbf{X}_d)+\frac{\sigma_d^4 X_1^3}{(1+\sigma_d^2)^3}f_2(\mathbf{X}_d) + \frac{\sigma_d^2X_1}{(1+\sigma_d^2)^2} f_2(\mathbf{X}_d) \cr
 &=-\sigma_d^2 X_1\left(\frac{f_1(\mathbf{X}_d)}{1+\sigma_d^2} - \frac{f_2(\mathbf{X}_d)}{(1+\sigma_d^2)^2}\right) + \frac{\sigma_d^4 X_1^3}{(1+\sigma_d^2)^3}f_2(\mathbf{X}_d).
\end{align*}
Therefore,
\begin{align*}
 &\E\left|d^{2\tau} \E\left[(Y_1 - X_1)\left(1 \wedge \exp\left(\frac{1}{2(1+\sigma_d^2)} \sum_{i=2}^d (Y_i^2 - X_i^2)\right)\right)\right.\right.  \cr
 &\quad+ \left. (Y_1 - X_1)^2 \frac{X_1}{1+\sigma_d^2} \exp\left(\frac{1}{2(1+\sigma_d^2)} \sum_{i=2}^d (Y_i^2 - X_i^2)\right) \1\left(\sum_{i=2}^d (Y_i^2 - X_i^2) < 0\right)  \mid \mathbf{X}_d\right] \cr
 &\qquad\left. - \ell^2 (\vartheta_{w, \tau}(\ell) / 2)(\log \varphi(X_1))'  \right.| \cr
 &\leq \E\left|-X_1\ell^2\left(\left(\frac{f_1(\mathbf{X}_d)}{1+\sigma_d^2} - \frac{f_2(\mathbf{X}_d)}{(1+\sigma_d^2)^2}\right) - \frac{\vartheta_{w, \tau}(\ell)}{2}\right) + \frac{\sigma_d^4 X_1^3}{(1+\sigma_d^2)^3}f_2(\mathbf{X}_d)\right| \cr
 &\leq \E\left|-X_1\ell^2\left(\left(\frac{f_1(\mathbf{X}_d)}{1+\sigma_d^2} - \frac{f_2(\mathbf{X}_d)}{(1+\sigma_d^2)^2}\right) - \frac{\vartheta_{w, \tau}(\ell)}{2}\right)\right| + d^{2\tau}\E\left|\frac{\sigma_d^4 X_1^3}{(1+\sigma_d^2)^3}\right|,
\end{align*}
using the triangle inequality and that $0\leq f_2(\mathbf{X}_d) \leq 1$. As previously,
\[
 d^{2\tau}\E\left|\frac{\sigma_d^4 X_1^3}{(1+\sigma_d^2)^3}\right| \rightarrow 0.
\]
We also have that
\[
 \frac{f_1(\mathbf{X}_d)}{1+\sigma_d^2} - \frac{f_2(\mathbf{X}_d)}{(1+\sigma_d^2)^2} =  \frac{\E[\1\left(\sum_{i=2}^d (Y_i^2 - X_i^2) \geq 0\right)\mid \mathbf{X}_d]}{(1+\sigma_d^2)^2} + \frac{\sigma_d^2f_1(\mathbf{X}_d)}{(1+\sigma_d^2)^2}.
\]
Using that $0\leq f_1(\mathbf{X}_d) \leq 1$, the triangle inequality and because
\[
 \frac{\ell^2\sigma_d^2}{(1+\sigma_d^2)^2}\E|X_1|\rightarrow 0,
\]
we can now focus on
\[
 \E\left|-X_1\ell^2\left(\frac{\E\left[\1\left(\sum_{i=2}^d (Y_i^2 - X_i^2) \geq 0\right)\mid \mathbf{X}_d\right]}{(1+\sigma_d^2)^2} - \frac{\vartheta_{w, \tau}(\ell)}{2}\right)\right|.
\]
We have that $Y_i^2 - X_i^2 = (Y_i - X_i)(Y_i +X_i)$, and as previously, we use that given $\mathbf{X}_d$, we can write $Y_i = \frac{X_i}{1 + \sigma_d^2} + \sqrt{\frac{\sigma_d^2}{1 + \sigma_d^2}} U_i$, and consequently,
\begin{align*}
 Y_i - X_i &= -\frac{\sigma_d^2}{1 + \sigma_d^2} X_i + \frac{1}{\sqrt{1 + \sigma_d^2}} \sigma_d U_i, \cr
 Y_i + X_i &= \frac{2 + \sigma_d^2}{1 + \sigma_d^2} X_i + \frac{1}{\sqrt{1 + \sigma_d^2}} \sigma_d U_i,
\end{align*}
and
\begin{align*}
 (Y_i - X_i)(Y_i + X_i) &= -\frac{2 + \sigma_d^2}{1 + \sigma_d^2} \sigma_d^2 X_i^2 - \frac{1}{(1 + \sigma_d^2)^{3/2}} \sigma_d^3 X_i U_i \cr
  &\qquad + \frac{(2 + \sigma_d^2)}{(1 + \sigma_d^2)^{3/2}} \sigma_d U_i X_i + \frac{1}{1 + \sigma_d^2} \sigma_d^2 U_i^2.
\end{align*}
We define $S_d := \sum_{i=2}^d Y_i^2 - X_i^2$ and $W_d := -\frac{2 + \sigma_d^2}{1 + \sigma_d^2} \sigma_d^2 \sum_{i=2}^dX_i^2 + \frac{(2 + \sigma_d^2)}{(1 + \sigma_d^2)^{3/2}} \sigma_d \sum_{i=2}^d U_i X_i +\frac{ \ell^2d}{(1+\sigma_d^2)(d-1)}$. If $\tau = 1/2$, in $S_d$, we notice that $\frac{1}{1 + \sigma_d^2} \sigma_d^2\sum_{i=2}^d  U_i^2 \rightarrow \ell^2$ with probability 1 as a result of the strong law of large numbers. Also, $\frac{1}{(1 + \sigma_d^2)^{3/2}} \sigma_d^3  \sum_{i=2}^d X_i U_i \rightarrow 0$ with probability 1 for the same reason. Therefore, given $\mathbf{X}_d$, $S_d$ follows essentially a normal distribution with mean $-\frac{2 + \sigma_d^2}{1 + \sigma_d^2} \sum_{i=2}^d\sigma_d^2 X_i^2 + \ell^2$ and variance $\frac{(2 + \sigma_d^2)^2}{(1 + \sigma_d^2)^{3}} \sigma_d^2 \sum_{i=2}^d X_i^2$. We want to use this and that is why we will prove that $S_d$ and $W_d$ are asymptotically equivalent; $W_d$ has a conditional normal distribution. We have an explicit expression for $\E\left[\1\left(W_d \geq 0\right)\mid \mathbf{X}_d\right]$ and we can use it.

For the rest of the proof, we consider that $\tau = 1/2$. If $\tau > 1/2$, we can use the same strategy as below, but with $W_d := -\frac{2 + \sigma_d^2}{1 + \sigma_d^2} \sigma_d^2 \sum_{i=2}^dX_i^2 + \frac{(2 + \sigma_d^2)}{(1 + \sigma_d^2)^{3/2}} \sigma_d \sum_{i=2}^d U_i X_i$ because $\frac{1}{1 + \sigma_d^2} \sigma_d^2\sum_{i=2}^d  U_i^2 \rightarrow 0$. In this case, $W_d$ has a conditional normal distribution whose mean is $ -\frac{2 + \sigma_d^2}{1 + \sigma_d^2} \sigma_d^2 \sum_{i=2}^dX_i^2$ and variance $\frac{(2 + \sigma_d^2)^2}{(1 + \sigma_d^2)^{3}} \sigma_d^2 \sum_{i=2}^d X_i^2$. Both converge to 0 with probability 1, but the mean converges quicker than the standard deviation, implying that the limit of the explicit expression for $\E\left[\1\left(W_d \geq 0\right)\mid \mathbf{X}_d\right]$ is $\Phi(0) = 1/2$, which allows to conclude.

 Let us now return to the case $\tau = 1/2$. Using the triangle inequality,
\begin{align}\label{eqn:proof_thm_2_2}
  \E\left|-X_1\ell^2\left(\frac{\E\left[\1\left(S_d \geq 0\right)\mid \mathbf{X}_d\right]}{(1+\sigma_d^2)^2} - \frac{\vartheta_{w, \tau}(\ell)}{2}\right)\right| &\leq \E\left|-X_1\ell^2\left(\frac{\E\left[\1\left(W_d \geq 0\right)\mid \mathbf{X}_d\right]}{(1+\sigma_d^2)^2} - \frac{\vartheta_{w, \tau}(\ell)}{2}\right)\right| \cr
  &\hspace{-20mm}+ \E\left|-\frac{X_1\ell^2}{(1+\sigma_d^2)^2}\left(\E\left[\1\left(W_d \geq 0\right)\mid \mathbf{X}\right] - \E\left[\1\left(S_d \geq 0\right)\mid \mathbf{X}\right]\right)\right|.
\end{align}
We now prove that the last expectation converges to 0. Using the Cauchy–Schwarz inequality and Jensen's inequality, it is sufficient to prove that $\E[(\1\left(W_d \geq 0\right) - \1\left(S_d \geq 0\right))^2]\rightarrow 0$ given that
\[
 \frac{\ell^2}{(1+\sigma_d^2)^2} \E[X_1^2]^{1/2} = \frac{\ell^2}{(1+\sigma_d^2)^2} \rightarrow \ell^2.
\]
We have that
\begin{align*}
 \E[(\1\left(W_d \geq 0\right) - \1\left(S_d \geq 0\right))^2] &=\P(W_d \geq 0, S_d < 0) + \P(W_d < 0, S_d \geq 0) \cr
 &= \P(W_d \geq 0, S_d < 0, |W_d - S_d| > d^{-1/4}) \cr
 &\quad + \P(W_d \geq 0, S_d < 0, |W_d - S_d| \leq  d^{-1/4}) \cr
 &\quad + \P(W_d < 0, S_d \geq 0, |W_d - S_d| > d^{-1/4})  \cr
 &\quad + \P(W_d < 0, S_d \geq 0, |W_d - S_d| \leq d^{-1/4})  \cr
 &\leq 2 \P(|W_d - S_d| > d^{-1/4}) \cr
 &\quad + \P(W_d \geq 0, S_d < 0, W_d - S_d \leq  d^{-1/4}) \cr
 &\quad + \P(W_d < 0, S_d \geq 0, S_d - W_d \leq d^{-1/4}) \cr
 &\leq 2 \P(|W_d - S_d| > d^{-1/4}) + \P(-d^{-1/4} \leq W_d \leq d^{-1/4}).
\end{align*}
Using Markov's inequality,
\begin{align*}
 \P(|W_d - S_d| > d^{-1/4}) &\leq \frac{\E|W_d - S_d|}{d^{-1/4}} \cr &= \frac{\E\left|\frac{ \ell^2d}{(1+\sigma_d^2)(d-1)} - \frac{1}{1 + \sigma_d^2} \sigma_d^2 \sum_{i=2}^dU_i^2 + \frac{1}{(1 + \sigma_d^2)^{3/2}} \sigma_d^3 \sum_{i=2}^d X_i U_i\right|}{d^{-1/4}}
\end{align*}
Also,
\begin{align*}
 &d^{1/4}\E\left|\frac{ \ell^2d}{(1+\sigma_d^2)(d-1)} - \frac{1}{1 + \sigma_d^2} \sigma_d^2 \sum_{i=2}^dU_i^2\right| \cr
 &\qquad\leq d^{1/4}\E\left[\left(\frac{ \ell^2d}{(1+\sigma_d^2)(d-1)} - \frac{1}{1 + \sigma_d^2} \sigma_d^2 \sum_{i=2}^dU_i^2\right)^2\right]^{1/2} \cr
 &\qquad=d^{1/4}\var\left[\frac{1}{1 + \sigma_d^2} \sigma_d^2 \sum_{i=2}^dU_i^2\right]^{1/2} = \frac{\ell^2d^{1/4}\sqrt{d-1}}{(1 + \sigma_d^2)d} \rightarrow 0,
\end{align*}
and
\begin{align*}
 d^{1/4}\E\left|\frac{1}{(1 + \sigma_d^2)^{3/2}} \sigma_d^3 \sum_{i=2}^d X_i U_i\right| \leq \frac{\ell^3(d-1)d^{1/4}}{(1 + \sigma_d^2)^{3/2}d^{3/2}}\frac{2}{\pi}\rightarrow 0.
\end{align*}
To compute $\P(-d^{-1/4} \leq W_d \leq d^{-1/4})$, we use that given $\mathbf{X}_d$,
\[
  W_d \sim \mathcal{N}\left( -\frac{2 + \sigma_d^2}{1 + \sigma_d^2} \sigma_d^2 \sum_{i=2}^dX_i^2 + \frac{ \ell^2d}{(1+\sigma_d^2)(d-1)}, \frac{(2 + \sigma_d^2)^2}{(1 + \sigma_d^2)^{3}} \sigma_d^2 \sum_{i=2}^d X_i^2 \right).
\]
Therefore,
\begin{align*}
 \P(-d^{-1/4} \leq W_d \leq d^{-1/4}) &= \E\left[\Phi\left(\frac{d^{-1/4}  +\frac{2 + \sigma_d^2}{1 + \sigma_d^2} \sigma_d^2 \sum_{i=2}^dX_i^2 - \frac{ \ell^2d}{(1+\sigma_d^2)(d-1)}}{\sqrt{\frac{(2 + \sigma_d^2)^2}{(1 + \sigma_d^2)^{3}} \sigma_d^2 \sum_{i=2}^d X_i^2}}\right)\right] \cr
 &\quad - \E\left[\Phi\left(\frac{-d^{-1/4}  +\frac{2 + \sigma_d^2}{1 + \sigma_d^2} \sigma_d^2 \sum_{i=2}^dX_i^2 - \frac{ \ell^2d}{(1+\sigma_d^2)(d-1)}}{\sqrt{\frac{(2 + \sigma_d^2)^2}{(1 + \sigma_d^2)^{3}} \sigma_d^2 \sum_{i=2}^d X_i^2}}\right)\right].
\end{align*}
As $d \rightarrow \infty$,
\[
 \frac{\pm d^{-1/4}  +\frac{2 + \sigma_d^2}{1 + \sigma_d^2} \sigma_d^2 \sum_{i=2}^dX_i^2 - \frac{ \ell^2d}{(1+\sigma_d^2)(d-1)}}{\sqrt{\frac{(2 + \sigma_d^2)^2}{(1 + \sigma_d^2)^{3}} \sigma_d^2 \sum_{i=2}^d X_i^2}} \rightarrow \frac{\ell}{2} \quad \text{with probability 1.}
\]
So, using Lebesgue's dominated convergence theorem, we know that $\P(-d^{-1/4} \leq W_d \leq d^{-1/4})\rightarrow 0$.

We return to the other term in \eqref{eqn:proof_thm_2_2}:
\begin{align*}
& \E\left|-X_1\ell^2\left(\frac{\E\left[\1\left(W_d \geq 0\right)\mid \mathbf{X}_d\right]}{(1+\sigma_d^2)^2} - \frac{\vartheta_{w, \tau}(\ell)}{2}\right)\right| \cr
&\qquad  \leq \E[X_1^2]^{1/2} \ell^2 \E\left[\left(\frac{\E\left[\1\left(-W_d \leq 0\right)\mid \mathbf{X}_d\right]}{(1+\sigma_d^2)^2} - \frac{\vartheta_{w, \tau}(\ell)}{2}\right)^2\right]^{1/2},
\end{align*}
using the Cauchy–Schwarz inequality. We have that
\[
 \E\left[\1\left(-W_d \leq 0\right)\mid \mathbf{X}_d\right] = \Phi\left( \frac{-\frac{2 + \sigma_d^2}{1 + \sigma_d^2} \sigma_d^2 \sum_{i=2}^dX_i^2 + \frac{ \ell^2d}{(1+\sigma_d^2)(d-1)}}{\sqrt{\frac{(2 + \sigma_d^2)^2}{(1 + \sigma_d^2)^{3}} \sigma_d^2 \sum_{i=2}^d X_i^2}}\right)\rightarrow \Phi\left(-\frac{\ell}{2}\right),
\]
with probability 1. Therefore,
\[
 \E\left[\left(\frac{\E\left[\1\left(-W_d \leq 0\right)\mid \mathbf{X}_d\right]}{(1+\sigma_d^2)^2} - \frac{\vartheta_{w, \tau}(\ell)}{2}\right)^2\right]^{1/2} \rightarrow 0,
\]
using Lebesgue's dominated convergence theorem, which concludes this part given that $\E[X_1^2] = 1$.

There remains to prove that
\[
 \E\left|d^{2\tau} \E\left[h''(X_1)\frac{(Y_1 - X_1)^2}{2} \, \alpha_{\text{ideal}}(\mathbf{X}_d, \mathbf{Y}_d) \mid \mathbf{X}_d\right] - \ell^2 \frac{\vartheta_{w, \tau}(\ell)}{2} h''(X_1)\right| \rightarrow 0,
\]
in \eqref{eqn:proof_thm_2}. We proceed as before with a Taylor expansion around $x_1$ of $\alpha_{\text{ideal}}(\mathbf{x}_d, \mathbf{y}_d)$, viewed as function of $y_1$. This time it is less complicated because we write
\[
 \alpha_{\text{ideal}}(\mathbf{x}_d, \mathbf{y}_d) = \alpha_{\text{ideal}}(\mathbf{x}_d, \mathbf{y}_d^*) + \left(\left.\frac{\partial}{\partial y_1}  \alpha_{\text{ideal}}(\mathbf{x}_d, \mathbf{y}_d)\right|_{y_1 = w}\right)(y_1 - x_1).
\]
Using that $M$ can be chosen such that $|h''|\leq M$, the triangle inequality, and that $0\leq \exp(x)\1(x < 0) \leq 1$ (see the partial derivative $\frac{\partial}{\partial y_1}  \alpha_{\text{ideal}}(\mathbf{x}_d, \mathbf{y}_d) $ \eqref{eqn_partial_deriv_alpha}),
\begin{align*}
 &\E\left|d^{2\tau} \E\left[h''(X_1)\frac{(Y_1 - X_1)^2}{2} \, \alpha_{\text{ideal}}(\mathbf{X}_d, \mathbf{Y}_d) \mid \mathbf{X}_d\right] - \ell^2 \frac{\vartheta_{w, \tau}(\ell)}{2} h''(X_1)\right| \cr
  &\leq M \E\left|d^{2\tau} \E\left[\frac{(Y_1 - X_1)^2}{2} \mid \mathbf{X}_d\right]f_1(\mathbf{X}_d) - \ell^2 \frac{\vartheta_{w, \tau}(\ell)}{2} \right| + \frac{d^{2\tau}}{2(1+\sigma_d^2)} \E[|W||Y_1 - X_1|^3].
\end{align*}
From what we have seen before, we know that
\[
 \frac{d^{2\tau}}{2(1+\sigma_d^2)}\E[|W||Y_1 - X_1|^3] \rightarrow 0.
\]
We also know that
\[
 \E\left[\frac{(Y_1 - X_1)^2}{2} \mid \mathbf{X}_d\right] = \E\left[\frac{\sigma_d^4 X_1^2}{2(1+\sigma_d^2)^2}-\frac{\sigma_d^2 X_1}{1+\sigma_d^2}\sqrt{\frac{\sigma_d^2}{1+\sigma_d^2}}U_1+\frac{\sigma_d^2}{2(1+\sigma_d^2)}U_1^2 \mid \mathbf{X}_d\right].
\]
Therefore, using the triangle inequality and that $0\leq f_1(\mathbf{X}_d) \leq 1$,
\begin{align*}
  \E\left|d^{2\tau} \E\left[\frac{(Y_1 - X_1)^2}{2} \mid \mathbf{X}_d\right]f_1(\mathbf{X}_d) - \ell^2 \frac{\vartheta_{w, \tau}(\ell)}{2} \right| &\leq \ell^2\E\left|\frac{1}{2(1+\sigma_d^2)} f_1(\mathbf{X}_d) - \frac{\vartheta_{w, \tau}(\ell)}{2} \right| \cr
  &\quad + d^{2\tau}\E\left[\frac{\sigma_d^4 X_1^2}{2(1+\sigma_d^2)^2}\right].
\end{align*}
We have that
\[
  d^{2\tau}\E\left[\frac{\sigma_d^4 X_1^2}{2(1+\sigma_d^2)^2}\right]= \frac{\ell^4}{2d(1+\sigma_d^2)^2} \rightarrow 0.
\]
To conclude the proof, there thus remains to show that
\[
 \E\left|\frac{1}{1+\sigma_d^2} \, \E\left[1 \wedge \exp\left(\frac{1}{2(1+\sigma_d^2)} \sum_{i=2}^d (Y_i^2 - X_i^2)\right) \mid \mathbf{X}_d\right] - \vartheta_{w, \tau}(\ell) \right| \rightarrow 0.
\]
We proceed similarly as before when we proved that
 \[
  \E\left|\E\left[\1\left(\sum_{i=2}^d (Y_i^2 - X_i^2) \geq 0\right) \mid \mathbf{X}_d\right] - \Phi\left(-\frac{\ell}{2}\right) \right| \longrightarrow 0,
 \]
 when $\tau = 1/2$.

 Using the triangle inequality,
 \begin{align*}
  &\E\left|\frac{1}{1+\sigma_d^2} \, \E\left[1 \wedge \exp\left(\frac{1}{2(1+\sigma_d^2)} \sum_{i=2}^d (Y_i^2 - X_i^2)\right) \mid \mathbf{X}_d\right] - \vartheta_{w, \tau}(\ell) \right| \cr
  &\quad\leq  \E\left|\frac{1}{1+\sigma_d^2} \, \E\left[1 \wedge \exp\left(\frac{1}{2(1+\sigma_d^2)} \sum_{i=2}^d (Y_i^2 - X_i^2)\right) \mid \mathbf{X}_d\right] \right. \cr
  &\qquad\qquad\left.- \E\left[1 \wedge \exp\left(\frac{1}{2(1+\sigma_d^2)} \sum_{i=2}^d (Y_i^2 - X_i^2)\right) \mid \mathbf{X}_d\right] \right| \cr
  &\qquad + \E\left|\E\left[1 \wedge \exp\left(\frac{1}{2(1+\sigma_d^2)} \sum_{i=2}^d (Y_i^2 - X_i^2)\right) \mid \mathbf{X}_d\right] - \vartheta_{w, \tau}(\ell) \right|.
 \end{align*}
 The first term on the RHS converges to 0 given that $0 \leq 1 \wedge \exp(x) \leq 1$.

 Given $\mathbf{X}_d$, we saw that we can write
 \begin{align*}
 (Y_i - X_i)(Y_i + X_i) &= -\frac{2 + \sigma_d^2}{1 + \sigma_d^2} \sigma_d^2 X_i^2 - \frac{1}{(1 + \sigma_d^2)^{3/2}} \sigma_d^3 X_i U_i \cr
  &\qquad + \frac{(2 + \sigma_d^2)}{(1 + \sigma_d^2)^{3/2}} \sigma U_i X_i + \frac{1}{1 + \sigma_d^2} \sigma_d^2 U_i^2.
\end{align*}
We define $S_d := \sum_{i=2}^d Y_i^2 - X_i^2$ and $W_d := -\frac{2 + \sigma_d^2}{1 + \sigma_d^2} \sigma_d^2 \sum_{i=2}^dX_i^2 + \frac{(2 + \sigma_d^2)}{(1 + \sigma_d^2)^{3/2}} \sigma_d \sum_{i=2}^d U_i X_i +\frac{ \ell^2d}{(1+\sigma_d^2)(d-1)}$. For the rest of the proof, we consider that $\tau = 1/2$. If $\tau > 1/2$, we can use the same strategy as below, but with $W_d := -\frac{2 + \sigma_d^2}{1 + \sigma_d^2} \sigma_d^2 \sum_{i=2}^dX_i^2 + \frac{(2 + \sigma_d^2)}{(1 + \sigma_d^2)^{3/2}} \sigma_d \sum_{i=2}^d U_i X_i$ because $\frac{1}{1 + \sigma_d^2} \sigma_d^2\sum_{i=2}^d  U_i^2 \rightarrow 0$. In this case, $W_d$ has a conditional normal distribution whose mean is $ -\frac{2 + \sigma_d^2}{1 + \sigma_d^2} \sigma_d^2 \sum_{i=2}^dX_i^2$ and variance $\frac{(2 + \sigma_d^2)^2}{(1 + \sigma_d^2)^{3}} \sigma_d^2 \sum_{i=2}^d X_i^2$. Both converge to 0 with probability 1, but the mean converges quicker than the standard deviation, implying that the limit of the explicit expression for $\E\left[1\wedge \exp\left(\frac{1}{2(1+\sigma_d^2)}W_d\right)  \mid \mathbf{X}_d\right]$ is $2\Phi(0) = 1$, which allows to conclude.

Let us return to the case $\tau = 1/2$. Using the triangle inequality,
 \begin{align*}
   &\E\left[\left|\E\left[1\wedge \exp\left(\frac{1}{2(1+\sigma_d^2)}S_d\right)  \mid \mathbf{X}_d\right] - \vartheta_{w, \tau}(\ell) \right|\right] \cr
   &\quad\leq  \E\left[\left|\E\left[1\wedge \exp\left(\frac{1}{2(1+\sigma_d^2)}S_d\right)  \mid \mathbf{X}_d\right] - \E\left[1\wedge \exp\left(\frac{1}{2(1+\sigma_d^2)}W_d\right)  \mid \mathbf{X}_d\right] \right|\right] \cr
    &\qquad + \E\left[\left|\E\left[1\wedge \exp\left(\frac{1}{2(1+\sigma_d^2)}W_d\right)  \mid \mathbf{X}_d\right] - \vartheta_{w, \tau}(\ell) \right|\right].
 \end{align*}

 We now show that each term vanishes. We start with the first one:
 \begin{align*}
  &\E\left[\left|\E\left[1\wedge \exp\left(\frac{1}{2(1+\sigma_d^2)}S_d\right)  \mid \mathbf{X}_d\right] - \E\left[1\wedge \exp\left(\frac{1}{2(1+\sigma_d^2)}W_d\right)  \mid \mathbf{X}_d\right] \right|\right] \cr
  &\quad \leq \E\left[\left|1\wedge \exp\left(\frac{1}{2(1+\sigma_d^2)}S_d\right)  - 1\wedge \exp\left(\frac{1}{2(1+\sigma_d^2)}W_d\right) \right|\right] \cr
  &\quad \leq \frac{1}{2(1+\sigma_d^2)}\E|S_d - W_d|,
 \end{align*}
 using Jensen's inequality and that the function $1 \wedge \exp(x)$ is 1-Lipschitz continuous. It has been proved previously that $\E|S_d - W_d| \rightarrow 0$.

 Given $\mathbf{X}_d$,
\[
  \frac{1}{2(1+\sigma_d^2)}W_d \sim \mathcal{N}\left(\mu_d, s_d^2 \right),
\]
with
\[
 \mu_d:= -\frac{2 + \sigma_d^2}{2(1 + \sigma_d^2)^2} \sigma_d^2 \sum_{i=1}^dX_i^2 + \frac{ \ell^2d}{2(1+\sigma_d^2)^2(d-1)},
\]
and
\[
 s_d^2:= \frac{(2 + \sigma_d^2)^2}{4(1 + \sigma_d^2)^{5}} \sigma_d^2 \sum_{i=1}^d X_i^2.
\]
Therefore,
\begin{align*}
 \E\left[1\wedge \exp\left(\frac{1}{2(1+\sigma_d^2)}W_d\right)  \mid \mathbf{X}_d\right] &= \Phi\left(\frac{\mu_d}{s_d}\right)+\exp\left(\mu_d+\frac{s_d^2}{2}\right)\Phi\left(-s_d-\frac{\mu_d}{s_d}\right) \cr
 &\rightarrow 2\Phi\left(-\frac{\ell}{2}\right),
\end{align*}
 with probability 1. Lebesgue's dominated convergence theorem allows to establish that
 \[
  \E\left[\left|\E\left[1\wedge \exp\left(\frac{1}{2(1+\sigma_d^2)}W_d\right)  \mid \mathbf{X}_d\right] - \vartheta_{w, \tau}(\ell) \right|\right] \rightarrow 0,
 \]
 which concludes the proof.
\end{proof}

\subsection{Proof of \autoref{theorem:MTM_app_ideal}}

\begin{proof}[Proof of \autoref{theorem:MTM_app_ideal}]
 We first prove that if
   \begin{align*}
  \E\left[d^{2\tau}\left|\frac{w(\mathbf{X}_d, \mathbf{Y}_1)}{\frac{1}{N_d}\sum_{i = 1}^{N_d} w(\mathbf{X}_d, \mathbf{Y}_i)} -  \frac{w(\mathbf{X}_d, \mathbf{Y}_1)}{\E[w(\mathbf{X}_d, \mathbf{Y}_1) \mid \mathbf{X}_d]}\right|\right] \leq \frac{d^{2\tau}}{N_d^{1/2}} \, \varrho_1(d) \rightarrow 0,
 \end{align*}
 and
 \begin{align*}
  \E[d^{2\tau}|\alpha(\mathbf{X}_d, \mathbf{Y}_J) - \alpha_{\text{ideal}}(\mathbf{X}_d, \mathbf{Y}_J)|] \leq \frac{d^{2\tau}}{N_d^{1/2}} \, \varrho_2(d) \rightarrow 0,
 \end{align*}
 then $\{Z_{d, \text{MTM}}(t): t \geq 0\}$ converges weakly towards the same Langevin diffusion $\{Z(t): t \geq 0\}$ as in \autoref{theorem:scaling_limit_ideal}, where $\varrho_1(d)$ and $\varrho_2(d)$ are explicitly defined below.

 We saw in the proof of \autoref{theorem:scaling_limit_ideal} that to prove a weak convergence towards a diffusion, it is essentially sufficient to prove that the pseudo-generator of $\{Z_{d, \text{MTM}}(t): t \geq 0\}$ converges towards the generator of the diffusion in the $1$-norm. We thus first derive the pseudo-generator of $\{Z_{d, \text{MTM}}(t): t \geq 0\}$. It is defined as follows:
\[
 \phi_{d, \text{MTM}}(t) := d^{2\tau} \E[h(Z_{d, \text{MTM}}(t + 1 / d^{2\tau})) - h(Z_{d, \text{MTM}}(t)) \mid \mathcal{F}_{\bZ_{d, \text{MTM}}}(t)],
\]
where $h$ is a test function and $\mathcal{F}_{\bZ_{d, \text{MTM}}}(t)$ is the natural filtration associated to $\{\bZ_{d, \text{MTM}}(t): t \geq 0\}$. The Markov property, the fact that $\bZ_{d, \text{MTM}}(0) \sim \pi_d$ and that $\{\mathbf{X}_{d, \text{MTM}}(m): m \in \N\}$ is time-homogeneous imply that for any $t$,
\begin{align*}
 \phi_{d, \text{MTM}}(t) &= d^{2\tau} \E[h(Z_{d, \text{MTM}}(t + 1 / d^{2\tau})) - h(Z_{d, \text{MTM}}(t)) \mid \bZ_{d, \text{MTM}}(t)] \cr
 &\hspace{-1mm}\stackrel{\text{dist.}}{=} d^{2\tau} \E[(h(Y_{J, 1}) - h(X_1)) \, \alpha(\mathbf{X}_d, \mathbf{Y}_J) \mid \mathbf{X}_d],
\end{align*}
where the last equality is in distribution, $\mathbf{X}_d \sim \pi_d$ and $Y_{J, 1}$ is the first coordinate of $\mathbf{Y}_{J}$, a proposal generated by MTM.

The convergence in the $1$-norm of the pseudo-generator of $\{Z_{d, \text{MTM}}(t): t \geq 0\}$ towards the generator of the diffusion thus corresponds to
\begin{align*}
 &\E|d^{2\tau} \E[(h(Y_{J, 1}) - h(X_1)) \, \alpha(\mathbf{X}_d, \mathbf{Y}_J) \mid \mathbf{X}_d] - Gh(X_1)| \cr
  & \leq \E|d^{2\tau} \E[(h(Y_{J, 1}) - h(X_1)) \, \alpha(\mathbf{X}_d, \mathbf{Y}_J) \mid \mathbf{X}_d] - d^{2\tau} \E[(h(Y_{1}) - h(X_1)) \, \alpha_{\text{ideal}}(\mathbf{X}_d, \mathbf{Y}_d) \mid \mathbf{X}_d]| \cr
  &\quad + \E|d^{2\tau} \E[(h(Y_{1}) - h(X_1)) \, \alpha_{\text{ideal}}(\mathbf{X}_d, \mathbf{Y}_d) \mid \mathbf{X}_d] - Gh(X_1)|,
\end{align*}
using the triangle inequality, where $Y_{1}$ is the first coordinate of $\mathbf{Y}_{d} \sim Q_{w, \sigma}(\mathbf{X}_d, \cdot \,)$. We saw in the proof of \autoref{theorem:scaling_limit_ideal} that
\[
  \E|d^{2\tau} \E[(h(Y_{1}) - h(X_1)) \, \alpha_{\text{ideal}}(\mathbf{X}_d, \mathbf{Y}_d) \mid \mathbf{X}_d] - Gh(X_1)| \rightarrow 0.
\]

Using the triangle inequality,
 \begin{align*}
  &\E|d^{2\tau} \E[(h(Y_{J, 1}) - h(X_1)) \, \alpha(\mathbf{X}_d, \mathbf{Y}_J) \mid \mathbf{X}_d] - d^{2\tau} \E[(h(Y_{1}) - h(X_1)) \, \alpha_{\text{ideal}}(\mathbf{X}_d, \mathbf{Y}_d) \mid \mathbf{X}_d]| \cr
  &\quad \leq \E|d^{2\tau} \E[(h(Y_{J, 1}) - h(X_1)) \, \alpha(\mathbf{X}_d, \mathbf{Y}_J) \mid \mathbf{X}_d] \cr
   &\qquad\qquad- d^{2\tau} \E[(h(Y_{J, 1}) - h(X_1)) \, \alpha_{\text{ideal}}(\mathbf{X}_d, \mathbf{Y}_J) \mid \mathbf{X}_d]| \cr
  &\qquad + \E|d^{2\tau}\E[(h(Y_{J, 1}) - h(X_1)) \, \alpha_{\text{ideal}}(\mathbf{X}_d, \mathbf{Y}_J) \mid \mathbf{X}_d] \cr
  &\qquad\qquad-  d^{2\tau} \E[(h(Y_1) - h(X_1)) \, \alpha_{\text{ideal}}(\mathbf{X}_d, \mathbf{Y}_d) \mid \mathbf{X}_d]|.
 \end{align*}
 We now prove that each of the two expectations on the RHS converges to 0 if
    \begin{align*}
  \E\left[d^{2\tau}\left|\frac{w(\mathbf{X}_d, \mathbf{Y}_1)}{\frac{1}{N_d}\sum_{i = 1}^{N_d} w(\mathbf{X}_d, \mathbf{Y}_i)} -  \frac{w(\mathbf{X}_d, \mathbf{Y}_1)}{\E[w(\mathbf{X}_d, \mathbf{Y}_1) \mid \mathbf{X}_d]}\right|\right] \leq \frac{d^{2\tau}}{N_d^{1/2}} \, \varrho_1(d) \rightarrow 0,
 \end{align*}
 and
 \begin{align*}
  \E[d^{2\tau}|\alpha(\mathbf{X}_d, \mathbf{Y}_J) - \alpha_{\text{ideal}}(\mathbf{X}_d, \mathbf{Y}_J)|] \leq \frac{d^{2\tau}}{N_d^{1/2}} \, \varrho_2(d) \rightarrow 0.
 \end{align*}

 We start with the second one. Given any realisation $\mathbf{x}_d$, we have an explicit expression for the conditional expressions and use them; we thus use the notation $\E_{\mathbf{x}_d}$. Using \autoref{prop:dist_MTM},
 \begin{align*}
  &|\E_{\mathbf{x}_d}[(h(Y_{J, 1}) - h(x_1)) \, \alpha_{\text{ideal}}(\mathbf{x}_d, \mathbf{Y}_J)] -   \E_{\mathbf{x}_d}[(h(Y_1) - h(x_1)) \, \alpha_{\text{ideal}}(\mathbf{x}_d, \mathbf{Y}_d)]| \cr
  & \leq \int |h(y_{1,1}) - h(x_1)| \, \alpha_{\text{ideal}}(\mathbf{x}_d, \mathbf{y}_1) \left|\frac{w(\mathbf{x}_d, \mathbf{y}_1)}{\frac{1}{N_d}\sum_{i = 1}^{N_d} w(\mathbf{x}_d, \mathbf{y}_i)} -  \frac{w(\mathbf{x}_d, \mathbf{y}_1)}{\int w(\mathbf{x}_d, \mathbf{y}_1) \, q_{\sigma_d}(\mathbf{x}_d, \mathbf{y}_1) \, d\mathbf{y}_1}\right| \cr
   &\qquad \times \prod_{i=1}^{N_d} q_{\sigma_d}(\mathbf{x}_d, \mathbf{y}_i) \, d\mathbf{y}_{1:N_d} \cr
  &\leq 2M \E_{\mathbf{x}_d}\left|\frac{w(\mathbf{x}_d, \mathbf{Y}_1)}{\frac{1}{N_d}\sum_{i = 1}^{N_d} w(\mathbf{x}_d, \mathbf{Y}_i)} -  \frac{w(\mathbf{x}_d, \mathbf{Y}_1)}{\E_{\mathbf{x}_d}[w(\mathbf{x}_d, \mathbf{Y}_1)]}\right|,
 \end{align*}
 using Jensen's inequality and the triangle inequality, along with the fact that there exists a positive constant $M$ such that $|h|\leq M$, where $y_{1,1}$ is the first coordinate of $\by_1$.

 Regarding the first one,
 \begin{align*}
  &\E|d^{2\tau} \E[(h(Y_{J, 1}) - h(X_1)) \, \alpha(\mathbf{X}_d, \mathbf{Y}_J) \mid \mathbf{X}_d] - d^{2\tau} \E[(h(Y_{J, 1}) - h(X_1)) \, \alpha_{\text{ideal}}(\mathbf{X}_d, \mathbf{Y}_J) \mid \mathbf{X}_d]| \cr
  &\quad \leq 2M \E[d^{2\tau}|\alpha(\mathbf{X}_d, \mathbf{Y}_J) - \alpha_{\text{ideal}}(\mathbf{X}_d, \mathbf{Y}_J)|],
 \end{align*}
 using Jensen's inequality and the triangle inequality, along with the fact that $|h|\leq M$.

 Therefore, if
   \begin{align*}
  \E\left[d^{2\tau}\left|\frac{w(\mathbf{X}_d, \mathbf{Y}_1)}{\frac{1}{N_d}\sum_{i = 1}^{N_d} w(\mathbf{X}_d, \mathbf{Y}_i)} -  \frac{w(\mathbf{X}_d, \mathbf{Y}_1)}{\E[w(\mathbf{X}_d, \mathbf{Y}_1) \mid \mathbf{X}_d]}\right|\right] \leq \frac{d^{2\tau}}{N_d^{1/2}} \, \varrho_1(d) \rightarrow 0,
 \end{align*}
 and
 \begin{align*}
  \E[d^{2\tau}|\alpha(\mathbf{X}_d, \mathbf{Y}_J) - \alpha_{\text{ideal}}(\mathbf{X}_d, \mathbf{Y}_J)|] \leq \frac{d^{2\tau}}{N_d^{1/2}} \, \varrho_2(d) \rightarrow 0,
 \end{align*}
 then $\{Z_{d, \text{MTM}}(t): t \geq 0\}$ converges weakly towards $\{Z(t): t \geq 0\}$.

 We now prove that each of these two expectations converges to 0. We first prove that
    \begin{align}\label{eqn:proof_thm3_1}
  \E\left[d^{2\tau}\left|\frac{w(\mathbf{X}_d, \mathbf{Y}_1)}{\frac{1}{N_d}\sum_{i = 1}^{N_d} w(\mathbf{X}_d, \mathbf{Y}_i)} -  \frac{w(\mathbf{X}_d, \mathbf{Y}_1)}{\E[w(\mathbf{X}_d, \mathbf{Y}_1) \mid \mathbf{X}_d]}\right|\right] \leq \frac{d^{2\tau}}{N_d^{1/2}} \, \varrho_1(d),
 \end{align}
 and next we prove that
  \begin{align}\label{eqn:proof_thm3_2}
  \E[d^{2\tau}|\alpha(\mathbf{X}_d, \mathbf{Y}_J) - \alpha_{\text{ideal}}(\mathbf{X}_d, \mathbf{Y}_J)|] \leq \frac{d^{2\tau}}{N_d^{1/2}} \, \varrho_2(d).
 \end{align}

 We have that
 \begin{align*}
 &\E\left[d^{2\tau}\left|\frac{w(\mathbf{X}_d, \mathbf{Y}_1)}{\sum_{i = 1}^{N_d} w(\mathbf{X}_d, \mathbf{Y}_i)} -  \frac{w(\mathbf{X}_d, \mathbf{Y}_1)}{\E[w(\mathbf{X}_d, \mathbf{Y}_1) \mid \mathbf{X}_d]}\right|\right] \cr
 &\quad = \E\left[\E\left[d^{2\tau}\left|\frac{w(\mathbf{X}_d, \mathbf{Y}_1)}{\sum_{i = 1}^{N_d} w(\mathbf{X}_d, \mathbf{Y}_i)} -  \frac{w(\mathbf{X}_d, \mathbf{Y}_1)}{\E[w(\mathbf{X}_d, \mathbf{Y}_1) \mid \mathbf{X}_d]}\right|\mid \mathbf{X}_d\right] \right].
 \end{align*}
For any realisation $\mathbf{x}_d$, we have an explicit expression for the conditional expectation and therefore write it as follows:
 \begin{align*}
  &\E_{\mathbf{x}_d}\left[d^{2\tau}\left|\frac{w(\mathbf{x}_d, \mathbf{Y}_1)}{\sum_{i = 1}^{N_d} w(\mathbf{x}_d, \mathbf{Y}_i)} -  \frac{w(\mathbf{x}_d, \mathbf{Y}_1)}{\E_{\mathbf{x}_d}[w(\mathbf{x}_d, \mathbf{Y}_1)]}\right|\right] \cr
  &\quad = \E_{\mathbf{x}_d}\left[d^{2\tau}\left|\frac{w(\mathbf{x}_d, \mathbf{x}_d + \sigma_d\mathbf{U}_1)}{\sum_{i = 1}^{N_d} w(\mathbf{x}_d, \mathbf{x}_d + \sigma_d\mathbf{U}_i)} -  \frac{w(\mathbf{x}_d, \mathbf{x}_d + \sigma_d\mathbf{U}_1)}{\E_{\mathbf{x}_d}[w(\mathbf{x}_d, \mathbf{x}_d + \sigma_d\mathbf{U}_1)]}\right|\right],
 \end{align*}
 using that $\mathbf{Y}_i \sim q_{\sigma_d}(\mathbf{x}_d, \cdot \,)$ is equal in distribution to $\mathbf{x}_d + \sigma_d\mathbf{U}_i$ with $\mathbf{U}_i := (U_{i,1}, \ldots, U_{i,d})$, $U_{i,1}, \ldots, U_{i,d}$ being $d$ (conditionally) independent standard normal random variables. We prove the result for the case $w(\mathbf{x}_d, \mathbf{y}_d) = \pi(\mathbf{y}_d) / \pi(\mathbf{x}_d)$; the case $w(\mathbf{x}_d, \mathbf{y}_d) = \sqrt{\pi(\mathbf{y}_d) / \pi(\mathbf{x}_d)}$ is proved similarly.

  Using the definition of the GB weight function and the Cauchy--Schwarz inequality,
 \begin{align*}
  &\E_{\mathbf{x}_d}\left[d^{2\tau}\left|\frac{w(\mathbf{x}_d, \mathbf{x}_d + \sigma_d\mathbf{U}_1)}{\sum_{i = 1}^{N_d} w(\mathbf{x}_d, \mathbf{x}_d + \sigma_d\mathbf{U}_i)} -  \frac{w(\mathbf{x}_d, \mathbf{x}_d + \sigma_d\mathbf{U}_1)}{\E_{\mathbf{x}_d}[w(\mathbf{x}_d, \mathbf{x}_d + \sigma_d\mathbf{U}_1)]}\right|\right]  \cr
 &\quad =  \E_{\mathbf{x}_d}\left[d^{2\tau} \left|\frac{\pi_d(\mathbf{x}_d + \sigma_d\mathbf{U}_1)\left(\frac{1}{N_d} \sum_{i = 1}^{N_d} \pi_d(\mathbf{x}_d + \sigma_d\mathbf{U}_i) - \E_{\mathbf{x}_d}[\pi_d(\mathbf{x}_d + \sigma_d\mathbf{U}_1)]\right)}{\frac{1}{N_d} \sum_{i = 1}^{N_d} \pi_d(\mathbf{x}_d + \sigma_d\mathbf{U}_i) \E_{\mathbf{x}_d}[\pi_d(\mathbf{x}_d + \sigma_d\mathbf{U}_1)]}\right|\right] \cr
 &\quad \leq d^{2\tau} \E_{\mathbf{x}_d}\left[\left(\frac{\pi_d(\mathbf{x}_d + \sigma_d\mathbf{U}_1)}{\frac{1}{N_d} \sum_{i = 1}^{N_d} \pi_d(\mathbf{x}_d + \sigma_d\mathbf{U}_i)}\right)^2\right]^{1/2}\cr
 &\qquad \times \E_{\mathbf{x}_d}\left[\left(\frac{1}{N_d} \sum_{i = 1}^{N_d} \frac{\pi_d(\mathbf{x}_d + \sigma_d\mathbf{U}_i)}{\E_{\mathbf{x}_d}[\pi_d(\mathbf{x}_d + \sigma_d\mathbf{U}_1)]} - 1\right)^2\right]^{1/2}.
 \end{align*}

 We analyse these two terms separately. First,
 \begin{align*}
  &\E_{\mathbf{x}_d}\left[\left(\frac{\pi_d(\mathbf{x}_d + \sigma_d\mathbf{U}_1)}{\frac{1}{N_d} \sum_{i = 1}^{N_d} \pi_d(\mathbf{x}_d + \sigma_d\mathbf{U}_i)}\right)^2\right]^{1/2} \cr
   &\quad = \E_{\mathbf{x}_d}\left[\left(\frac{\exp\left(-\frac{1}{2}\sum_{j=1}^d (x_j + \sigma_d U_{1j})^2\right)}{\frac{1}{N_d} \sum_{i = 1}^{N_d} \exp\left(-\frac{1}{2}\sum_{j=1}^d (x_j + \sigma_d U_{ij})^2\right)}\right)^2\right]^{1/2} \cr
   &\quad \leq \E_{\mathbf{x}_d}\left[\exp\left(-2\sigma_d^2\sum_{j=1}^d (U_{1j} + x_j / \sigma_d)^2\right)\right]^{1/4} \cr
   &\qquad \times \E_{\mathbf{x}_d}\left[\left(\frac{1}{N_d} \sum_{i = 1}^{N_d} \exp\left(-\frac{\sigma_d^2}{2}\sum_{j=1}^d (U_{ij} + x_j / \sigma_d)^2\right)\right)^{-4}\right]^{1/4} \cr
   &\quad\leq \E_{\mathbf{x}_d}\left[\exp\left(-2\sigma_d^2\sum_{j=1}^d (U_{1j} + x_j / \sigma_d)^2\right)\right]^{1/4}\E_{\mathbf{x}_d}\left[\exp\left(2\sigma_d^2\sum_{j=1}^d (U_{1j} + x_j / \sigma_d)^2\right)\right]^{1/4} \cr
   &\quad= \frac{\exp\left(-\frac{\|\bx_d \|^2}{2(1 + 4 \sigma_d^2)}\right)}{(1 + 4 \sigma_d^2)^{d/8}} \frac{\exp\left(\frac{\|\bx_d \|^2}{2(1 - 4 \sigma_d^2)}\right)}{(1 - 4 \sigma_d^2)^{d/8}} = \frac{\exp\left(\frac{4 \sigma_d^2 \|\bx_d \|^2}{1 - 16 \sigma_d^4}\right)}{(1 + 4 \sigma_d^2)^{d/8} (1 - 4 \sigma_d^2)^{d/8}},
 \end{align*}
 using the Cauchy--Schwarz inequality, \autoref{prop:bound_convex}, and the explicit expression of the moment generating function of a non-central chi-squared distribution. Note that
 \[
 \E_{\mathbf{x}_d}\left[\exp\left(2\sigma_d^2\sum_{j=1}^d (U_{1j} + x_j / \sigma_d)^2\right)\right]
 \]
  exists for large enough $d$; it more precisely exists when $4\sigma_d^2 < 1$.

 We now turn to the other term:
    \begin{align*}
     \E_{\mathbf{x}_d}\left[\left(\frac{1}{N_d} \sum_{i = 1}^{N_d} \frac{ \pi_d(\mathbf{x}_d + \sigma_d\mathbf{U}_i)}{\E_{\mathbf{x}_d}[ \pi_d(\mathbf{x}_d + \sigma_d\mathbf{U}_1)]} - 1\right)^2\right]^{1/2} &= \frac{1}{N_d^{1/2}} \var_{\mathbf{x}_d}\left[\frac{ \pi_d(\mathbf{x}_d + \sigma_d\mathbf{U}_1)}{\E_{\mathbf{x}_d}[ \pi_d(\mathbf{x}_d + \sigma_d\mathbf{U}_1)]}\right]^{1/2} \cr
     &\hspace{-10mm}\leq\frac{1}{N_d^{1/2}}\frac{\E_{\mathbf{x}_d}\left[\exp\left(-\sum_{j=1}^d (x_j + \sigma_d U_{1j})^2\right)\right]^{1/2}}{\E_{\mathbf{x}_d}\left[\exp\left(-\frac{1}{2}\sum_{j=1}^d (x_j + \sigma_d U_{1j})^2\right)\right]} \cr
     &\hspace{-10mm}= \frac{1}{N_d^{1/2}}\frac{(1 + \sigma_d^2)^{d/2}}{(1 + 2\sigma_d^2)^{d/4}}\exp\left(\frac{\sigma_d^2 \|\bx_d\|^2 }{2(1 + 2\sigma_d^2)(1 + \sigma_d^2)}\right).
    \end{align*}

    Putting all that together yields
    \begin{align*}
     &\E\left[d^{2\tau}\left|\frac{w(\mathbf{X}_d, \mathbf{Y}_1)}{\sum_{i = 1}^{N_d} w(\mathbf{X}_d, \mathbf{Y}_i)} -  \frac{w(\mathbf{X}_d, \mathbf{Y}_1)}{\E[w(\mathbf{X}_d, \mathbf{Y}_1) \mid \mathbf{X}_d]}\right|\right] \cr
     &\quad\leq \frac{d^{2\tau}}{N_d^{1/2}} \frac{(1 + \sigma_d^2)^{d/2}}{(1 + 2\sigma_d^2)^{d/4}(1 - 16  \sigma_d^4)^{d/8}} \, \E\left[\exp\left(\frac{4 \sigma_d^2 \|\bX_d\|^2}{(1 - 16 \sigma_d^4)}\right) \exp\left(\frac{\sigma_d^2 \|\bX\|^2}{2(1 + 2\sigma_d^2)(1 + \sigma_d^2)}\right)\right] \cr
     &\quad= \frac{d^{2\tau}}{N_d^{1/2}} \frac{(1 + \sigma_d^2)^{d/2}}{(1 + 2\sigma_d^2)^{d/4}(1 - 16  \sigma_d^4)^{d/8}} \, \E\left[\exp\left(\frac{\sigma_d^2 (9 + 24 \sigma_d^2) \|\bX\|^2}{2(1 - 16 \sigma_d^4)(1 + 2\sigma_d^2)(1 + \sigma_d^2)}\right)\right] \cr
       &\quad= \frac{d^{2\tau}}{N_d^{1/2}} \frac{(1 + \sigma_d^2)^{d/2}}{(1 + 2\sigma_d^2)^{d/4}(1 - 16  \sigma_d^4)^{d/8}}\left(1 - \frac{\sigma_d^2 (9 + 24 \sigma_d^2)}{(1 - 16 \sigma_d^4)(1 + 2\sigma_d^2)(1 + \sigma_d^2)}\right)^{-d/2},
    \end{align*}
    using the explicit expression of the moment generating function of a chi-squared distribution. Note that the expectation in the penultimate line exists for large enough $d$.

    When $\tau \geq 1/2$,
    \[
      \varrho_1(d) := \frac{(1 + \sigma_d^2)^{d/2}}{(1 + 2\sigma_d^2)^{d/4}(1 - 16  \sigma_d^4)^{d/8}}\left(1 - \frac{\sigma_d^2 (9 + 24 \sigma_d^2)}{(1 - 16 \sigma_d^4)(1 + 2\sigma_d^2)(1 + \sigma_d^2)}\right)^{-d/2}
    \]
    converges to a constant. So the expectation converges to 0 if
    \[
        N_d = d^{4\tau(1 + \rho)},
    \]
    with any $\rho > 0$.

    When $w(\mathbf{x}_d, \mathbf{y}_d) = \sqrt{\pi_d(\mathbf{y}_d) / \pi_d(\mathbf{x}_d)}$, the terms are different, but the speed is the same. Therefore having $\tau = 1/2$ with $N_d = d^{4\tau(1 + \rho)}$ also makes the expectation vanish, but we can also use $\tau = 1/6$ with $N_d = (1 + \nu)^d$ to make the expectation vanish, $\nu$ being any positive constant.

    There remains to prove the bound in \eqref{eqn:proof_thm3_2}, that is
    \begin{align*}
  \E[d^{2\tau}|\alpha(\mathbf{X}_d, \mathbf{Y}_J) - \alpha_{\text{ideal}}(\mathbf{X}_d, \mathbf{Y}_J)|] \leq \frac{d^{2\tau}}{N_d^{1/2}} \, \varrho_2(d).
 \end{align*}
  We first use that the function $1 \wedge x$ is 1-Lipschitz continuous:
 \[
  \E[d^{2\tau}|\alpha(\mathbf{X}_d, \mathbf{Y}_J) - \alpha_{\text{ideal}}(\mathbf{X}_d, \mathbf{Y}_J)|] \leq \E[d^{2\tau}|r(\mathbf{X}_d, \mathbf{Y}_J) - r_{\text{ideal}}(\mathbf{X}_d, \mathbf{Y}_J)|],
 \]
 where
  \[
  r(\mathbf{X}_d, \mathbf{Y}_J) := \frac{\pi_d(\mathbf{Y}_J) \, q_{\sigma_d}(\mathbf{Y}_J, \mathbf{X}_d) \, w(\mathbf{Y}_J, \mathbf{X}_d) \bigg/ \left(\sum_{i = 1}^{N_d - 1} w(\mathbf{Y}_J, \mathbf{Z}_i) + w(\mathbf{Y}_J, \mathbf{X}_d)\right)}{\pi_d(\mathbf{X}_d) \, q_{\sigma_d}(\mathbf{X}_d, \mathbf{Y}_J) \, w(\mathbf{X}_d, \mathbf{Y}_J) \bigg/ \left(\sum_{i = 1}^{N_d} w(\mathbf{X}_d, \mathbf{Y}_i)\right)},
 \]
 and
 \[
  r_{\text{ideal}}(\mathbf{X}_d, \mathbf{Y}_J) := \frac{\pi_d(\mathbf{Y}_J) \, Q_{w, \sigma_d}(\mathbf{Y}_J, \mathbf{X}_d)}{\pi_d(\mathbf{X}_d) \, Q_{w, \sigma_d}(\mathbf{X}_d, \mathbf{Y}_J)}.
 \]

 Recall that
 \[
  Q_{w, \sigma_d}(\mathbf{x}_d, \mathbf{y}_d) := \frac{w(\mathbf{x}_d, \mathbf{y}_d) \, q_{\sigma_d}(\mathbf{x}_d, \mathbf{y}_d)}{\int w(\mathbf{x}_d, \mathbf{y}_d) \, q_{\sigma_d}(\mathbf{x}_d, \mathbf{y}_d) \, \d\mathbf{y}_d}.
 \]
  Using \autoref{prop:dist_MTM}, we can write $\E[d^{2\tau}|r(\mathbf{X}_d, \mathbf{Y}_J) - r_{\text{ideal}}(\mathbf{X}_d, \mathbf{Y}_J)|]$ as $\E[d^{2\tau}|r(\mathbf{X}_d, \mathbf{Y}_1) - r_{\text{ideal}}(\mathbf{X}_d, \mathbf{Y}_1)|]$, where the latter expectation is computed with respect to the following PDF:
 \[
  \pi_d(\mathbf{x}_d) \frac{w(\mathbf{x}_d, \mathbf{y}_1)}{\frac{1}{N_d}\sum_{i = 1}^{N_d} w(\mathbf{x}_d, \mathbf{y}_i)} \prod_{i=1}^{N_d} q_{\sigma_d}(\mathbf{x}_d, \mathbf{y}_i) \prod_{i=1}^{N_d - 1} q_{\sigma_d}(\mathbf{y}_1, \mathbf{z}_i).
 \]
  This means that the expectation $\E[d^{2\tau}|r(\mathbf{X}_d, \mathbf{Y}_1) - r_{\text{ideal}}(\mathbf{X}_d, \mathbf{Y}_1)|]$ can be written as
 \[
  \tilde{\E}\left[d^{2\tau}\left|\frac{w(\mathbf{Y}_1, \mathbf{X}_d)}{\frac{1}{N_d} \sum_{i = 1}^{N_d - 1} w(\mathbf{Y}_1, \mathbf{Z}_i) + w(\mathbf{Y}_1, \mathbf{X}_d)} - \frac{w(\mathbf{Y}_1, \mathbf{X}_d)}{\E_{\mathbf{Y}_1}[w(\mathbf{Y}_1, \mathbf{X}_d)]}\frac{\E_{\mathbf{X}_d}[w(\mathbf{X}_d, \mathbf{Y}_1)]}{\frac{1}{N_d} \sum_{i = 1}^{N_d} w(\mathbf{X}_d, \mathbf{Y}_i)}\right|\right],
 \]
 with respect to a PDF given by
 \[
  \pi_d(\mathbf{y}_1) \prod_{i=2}^{N_d} q_{\sigma_d}(\mathbf{x}_d, \mathbf{y}_i) \prod_{i=1}^{N_d - 1} q_{\sigma_d}(\mathbf{y}_1, \mathbf{z}_i) \, q_{\sigma_d}(\mathbf{y}_1, \mathbf{x}_d),
 \]
 where $\E_{\mathbf{Y}_1}[w(\mathbf{Y}_1, \mathbf{X}_d)]$ is a fonction of the random variable $\mathbf{Y}_1$ for which any realisation $\mathbf{y}_1$ is mapped to
 \[
  \int w(\mathbf{y}_1, \mathbf{x}_d) \, q_{\sigma_d}(\mathbf{y}_1, \mathbf{x}_d) \, \d\mathbf{x}_d;
 \]
 $\E_{\mathbf{X}_d}[w(\mathbf{X}_d, \mathbf{Y}_1)]$ is defined analogously. We noted a change of PDF in the expectation by using the notation `` $\tilde{\E}$ ''.

 Using the triangle inequality,
 \begin{align*}
 & \tilde{\E}\left[d^{2\tau}\left|\frac{w(\mathbf{Y}_1, \mathbf{X}_d)}{\frac{1}{N_d} \sum_{i = 1}^{N_d - 1} w(\mathbf{Y}_1, \mathbf{Z}_i) + w(\mathbf{Y}_1, \mathbf{X}_d)} - \frac{w(\mathbf{Y}_1, \mathbf{X}_d)}{\E_{\mathbf{Y}_1}[w(\mathbf{Y}_1, \mathbf{X}_d)]}\frac{\E_{\mathbf{X}_d}[w(\mathbf{X}_d, \mathbf{Y}_1)]}{\frac{1}{N_d} \sum_{i = 1}^{N_d} w(\mathbf{X}_d, \mathbf{Y}_i)}\right|\right] \cr
 &\quad\leq \tilde{\E}\left[d^{2\tau}\left|\frac{w(\mathbf{Y}_1, \mathbf{X}_d)}{\frac{1}{N_d} \sum_{i = 1}^{N_d - 1} w(\mathbf{Y}_1, \mathbf{Z}_i) + w(\mathbf{Y}_1, \mathbf{X}_d)} - \frac{w(\mathbf{Y}_1, \mathbf{X}_d)}{\E_{\mathbf{Y}_1}[w(\mathbf{Y}_1, \mathbf{X}_d)]}\right|\right] \cr
 &\qquad + \tilde{\E}\left[d^{2\tau}\left|\frac{w(\mathbf{Y}_1, \mathbf{X}_d)}{\E_{\mathbf{Y}_1}[w(\mathbf{Y}_1, \mathbf{X}_d)]} - \frac{w(\mathbf{Y}_1, \mathbf{X}_d)}{\E_{\mathbf{Y}_1}[w(\mathbf{Y}_1, \mathbf{X}_d)]}\frac{\E_{\mathbf{X}_d}[w(\mathbf{X}_d, \mathbf{Y}_1)]}{\frac{1}{N_d} \sum_{i = 1}^{N_d} w(\mathbf{X}_d, \mathbf{Y}_i)}\right|\right].
 \end{align*}
 The first expectation on the RHS can be seen as an expectation with respect to the PDF
 \[
  \pi_d(\mathbf{y}_1) \prod_{i=1}^{N_d - 1} q_{\sigma_d}(\mathbf{y}_1, \mathbf{z}_i) \, q_{\sigma_d}(\mathbf{y}_1, \mathbf{x}_d);
  \]
  so this expectation is equal to that in \eqref{eqn:proof_thm3_1} and it thus converges to 0 under the same conditions.

 For the other one, we have that
 \begin{align*}
  &\tilde{\E}\left[d^{2\tau}\left|\frac{w(\mathbf{Y}_1, \mathbf{X}_d)}{\E_{\mathbf{Y}_1}[w(\mathbf{Y}_1, \mathbf{X}_d)]} - \frac{w(\mathbf{Y}_1, \mathbf{X}_d)}{\E_{\mathbf{Y}_1}[w(\mathbf{Y}_1, \mathbf{X}_d)]}\frac{\E_{\mathbf{X}_d}[w(\mathbf{X}_d, \mathbf{Y}_1)]}{\frac{1}{N_d} \sum_{i = 1}^{N_d} w(\mathbf{X}_d, \mathbf{Y}_i)}\right|\right] \cr
  &= \tilde{\E}\left[d^{2\tau}\frac{w(\mathbf{Y}_1, \mathbf{X}_d)}{\E_{\mathbf{Y}_1}[w(\mathbf{Y}_1, \mathbf{X}_d)]}\frac{1}{\frac{1}{N_d} \sum_{i = 1}^{N_d} w(\mathbf{X}_d, \mathbf{Y}_i)}\left|\frac{1}{N_d} \sum_{i = 1}^{N_d} w(\mathbf{X}_d, \mathbf{Y}_i) - \E_{\mathbf{X}_d}[w(\mathbf{X}_d, \mathbf{Y}_1)]\right|\right].
 \end{align*}
 This expectation can be seen as an expectation with respect to the PDF
 \[
  \pi_d(\mathbf{y}_1) \prod_{i=2}^{N_d} q_{\sigma_d}(\mathbf{x}_d, \mathbf{y}_i) \, q_{\sigma_d}(\mathbf{y}_1, \mathbf{x}_d) = \pi(\mathbf{y}_1) \prod_{i=1}^{N_d} q_{\sigma_d}(\mathbf{x}_d, \mathbf{y}_i),
  \]
  using that $q_{\sigma_d}$ is symmetric. We prove the result for the case $w(\mathbf{x}_d, \mathbf{y}_d) = \pi_d(\mathbf{y}_d) / \pi_d(\mathbf{x}_d)$; the case $w(\mathbf{x}_d, \mathbf{y}_d) = \sqrt{\pi_d(\mathbf{y}_d) / \pi_d(\mathbf{x}_d)}$ is proved similarly. For any realisation $\mathbf{y}_1$, we have that
  \begin{align*}
   \frac{w(\mathbf{y}_1, \mathbf{X}_d)}{\E_{\mathbf{y}_1}[w(\mathbf{y}_1, \mathbf{X}_d)]} &= \frac{\pi_d(\mathbf{X}_d)}{\frac{1}{(2\pi)^{d/2}}\E_{\mathbf{y}_1}\left[\exp\left(-\frac{\sigma_d^2}{2}\sum_{j=1}^d(U_{1j} + y_{1j}/\sigma_d)^2\right)\right]} \cr
   &= \frac{\pi_d(\mathbf{X}_d)}{(2\pi)^{-d/2} (1 + \sigma_d^2)^{-d/2}\exp\left(-\frac{1}{2(1 + \sigma_d^2)} \sum_{j=1}^d y_{1j}^2\right)},
  \end{align*}
  using that $q_{\sigma_d}(\bx_d, \cdot\,)$ is a normal distribution, where, as previously, $U_{11}, \ldots, U_{1N_d}$ are (conditionally) independent standard normal random variables. We can thus rewrite the expectation
  \[
   \tilde{\E}\left[d^{2\tau}\frac{w(\mathbf{Y}_1, \mathbf{X}_d)}{\E_{\mathbf{Y}_1}[w(\mathbf{Y}_1, \mathbf{X}_d)]}\frac{1}{\frac{1}{N_d} \sum_{i = 1}^{N_d} w(\mathbf{X}_d, \mathbf{Y}_i)}\left|\frac{1}{N_d} \sum_{i = 1}^{N_d} w(\mathbf{X}_d, \mathbf{Y}_i) - \E_{\mathbf{X}_d}[w(\mathbf{X}_d, \mathbf{Y}_1)]\right|\right]
  \]
  as
  \begin{align*}
   \stackrel{\approx}{\E}\left[d^{2\tau} (1 + \sigma_d^2)^{d/2}\frac{\exp\left(-\frac{\sigma_d^2}{2(1 + \sigma_d^2)} \, \|\bY_1\|^2\right)}{\frac{1}{N_d} \sum_{i = 1}^{N_d} w(\mathbf{X}_d, \mathbf{Y}_i)}\left|\frac{1}{N_d} \sum_{i = 1}^{N_d} w(\mathbf{X}_d, \mathbf{Y}_i) - \E_{\mathbf{X}_d}[w(\mathbf{X}_d, \mathbf{Y}_1)]\right|\right],
  \end{align*}
  where the expectation is with respect to the PDF $\pi_d(\mathbf{x}_d) \prod_{i=1}^{N_d} q_{\sigma_d}(\mathbf{x}_d, \mathbf{y}_i)$. Using the definition of the GB weight function,
\begin{align*}
 &\stackrel{\approx}{\E}\left[d^{2\tau} (1 + \sigma_d^2)^{d/2}\frac{\exp\left(-\frac{\sigma_d^2}{2(1 + \sigma_d^2)} \sum_{j=1}^d Y_{1j}^2\right)}{\frac{1}{N_d} \sum_{i = 1}^{N_d} w(\mathbf{X}_d, \mathbf{Y}_i)}\left|\frac{1}{N_d} \sum_{i = 1}^{N_d} w(\mathbf{X}_d, \mathbf{Y}_i) - \E_{\mathbf{X}_d}[w(\mathbf{X}_d, \mathbf{Y}_1)]\right|\right] \cr
 &\quad = \, \stackrel{\approx}{\E}\left[d^{2\tau} (1 + \sigma_d^2)^{d/2}\frac{\exp\left(-\frac{\sigma_d^2}{2(1 + \sigma_d^2)} \sum_{j=1}^d Y_{1j}^2\right)}{\frac{1}{N_d} \sum_{i = 1}^{N_d} \exp\left(-\frac{1}{2}\sum_{j=1}^d Y_{ij}^2\right)} \right. \cr
 &\qquad \times \left.\left|\frac{1}{N_d} \sum_{i = 1}^{N_d}\exp\left(-\frac{1}{2}\sum_{j=1}^d Y_{ij}^2\right)- \E_{\mathbf{X}_d}\left[\exp\left(-\frac{1}{2}\sum_{j=1}^d Y_{1j}^2\right)\right]\right|\right] .
\end{align*}
We omit the $\approx$ above $\stackrel{\approx}{\E}$ for the rest of the proof to simplify the notation and note that for the rest of the proof the random variables are such that $\mathbf{X}_d \sim \pi_d$ and $\mathbf{Y}_1, \ldots, \mathbf{Y}_{N_d}$ are conditionally independent given $\mathbf{X}_d$ with a distribution given by $\mathbf{Y}_i \sim \mathcal{N}(\mathbf{X}_d, \sigma_d^2 \I_d)$.

We have that
\begin{align*}
&\E\left[\frac{\exp\left(-\frac{\sigma_d^2}{2(1 + \sigma_d^2)} \sum_{j=1}^d Y_{1j}^2\right)}{\frac{1}{N_d} \sum_{i = 1}^{N_d} \exp\left(-\frac{1}{2}\sum_{j=1}^d Y_{ij}^2\right)} \right. \cr
 &\qquad \times \left.\left|\frac{1}{N_d} \sum_{i = 1}^{N_d}\exp\left(-\frac{1}{2}\sum_{j=1}^d Y_{ij}^2\right)- \E_{\mathbf{X}_d}\left[\exp\left(-\frac{1}{2}\sum_{j=1}^d Y_{1j}^2\right)\right]\right|\right] \cr
 &=\E\left[\E\left[\frac{\exp\left(-\frac{\sigma_d^2}{2(1 + \sigma_d^2)} \sum_{j=1}^d Y_{1j}^2\right)}{\frac{1}{N_d} \sum_{i = 1}^{N_d} \exp\left(-\frac{1}{2}\sum_{j=1}^d Y_{ij}^2\right)} \right.\right. \cr
 &\qquad \times \left. \left.\left|\frac{1}{N_d} \sum_{i = 1}^{N_d}\exp\left(-\frac{1}{2}\sum_{j=1}^d Y_{ij}^2\right)- \E_{\mathbf{X}_d}\left[\exp\left(-\frac{1}{2}\sum_{j=1}^d Y_{1j}^2\right)\right]\right|\mid \mathbf{X}_d\right] \right].
\end{align*}

Given any realisation $\mathbf{x}_d$, we have an explicit expression for the conditional expectation and therefore write it as follows:
\small
 \begin{align*}
  &\E_{\mathbf{x}_d}\left[\frac{\exp\left(-\frac{\sigma_d^2}{2(1 + \sigma_d^2)} \sum_{j=1}^d (x_j + \sigma_d U_{1j})^2\right)}{\frac{1}{N_d} \sum_{i = 1}^{N_d} \exp\left(-\frac{1}{2}\sum_{j=1}^d (x_j + \sigma_d U_{ij})^2\right)} \right. \cr
 &\quad \times \left. \left|\frac{1}{N_d} \sum_{i = 1}^{N_d}\exp\left(-\frac{1}{2}\sum_{j=1}^d (x_j + \sigma_d U_{ij})^2\right)- \E_{\mathbf{x}_d}\left[\exp\left(-\frac{1}{2}\sum_{j=1}^d (x_j + \sigma_d U_{1j})^2\right)\right]\right| \right] \cr
 &\leq \E_{\mathbf{x}_d}\left[\exp\left(-\frac{\sigma_d^2}{1 + \sigma_d^2} \sum_{j=1}^d (x_j + \sigma_d U_{1j})^2\right)\left(\frac{1}{N_d} \sum_{i = 1}^{N_d} \exp\left(-\frac{1}{2}\sum_{j=1}^d (x_j + \sigma_d U_{ij})^2\right)\right)^{-2}\right]^{1/2} \cr
 &\quad\times \E_{\mathbf{x}_d}\left[\left(\frac{1}{N_d} \sum_{i = 1}^{N_d}\exp\left(-\frac{1}{2}\sum_{j=1}^d (x_j + \sigma_d U_{ij})^2\right)- \E_{\mathbf{x}_d}\left[\exp\left(-\frac{1}{2}\sum_{j=1}^d (x_j + \sigma_d U_{1j})^2\right)\right]\right)^{2}\right]^{1/2}
 \end{align*}
 \normalsize
 using that $\mathbf{Y}_i \sim q_{\sigma_d}(\mathbf{x}_d, \cdot \,)$ is equal in distribution to $\mathbf{x}_d + \sigma_d\mathbf{U}_i$ and the Cauchy–Schwarz inequality.

 We analyse the two terms on the RHS separately. First,
 \begin{align*}
     &\E_{\mathbf{x}_d}\left[\exp\left(-\frac{\sigma_d^2}{1 + \sigma_d^2} \sum_{j=1}^d (x_j + \sigma_d U_{1j})^2\right)\left(\frac{1}{N_d} \sum_{i = 1}^{N_d} \exp\left(-\frac{1}{2}\sum_{j=1}^d (x_j + \sigma_d U_{ij})^2\right)\right)^{-2}\right]^{1/2} \cr
     & \quad \leq \E_{\mathbf{x}_d}\left[\exp\left(-\frac{4\sigma_d^2}{1 + \sigma_d^2} \sum_{j=1}^d (x_j + \sigma_d U_{1j})^2\right)\right]^{1/4} \cr
     &\qquad \times \E_{\mathbf{x}_d}\left[\left(\frac{1}{N_d} \sum_{i = 1}^{N_d} \exp\left(-\frac{1}{2}\sum_{j=1}^d (x_j + \sigma_d U_{ij})^2\right)\right)^{-4}\right]^{1/4} \cr
      & \quad \leq \E_{\mathbf{x}_d}\left[\exp\left(-\frac{4\sigma_d^2}{1 + \sigma_d^2} \sum_{j=1}^d (x_j + \sigma_d U_{1j})^2\right)\right]^{1/4}\E_{\mathbf{x}_d}\left[\exp\left(2\sum_{j=1}^d (x_j + \sigma_d U_{1j})^2\right)\right]^{1/4} \cr
      &\quad = \E_{\mathbf{x}_d}\left[\exp\left(-\frac{4\sigma_d^4}{1 + \sigma_d^2} \sum_{j=1}^d (x_j / \sigma_d +  U_{1j})^2\right)\right]^{1/4}\E_{\mathbf{x}_d}\left[\exp\left(2 \sigma_d^2 \sum_{j=1}^d (x_j / \sigma_d + U_{1j})^2\right)\right]^{1/4} \cr
      &\quad = \frac{\exp\left(-\frac{\sigma_d^2\|\bx_d\|^2}{1 + \sigma_d^2 + 8 \sigma_d^4}\right)}{\left(1 + \frac{8 \sigma_d^4}{1 + \sigma_d^2}\right)^{d/8}} \frac{\exp\left(\frac{\|\bx_d\|^2}{2(1 - 4 \sigma_d^2)}\right)}{\left(1 - 4\sigma_d^2\right)^{d/8}} \cr
      &\quad \leq \frac{1}{\left(1 + \frac{8 \sigma_d^4}{1 + \sigma_d^2}\right)^{d/8}} \frac{\exp\left(\frac{\|\bx_d\|^2}{2(1 - 4 \sigma_d^2)}\right)}{\left(1 - 4\sigma_d^2\right)^{d/8}}
    \end{align*}
     using the Cauchy--Schwarz inequality, \autoref{prop:bound_convex}, and the expression for the moment generating function of a non-central chi-squared distribution. Note that
     \[
      \E_{\mathbf{x}_d}\left[\exp\left(2 \sigma_d^2 \sum_{j=1}^d (x_j / \sigma_d + U_{1j})^2\right)\right]
      \]
      exists for large enough $d$; it more precisely exists when $4 \sigma_d^2 < 1$.

     Second,
     \begin{align*}
      &\E_{\mathbf{x}_d}\left[\left(\frac{1}{N_d} \sum_{i = 1}^{N_d}\exp\left(-\frac{1}{2}\sum_{j=1}^d (x_j + \sigma_d U_{ij})^2\right)- \E_{\mathbf{x}_d}\left[\exp\left(-\frac{1}{2}\sum_{j=1}^d (x_j + \sigma_d U_{1j})^2\right)\right]\right)^{2}\right]^{1/2} \cr
      &\quad=\frac{1}{N_d^{1/2}}\var_{\mathbf{x}_d}\left[\exp\left(-\frac{1}{2}\sum_{j=1}^d (x_j + \sigma_d U_{1j})^2\right)\right]^{1/2} \cr
      &\quad\leq \frac{1}{N_d^{1/2}}\E_{\mathbf{x}_d}\left[\exp\left(- \sigma_d^2\sum_{j=1}^d (x_j/\sigma_d + U_{1j})^2\right)\right]^{1/2} \cr
      &\quad=\frac{1}{N_d^{1/2}}\frac{\exp\left(\frac{-\|\bx_d\|^2}{2(1 + 2\sigma_d^2)}\right)}{(1 + 2\sigma_d^2)^{d/4}}.
     \end{align*}

          Putting all the results above together yields
     \begin{align*}
      &\E\left[d^{2\tau} (1 + \sigma_d^2)^{d/2}\frac{\exp\left(-\frac{\sigma_d^2}{2(1 + \sigma_d^2)} \sum_{j=1}^d Y_{1j}^2\right)}{\frac{1}{N_d} \sum_{i = 1}^{N_d} \exp\left(-\frac{1}{2}\sum_{j=1}^d Y_{ij}^2\right)} \right. \cr
 &\quad \times \left.\left|\frac{1}{N_d} \sum_{i = 1}^{N_d}\exp\left(-\frac{1}{2}\sum_{j=1}^d Y_{ij}^2\right)- \E_{\mathbf{X}_d}\left[\exp\left(-\frac{1}{2}\sum_{j=1}^d Y_{1j}^2\right)\right]\right|\right] \cr
 &\quad \leq \frac{d^{2\tau}}{N_d^{1/2}}\frac{(1 + \sigma_d^2)^{d/2}}{\left(1 + \frac{8 \sigma_d^4}{1 + \sigma_d^2}\right)^{d/8}\left(1 - 4\sigma_d^2\right)^{d/8}(1 + 2\sigma_d^2)^{d/4}}\E\left[\exp\left(\frac{3\sigma_d^2 \|\bX_d\|^2}{(1 - 4\sigma_d^2)(1 + 2\sigma_d^2)}\right)\right] \cr
 &\quad= \frac{d^{2\tau}}{N_d^{1/2}}\frac{(1 + \sigma_d^2)^{d/2}}{\left(1 + \frac{8 \sigma_d^4}{1 + \sigma_d^2}\right)^{d/8}\left(1 - 4\sigma_d^2\right)^{d/8}(1 + 2\sigma_d^2)^{d/4} \left(1 - \frac{6 \sigma_d^2}{(1 - 4\sigma_d^2)(1 + 2\sigma_d^2)}\right)^{d/2}}
     \end{align*}
     using the explicit expression for the moment generating function of a chi-squared distribution. Note that the expectation in the penultimate line exist for large enough $d$. The last term, seen as a function of $d$, behaves as that in the bound \eqref{eqn:proof_thm3_1}, and thus converges to 0 under the same conditions. Note that
         \begin{align*}
      \varrho_2(d) &:= \frac{(1 + \sigma_d^2)^{d/2}}{(1 + 2\sigma_d^2)^{d/4}(1 - 16  \sigma_d^4)^{d/8}}\left(1 - \frac{\sigma_d^2 (9 + 24 \sigma_d^2)}{(1 - 16 \sigma_d^4)(1 + 2\sigma_d^2)(1 + \sigma_d^2)}\right)^{-d/2} \cr
      &\qquad + \frac{(1 + \sigma_d^2)^{d/2}}{\left(1 + \frac{8 \sigma_d^4}{1 + \sigma_d^2}\right)^{d/8}\left(1 - 4\sigma_d^2\right)^{d/8}(1 + 2\sigma_d^2)^{d/4} \left(1 - \frac{6 \sigma_d^2}{(1 - 4\sigma_d^2)(1 + 2\sigma_d^2)}\right)^{d/2}}.
    \end{align*}
\end{proof}

\end{document}